\def\cN{N}
\def\F{{\mathbb F}}
\def\Z{{\mathbb Z}}
\newcommand{\voa}{vertex operator algebra}
\newcommand{\voas}{vertex operator algebras}
\newcommand{\cft}{conformal field theory}
\newcommand{\cfts}{conformal field theories}
\newcommand{\scft}{superconformal field theory}
\newcommand{\ZZ}{\mathbb{Z}}
\newcommand{\CC}{\mathbb{C}}
\newcommand{\vv}{\mathbf{v}}
\newcommand{\lab}{\langle}
\newcommand{\rab}{\rangle}
\newcommand{\Id}{I}
\newcommand{\tr}{\operatorname{tr}}
\newcommand{\ch}{\operatorname{ch}}
\newcommand{\chh}{\widehat{\operatorname{ch}}}
\newcommand{\wt}{\operatorname{wt}}
\newcommand{\vosa}{vertex operator superalgebra}
\newcommand{\vosas}{vertex operator superalgebras}
\newcommand{\mfg}{\mathfrak{g}}
\newcommand{\SL}{\operatorname{\textsl{SL}}}
\newcommand{\EG}{\operatorname{\it E}}
\newcommand{\scfts}{superconformal field theories}
\newcommand{\NSNS}{\text{NS-NS}}
\newcommand{\RR}{\text{R-R}}
\def\cH{{\mathcal H}}
\def\cD{{\mathcal D}}
\def\cG{{\mathcal G}}
\newcommand{\even}{\mathrm{even}}
\newcommand{\odd}{\mathrm{odd}}
\newcommand{\tw}{\mathrm{tw}}
\newcommand{\xmod}{{\rm \;mod\;}}
\newcommand{\comment}[1]{}
\newcommand{\ldp}{((}
\newcommand{\rdp}{))}
\newcommand{\Co}{\textsl{Co}}
\begin{document}

\title{\textsc{Self-Dual Vertex Operator Superalgebras\\ and Superconformal Field Theory}}

\renewcommand{\thefootnote}{\arabic{footnote}}

\author[1]{Thomas Creutzig\thanks{ 
Email: creutzig@ualberta.ca}}
\author[2]{John F. R. Duncan\thanks{
Email: john.duncan@emory.edu}}
\author[1]{Wolfgang Riedler\thanks{ 
Email: riedler@ualberta.ca}} 

\affil[1]{Department of Mathematical and Statistical Sciences, University of Alberta, Edmonton, Alberta  T6G 2G1, Canada.}
\affil[2]{Department of Mathematics and Computer Science, Emory University, Atlanta, GA 30322, U.S.A.}

\date{}

\setstretch{1.15}

\theoremstyle{plain}
\newtheorem*{introthm}{Theorem}
\newtheorem{thm}{Theorem}[section]
\newtheorem{prop}[thm]{Proposition}
\newtheorem{lem}[thm]{Lemma}
\newtheorem{cor}[thm]{Corollary}
\newtheorem{conj}[thm]{Conjecture}

\theoremstyle{definition}
\newtheorem{defn}[thm]{Definition}
\newtheorem{rem}[thm]{Remark}

\numberwithin{equation}{section}

\renewcommand{\thefootnote}{\fnsymbol{footnote}} 
\footnotetext{\emph{MSC2010:} 17B69, 17B81, 20C34.}     

\maketitle

\begin{abstract}
Recent work has related the equivariant elliptic genera of sigma models with K3 surface target to a vertex operator superalgebra that realizes moonshine for Conway's group. Motivated by this we consider conditions under which a self-dual vertex operator superalgebra may be identified with the bulk Hilbert space of a superconformal field theory. After presenting a classification result for self-dual vertex operator superalgebras with central charge up to 12 we describe several examples of close relationships with bulk superconformal field theories, including those arising from sigma models for tori and K3 surfaces. 
\end{abstract}

\clearpage

\tableofcontents

\section{Introduction}\label{sec:intro}

The 
elliptic genus of a complex K3 surface $X$ is a weak Jacobi form of weight zero and index one. It may be realized in the following three ways: via the chiral de Rham complex of $X$ \cite{B-ZHS,Hel,B,BL}, as the $S^1$-equivariant $\chi_y$-genus of the loop space of $X$ 
\cite{Hi, Ho, Kr},
and as a trace 
on the Ramond-Ramond sector of 
a sigma model on $X$ \cite{EOTY,DY,W-LG}.  
The small $\cN=4$ superconformal algebra at central charge $c=6$ appears in each of these three pictures. Firstly, it is the algebra of 
global sections of the chiral de Rham complex \cite{So1, So2}. Secondly, the $\chi_y$-genus can be viewed as a virtual module for a certain \vosa{} that contains this Lie superalgebra 
\cite{CH, Ta}. Finally, it appears as a supersymmetry algebra of the string theory sigma model \cite{ET}. (We refer to \cite{W-snap} for a recent detailed review of these topics.)

The K3 elliptic genus and the $\cN=4$ superconformal algebra were absorbed into the orbit of moonshine when Eguchi--Ooguri--Tachikawa \cite{EOT} suggested a relationship between the largest Mathieu group, $M_{24}$, and character contributions of the $\cN=4$ superconformal algebra to the K3 elliptic genus. 
This ignited a resurgence of interest in connections between string theory, modular forms and finite groups. Umbral moonshine \cite{UM,MUM,umrec}, Thompson moonshine \cite{HR,GM16} and the recently announced O'Nan moonshine \cite{DMO} all belong to the quickly developing legacy of this {\em Mathieu moonshine} observation, although the connections to string theory are so far more obscure in the latter two cases. We refer to \cite{mnstmlts} for a fuller review, more references, and for comparison to the original monstrous moonshine \cite{MR554399,Tho_FinGpsModFns,Tho_NmrlgyMonsEllModFn} that appeared in the 1970s.

By now there are  indications that Mathieu and monstrous moonshine are interrelated. 
An instance of this, and a primary motivation for the present work is \cite{DM16}, wherein
the K3 elliptic genus apparently makes a fourth appearance: as a trace function on the 
moonshine module \cite{Dun_VACo,DM15} for Conway's group, $\Co_0$ \cite{MR0237634,MR0248216}. 
On the one hand, this Conway moonshine module---a vertex operator superalgebra with $\cN=1$ structure---is a direct supersymmetric analogue of the monstrous moonshine module of Frenkel--Lepowsky--Meurman \cite{FLMPNAS,FLMBerk,FLM}. 
It manifests a genus zero property for 
$\Co_0$
\cite{DM15}, just as the monstrous moonshine module does for the monster \cite{borcherds_monstrous}.
On the other hand, $M_{24}$ is a subgroup of $\Co_0$, and the Conway moonshine construction of the K3 elliptic genus may be twined by (most) elements of $M_{24}$. In many, but not all instances the resulting trace functions coincide with those that arise in Mathieu moonshine.

This tells us that 
the Conway moonshine module comes close to providing a vertex algebraic realization of the as yet elusive 
Mathieu moonshine module, 
whose structure as a representation of $M_{24}$ was conjecturally determined in
\cite{Cheng,Gaberdiel2010,Gaberdiel2010a,Eguchi2010a}, and confirmed 
in \cite{Ga}. 
The Conway moonshine module has been used 
to realize analogues of the Mathieu moonshine module for other sporadic simple groups in \cite{CDDHK,CHKW}.
See \cite{TW12,TW13,TW15,GKV,TW17} for the development of a promising geometric approach to the problem.

The Conway moonshine module is also connected to string theory on K3 surfaces.
As is explained in \cite{DM16}, 
the Conway moonshine 
construction of the K3 elliptic genus may also be twined, in an explicitly computable way, by
any automorphism of a K3 sigma model that preserves its supersymmetry. Such automorphisms are classified in \cite{GHV}. It appears that 
the construction of \cite{DM16} 
is (except for a small number of possible exceptions) in 
agreement \cite{CHVZ} with the twined K3 elliptic genera that one expects \cite{CHVZ} to arise from string theory. 
One reason this is surprising is that K3 sigma models, and in particular their automorphisms, are difficult to construct in general (cf. e.g. \cite{NW}). The Conway moonshine module seems to serve as a shortcut, to certain computations which might otherwise require the explicit construction of sigma models.

In view of these connections 
it is natural to ask how the Conway moonshine realization of the K3 elliptic genus is related to the three we began with above. 
Katz--Klemm--Vafa \cite{KKV} conjectured a method for computing the Gromov--Witten invariants of a K3 surface in terms of the
$\chi_y$-genera of its symmetric powers, and the generating function of these $\chi_y$-genera can be realized in terms of a lift of the K3 elliptic genus. So 
the second mentioned realization of the K3 elliptic genus, as a generalization of Hirzebruch's $\chi_y$-genus, suggests a connection between Conway moonshine and enumerative geometry. This perspective is developed in \cite{CDHK}, where equivariant counterparts to the conjecture of Katz--Klemm--Vafa
are formulated, which explicitly describe equivariant versions of Gromov--Witten invariants of K3 surfaces. The conjecture of Katz--Klemm--Vafa  was proved recently by Pandharipande--Thomas \cite{PT}. Katz--Klemm--Pandharipande \cite{KKP} have extended the the conjecture of Katz--Klemm--Vafa to refined Gopakumar--Vafa invariants. Conjectural descriptions of equivariant refined Gopakumar--Vafa invariants of K3 surfaces are also formulated in \cite{CDHK}.

In this work we develop the relationship between Conway moonshine and the third mentioned realization, in terms of K3 sigma models. We do this by formalizing a new relationship between vertex algebra and conformal field theory, and 
by 
realizing the Conway moonshine module, and other vertex operator superalgebras, in examples.

Traditionally, vertex operator algebras satisfying suitable conditions are considered to define 
``chiral halves'' of conformal field theories. More specifically, the bulk Hilbert space of a conformal field theory may be regarded as (a completion of) a suitable sum of modules for a tensor product of vertex operator algebras (cf. \cite{MR1160677,Gab,W-snap}).
The alternative viewpoint we pursue here develops from the observation 
that a bulk Hilbert space of a conformal field theory, taken as a whole, resembles a self-dual vertex operator algebra. We explain this observation more fully in \S\ref{sec:bcft:pbcft}. It motivates our 
\begin{quote}
\noindent{\bf Main Question:} {\em Can a self-dual \voa{} be identified with a bulk \cft{} in some sense?}
\end{quote}
We answer this question positively by formulating 
the notion of {\em potential (bulk) \cft{}} (cf. Definition \ref{def:pbcft}) and by identifying self-dual vertex operator algebras as examples (cf. Propositions \ref{prop:VD2ncft} and \ref{prop:diagcftVL}). In fact, we formulate supersymmetric counterparts to potential conformal field theories as well (cf. Definitions \ref{def:pbscft} and \ref{def:qpb22scft}), and find more examples amongst self-dual vertex operator superalgebras (cf. Theorems \ref{thm:D4n+F4n}, \ref{thm:D12+LL} and \ref{thm:WKK}). To support the analysis we also present a classification result (Theorem \ref{thm:selfdual-class}) for self-dual vertex operator superalgebras with central charge up to $12$.

Equipped with the notion of potential bulk \scft{} we relate the Conway moonshine module to four \scfts{} in \S\ref{sec:egs}. One of these is the \scft{} underlying the tetrahedral K3 sigma model (cf. \S\ref{sec:egs:typeak3}), which was analyzed in detail by Gaberdiel--Taormina--Volpato--Wendland \cite{GTVW} (see also \cite{TW17}). 
Another is the Gepner model $(1)^6$ (cf. \S\ref{sec:egs:gepK3}), and it is clear that there are further interesting examples waiting to be considered, that may shed more light on the role of Conway moonshine in K3 string theory.

An implication of our analysis is that there should be self-dual vertex operator superalgebras besides the Conway moonshine module that have analogous relationships to other string theory compacitifcations. In \S\ref{sec:egs:typeDscft} we identify a self-dual vertex operator superalgebra---the $\cN=1$ \vosa{} naturally attached to the $E_8$ lattice---which realizes the bulk \scft{} underlying a sigma model with a $4$-torus as target (cf. Theorem \ref{thm:D4n+F4n}). Volpato \cite{V} has shown that the supersymmetry preserving automorphism groups of $4$-torus sigma models are subgroups of the Weyl group of the $E_8$ lattice. In light of these results it seems likely 
that 
the $E_8$ \vosa{} that appears in \S\ref{sec:egs:typeak3} can serve as a counterpart to the Conway moonshine module for nonlinear sigma models on $4$-dimensional tori.

It is interesting to compare the approach presented here to recent work \cite{TW17} of Taormina--Wendland. In loc. cit. the relationship between superconformal field theory and vertex operator superalgebra is also reconsidered, but the starting point is a fully fledged superconformal field theory. A notion of {\em reflection} is introduced which, in special circumstances, produces a vertex operator superalgebra. The \scft{} underlying the tetrahedral K3 sigma model is considered in detail, and it is shown that the Conway moonshine module arises when reflection is applied in this case. In this way Taormina--Wendland independently obtain 
results equivalent to those we present in \S\ref{sec:egs:typeak3}. Our notion of potential \scft{} serves to answer the question of what reflection produces from a \scft{} in general, except that a reflected \scft{} comes equipped with extra structure, on account of the richness of the \scft{} axioms. For example, the reflected tetrahedral K3 theory recovers the \vosa{} structure on the Conway moonshine module, but also furnishes an intertwining operator algebra structure on the direct sum of itself with its unique irreducible canonically twisted module (cf. \S4 of \cite{TW17}). Moving forward, we can expect that Taormina--Wendland reflection will play a key role in further elucidating the relationships between \scfts{}, potential \scfts{} and \vosas{}.

We now describe the structure of the article. We present background material in \S\ref{sec:back}. We explain our conventions on \vosas{} in \S\ref{sec:back:vosas}, and we review some modularity results for \vosas{} in \S\ref{sec:back:mod}. We recall the small $\cN=4$ superconformal algebra in \S\ref{sec:back:sca}, and describe an explicit construction at $c=6$ in \S\ref{sec:back:ff}. In \S\ref{sec:selfdual} we establish our classification result for self-dual \vosas{} with central charge at most $12$. Then in \S\ref{sec:bscft} we discuss the new relationship between vertex algebra and \cft{} that motivates this work, and explain our approach to answering the Main Question. We begin with \cft{} in \S\ref{sec:bcft:pbcft}, discuss \scft{} in \S\ref{sec:bscft:pbscft}, and consider \scfts{} with superconformal structure in \S\ref{sec:bscft:supstruc}. We present examples of bulk \scft{} interpretations of self-dual vertex operator superalgebras in \S\ref{sec:egs}. We begin with the diagonal \cfts{} associated to type $D$ lattice \voas{} in \S\ref{sec:egs:typeDcft}, and then discuss super analogues of these in \S\ref{sec:egs:typeDscft}. We discuss the \scft{} underlying the tetrahedral K3 sigma model in \S\ref{sec:egs:typeak3}, and discuss the relationship between the Conway moonshine module and the Gepner model $(1)^6$ in \S\ref{sec:egs:gepK3}.

\section{Background}\label{sec:back}

\subsection{Vertex Superalgebra}\label{sec:back:vosas}

We assume some familiarity with the basics of vertex (operator) superalgebra theory. Good references for this include \cite{FB-Z,FLM,Kac,LL}.

We adopt the 
convention, common in physical settings, of writing $(-1)^F$ for the canonical involution on a superspace $W=W^\even\oplus W^\odd$, so that $(-1)^F|_{W^\even}=\Id$ and $(-1)^F|_{W^\odd}=-\Id$.
We write $Y(a,z)=\sum a_{(k)}z^{-k-1}$ for the vertex operator attached to an element $a$ in a vertex superalgebra 
$W$.
A vertex superalgebra $W$ is called {\em $C_2$-cofinite} if $W/C_2(W)$ is finite-dimensional, where
$C_2(W):=\{a_{(-2)}b\mid a,b\in W\}$. Following \cite{MR1650625} we say that a vertex operator superalgebra $W$ is of {\em CFT type} if the $L_0$-grading $W=\bigoplus_{n\in\frac{1}{2}\ZZ}W_n$ is bounded below by $0$, and if $W_0$ is spanned by the vacuum vector. We assume that $W^\even=\bigoplus_{n\in \ZZ}W_n$ and $W^\odd=\bigoplus_{n\in \ZZ+\frac12}W_n$.
A \vosa{} that is $C_2$-cofinite and of CFT type is {\em nice (sch\"on)} in the sense of \cite{Hoehn2007}.

Say that a vertex operator algebra is {\em rational} if all of its admissible modules are completely reducible. 
We refer to \cite{DLM1998} for the definition of admissible module. 
It is proven in loc. cit. that a rational vertex operator algebra has finitely many irreducible admissible modules up to equivalence.
We say that a vertex operator superalgebra is {rational} if its even sub vertex operator algebra is rational. 
We will apply results from \cite{DZ} in what follows, so we should note that our notion of rationality for a vertex operator superalgebra is stronger than that which appears there. A vertex operator superalgebra that is rational in our sense is both rational and $(-1)^F$-rational in the sense of loc. cit. The equivalence of the two notions of rationality  is proven in \cite{HA} under an assumption on fusion products of canonically twisted modules.

We say that a \vosa{} $W$ is {\em self-dual} if $W$ is rational (in our sense), irreducible as a $W$-module, and if $W$ is the only irreducible admissible $W$-module up to isomorphism. Note that the term self-dual is sometimes used differently elsewhere in the literature, to refer to the situation in which $W$ is isomorphic to its contragredient as a $W$-module. 

According to Theorem 8.7 of \cite{DZ} a self-dual $C_2$-cofinite \vosa{} $W$ has a unique (up to isomorphism) irreducible $(-1)^F$-stable canonically twisted module. We denote it $W_\tw$. The $(-1)^F$-stable condition on a canonically twisted module $M$ for $W$ is equivalent to the requirement of a superspace structure $M=M^\even\oplus M^\odd$ that is compatible with the superspace structure  on $W$, so that elements of $W^\even$ and $W^\odd$ induce even and odd transformations of $M$, respectively. Modules that are not $(-1)^F$-stable will not arise in this work so we henceforth assume the existence of a compatible superspace structure to be a part of the definition of untwisted or canonically twisted module for a \vosa{}. However, we will not require morphisms of modules to preserve a particular superspace structure. So for example, if $W$ is a \vosa{} and $\Pi$ is the parity change functor on superspaces then $W$ and $\Pi W$ are not isomorphic as superspaces, but we do regard them as isomorphic $W$-modules.

Write $V_L$ for the vertex superalgebra attached to an integral lattice $L$, which is naturally a \vosa{} if $L$ is positive definite. Write $F(n)$ for the \vosa{} of $n$ free fermions. According to the boson-fermion correspondence \cite{Frenkel81,DM94} the \vosa{} attached to $\ZZ^n$ is isomorphic to $F(2n)$.
So the even sub vertex operator algebra $F(2n)^\even<F(2n)$ is isomorphic to the lattice vertex operator algebra attached to the {\em type D} lattice
\begin{gather}
D_n:=\left\{ (x_1, \dots, x_n) \in \Z^n\mid x_1+\dots+x_n =0\xmod{2} \right\}.
\end{gather}
The discriminant group of $D_n$ is $D_n^*/D_n\simeq \ZZ/2\ZZ\times \ZZ/2\ZZ$, and we label coset representatives as follows.
\begin{gather}
\begin{split}\label{eqn:[i]vector}
[0]&:= (0, \dots, 0,0), \quad
[1]:= \frac{1}{2}(1, \dots, 1,1), \\
[2]&:= (0, \dots, 0, 1), \quad
[3]:= \frac{1}{2}(1, \dots, 1, -1).
\end{split}
\end{gather}
Set $D_n^+:=D_n\cup D_n+[1]$. Then $D_n^+$ is a self-dual integral lattice---the rank $n$ {\em spin lattice}---whenever $n=0\xmod 4$. It is even if $n=0\xmod 8$. We have $D_4^+\cong \ZZ^4$ and $D_8^+\cong E_8$, and $D_{12}^+$ is the unique self-dual integral lattice of rank $12$ such that $\lambda\cdot\lambda\leq 1$ implies $\lambda=0$. The lattice \vosas{} attached to $D_n$ and $D_{4n}^+$ will play a prominent role later on.

Set $A_1=\sqrt{2}\ZZ$. 
We will make use of the fact that $D_{2n}$ admits $A_1^{2n}$ as a sub lattice. 
Explicitly, denoting $e_1:=(1,0,\dots 0)$, $e_2:=(0,1,\dots,0)$, \&c., we may take the first copy of $A_1$ to be generated by $e_1+e_{n+1}$, the second copy to be generated by $e_1-e_{n+1}$, the third copy to be generated by $e_2+e_{n+2}$, \&c. In the case that $n=2$ this embedding is actually an isomorphism, $D_2\cong A_1\oplus A_1$. More generally, $D_{2n}/A_1^{2n}$ embeds in the discriminant group of $A_1^{2n}$, which is naturally isomorphic to $(\ZZ/2\ZZ)^{2n}\cong \F_2^{2n}$. As such, it is natural to use binary codewords of length $2n$ to label cosets of $A_1^{2n}$ in its dual. Given such a codeword $C\in \F_2^{2n}$, define $\wt(C)$---the {\em weight} of $C$---to be the number of non-zero entries of $C$. Define a binary code $\cD_{2n}<\F_2^{2n}$ by setting
\begin{gather}
	\cD_{2n}:=\left\{ C=(c_1,\dots,c_{2n}) \mid c_i=c_{n+i}\text{ for $1\leq i\leq n$}, \wt(C)=0\xmod 4\right\}.
\end{gather}
We will abuse notation somewhat by also using $[i]$ to denote the following length $2n$ codewords,
\begin{gather}\label{eqn:[i]codeword}
[0]:= (0^{2n}), \quad
[1]:= (1^{n}0^n), \quad
[2]:= (0^{n-1}10^{n-1}1), \quad
[3]:= (1^{n-1}0^n 1).
\end{gather}

The next result may be checked directly, and smooths out any conflict between (\ref{eqn:[i]vector}) and (\ref{eqn:[i]codeword}).
\begin{lem}\label{lem:[i]to[i]}
With the above conventions, the image of $(D_{2n}+[i])/A_1^{2n}$ in $\F_2^{2n}$ is $\cD_{2n}+[i]$ for $i\in \{0,1,2,3\}$.
\end{lem}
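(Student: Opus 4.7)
The plan is to introduce a basis of $A_1^{2n}$ adapted to the embedding, compute the codeword of a general vector in the ambient space with respect to that basis, and then verify the statement case by case for $i\in\{0,1,2,3\}$ using additivity modulo $A_1^{2n}$.

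Concretely, I would set $\alpha_i:=e_i+e_{n+i}$ and $\alpha_{n+i}:=e_i-e_{n+i}$ for $1\le i\le n$, so that $\{\alpha_k\}_{k=1}^{2n}$ generates $A_1^{2n}$ (each $\alpha_k$ has squared norm $2$). For any $v=\sum_i a_i e_i+\sum_i b_i e_{n+i}$ in $\mathbb{Q}^{2n}$ one computes directly that $v=\sum_{i=1}^n \tfrac{a_i+b_i}{2}\alpha_i+\sum_{i=1}^n \tfrac{a_i-b_i}{2}\alpha_{n+i}$. The identification $(A_1^{2n})^*/A_1^{2n}\cong \F_2^{2n}$ with respect to this basis sends the class of $v$ to the codeword $C=(c_1,\dots,c_{2n})$ with $c_i\equiv a_i+b_i\pmod 2$ and $c_{n+i}\equiv a_i-b_i\pmod 2$, provided $v$ lies in $(A_1^{2n})^*$.

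The first step is then to check the four representatives directly. Plugging $[0]$, $[1]$, $[2]$, $[3]$ (the lattice vectors of \eqref{eqn:[i]vector}) into the formulas for $c_i$ and $c_{n+i}$ recovers exactly the codewords $(0^{2n})$, $(1^n0^n)$, $(0^{n-1}10^{n-1}1)$, $(1^{n-1}0^n1)$ of \eqref{eqn:[i]codeword}. The second step is to identify the image of $D_{2n}/A_1^{2n}$ with $\cD_{2n}$. For $v\in D_{2n}$ one has $a_i,b_i\in\ZZ$, so $c_i\equiv a_i+b_i\equiv a_i-b_i\equiv c_{n+i}\pmod 2$; hence the first defining condition for $\cD_{2n}$ is automatic. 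Since the weight is $\wt(C)=2\cdot\#\{i:a_i+b_i\text{ odd}\}$ and the parity of $\#\{i:a_i+b_i\text{ odd}\}$ coincides with that of $\sum_i(a_i+b_i)$, the constraint $\sum_i(a_i+b_i)\equiv 0\pmod 2$ defining $D_{2n}$ translates exactly into $\wt(C)\equiv 0\pmod 4$. Conversely, for any $C\in \cD_{2n}$ the vector $v:=\sum_{i=1}^n c_i e_i$ lies in $D_{2n}$ (its coordinate sum is $\wt(C)/2\equiv 0\pmod 2$) and visibly maps to $C$.

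For the final step, the map $(A_1^{2n})^*\to\F_2^{2n}$ is a group homomorphism that is trivial on $A_1^{2n}$, so the image of the coset $D_{2n}+[i]$ is the image of $D_{2n}$ translated by the image of the lattice vector $[i]$. Combining the previous two steps, this image is $\cD_{2n}+[i]$ in the sense of the codeword labels. There is no real obstacle here; the only delicacy is keeping the two meanings of $[i]$ distinct during the bookkeeping, and the verification in the first step is what smooths out this potential conflict.
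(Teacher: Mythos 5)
Your proof is correct and is exactly the direct verification the paper has in mind (the paper omits it, saying only that the lemma "may be checked directly"): you express a vector in the basis $e_i\pm e_{n+i}$, read off the codeword entries $c_i\equiv a_i+b_i$, $c_{n+i}\equiv a_i-b_i \pmod 2$, and check the four representatives and the parity/weight conditions. The only thing worth noting is that your ordering of the $A_1$ factors (all "plus" generators in positions $1,\dots,n$ and all "minus" generators in positions $n+1,\dots,2n$) is the convention actually required for the stated codewords $[i]$ — the paper's literal enumeration "first copy $e_1+e_{n+1}$, second copy $e_1-e_{n+1}$, \dots" would interleave them — and your bookkeeping silently fixes this, which is precisely the "smoothing out" the lemma is meant to accomplish.
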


The above discussion shows, in particular, that $A_1^{12}\cong \sqrt{2}\ZZ^{12}$ embeds in $D_{12}^+$. In \S\ref{sec:egs:gepK3} we will make use of the fact that $\sqrt{3}\ZZ^{12}$ also embeds in $D_{12}^+$. To see this recall that the {\em (extended) ternary Golay code} is a linear sub space $\cG<\F_3^{12}$ of dimension $6$ such that if 
\begin{gather}
C\cdot D:=\sum_i c_id_i
\end{gather} 
for $C=(c_1,\ldots,c_{12})$ and $D=(d_1,\ldots,d_{12})$ then $C\cdot D=0$ when $C,D\in \cG$, and no non-zero codeword $C\in \cG$ has less than six non-zero entries. These properties determine $\cG$ uniquely, up to permutations of coordinates, and multiplications of coordinates by $\pm 1$ (cf. e.g. \cite{CS}).

We will denote the elements of $\F_3$ by $\{0,+,-\}$ when convenient. To obtain an embedding of $\sqrt{3}\Z^{12}$ in $D_{12}^+$ fix a copy $\cG$ of the ternary Golay code in $\F_3^{12}$. 
Multiplying some components by $-1$ if necessary we may assume that $(+^{12})\in\cG$. Then there are exactly $11$ code words $C^i=(c^i_1,\ldots,c^i_{12})\in\cG$ such that the first entry of $C^i$ is $+1$, five further entries are $+1$, and the remaining six entries are $-1$. 
Set $C^{12}=(+^{12})$ and define $\lambda^i:=(\lambda^i_1,\ldots,\lambda^i_{12})$ for $1\leq i\leq 12$ by setting $\lambda^i_{j}=\pm\frac12$ when $c^i_j=\pm 1$. Then the $\lambda^i$ all belong to $D_{12}^+$, and satisfy $\lambda^i\cdot \lambda^j=3\delta_{ij}$. So the $\lambda^i$ constructed in this way generate a sub lattice of $D_{12}^+$ isomorphic to $\sqrt{3}\Z^{12}$.

The discriminant group of $\sqrt{3}\Z^{12}$ is $\F_3^{12}$, so it is natural to consider the image of $D_{12}^+/\sqrt{3}\Z^{12}$ in $\F_3^{12}$. Denote it $\cG_{12}^+$. Since $D_{12}^+$ is a self-dual lattice, $\cG_{12}^+$ is a linear subspace such that $C\cdot D=0$ for $C,D\in \cG_{12}^+$. From the fact that $\lambda\in D_{12}^+$ can only satisfy $\lambda\cdot\lambda\leq 1$ if $\lambda=0$ we obtain that $\cG_{12}^+$ has no non-zero words with fewer than six non-zero entries. Applying the uniqueness of the ternary Golay code we obtain the following result.
\begin{lem}\label{lem:D12+Gol}
The image of $D_{12}^+/\sqrt{3}\Z^{12}$ in $\F_3^{12}$ is a copy of the ternary Golay code.
\end{lem}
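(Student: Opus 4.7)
The plan is to show that $\cG_{12}^+$ satisfies the three properties characterizing the ternary Golay code, as recalled just before the statement: it is an $\F_3$-subspace of $\F_3^{12}$ of dimension $6$, it is self-orthogonal under the standard bilinear form, and it has minimum weight at least $6$. Once these are checked, the uniqueness clause in the characterization identifies $\cG_{12}^+$ as a copy of the ternary Golay code (up to permutation and sign changes of coordinates).

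Set $L := \sqrt{3}\Z^{12}$, so that $L \subset D_{12}^+ \subset L^*$ and $L^*/L \cong \F_3^{12}$ via the basis $\{\lambda^i/3\}$ of $L^*$. Self-duality of $D_{12}^+$ together with $|L^*/L|=3^{12}$ forces $[D_{12}^+ : L] = 3^6$, so $\cG_{12}^+$ has dimension $6$. For self-orthogonality, any element of $L^*$ may be written as $\sum_i (a_i/3)\lambda^i$ with $a_i \in \Z$, and two such elements $\lambda,\mu$ with coordinate vectors $(a_i)$, $(b_i)$ satisfy $\lambda \cdot \mu = \tfrac{1}{3}\sum_i a_i b_i$, using $\lambda^i \cdot \lambda^j = 3\delta_{ij}$. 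Since $D_{12}^+$ is integral, $\lambda,\mu \in D_{12}^+$ gives $\sum_i a_i b_i \equiv 0 \pmod 3$, which is exactly $C \cdot D = 0$ for the images $C,D \in \cG_{12}^+$.

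For the minimum-weight condition, let $C \in \cG_{12}^+$ be nonzero of weight $w$, and choose representatives $a_i \in \{-1, 0, 1\}$. Form $\lambda := \sum_i (a_i/3)\lambda^i \in L^*$. Since some lift $\mu \in D_{12}^+$ of $C$ exists and $\lambda - \mu \in L \subset D_{12}^+$, the short lift $\lambda$ itself lies in $D_{12}^+$, with $\lambda \cdot \lambda = w/3$. If $w \in \{1,2,3\}$ then $\lambda \cdot \lambda \leq 1$ forces $\lambda = 0$ by the defining property of $D_{12}^+$ recalled in the excerpt, contradicting $w > 0$. If $w \in \{4,5\}$ then $\lambda \cdot \lambda \notin \Z$ contradicts the integrality of $D_{12}^+$. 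Hence $w \geq 6$ whenever $C \neq 0$, and the uniqueness statement for the ternary Golay code completes the proof. The one subtle point worth flagging is in the middle step: recognizing that the canonical short representative of the coset $C$ automatically lies in $D_{12}^+$, because the inclusion $L \subset D_{12}^+$ lets one translate any lift by elements of $L$ to land on $\lambda$; without this observation one cannot directly convert the weight of $C$ into the norm of a lattice vector.
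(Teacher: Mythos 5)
Your proof is correct and follows essentially the same route the paper sketches in the paragraph preceding the lemma: self-duality of $D_{12}^+$ gives linearity, the dimension count and self-orthogonality, the minimal-norm property of $D_{12}^+$ gives the weight bound via the short coset representative, and uniqueness of the ternary Golay code finishes. You are in fact a little more careful than the paper's sketch, which cites only the condition ``$\lambda\cdot\lambda\leq 1$ implies $\lambda=0$'' for the weight bound, whereas weights $4$ and $5$ are really excluded by integrality of the lattice (equivalently, by self-orthogonality forcing all weights to be divisible by $3$), exactly as you observe.
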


\subsection{Modularity}\label{sec:back:mod}

We now review some results on modularity for vertex operator superalgebras. 

Zhu proved \cite{Z} that certain trace functions on irreducible modules for suitable \voas{} span representations of the modular group $\SL_2(\ZZ)$. More general modularity results that incorporate twisted modules have been obtained by Dong--Li--Mason \cite{DLM2000}, Dong--Zhao \cite{DZ,DZ2}, and Van Ekeren \cite{Van}. We will use the extension of \cite{Z,DLM2000} to \vosas{} established in \cite{DZ}.
 
To describe the relevant results 
let $W=\bigoplus_{n\in\frac12\ZZ} W_n$ be a rational $C_2$-cofinite \vosa{} of CFT type. 
Let $I_0$ be an index set for the isomorphism classes of irreducible $W$-modules, let $I_1$ be an index set for the isomorphism classes of irreducible canonically twisted $W$-modules, and set $I:=I_0\cup I_1$. Write $M_i$ for a representative (untwisted or canonically twisted) $W$-module corresponding to $i\in I$, 
and choose a compatible superspace structure $M_i=M_i^\even\oplus M_i^\odd$ for each $i\in I$.
For $M$ an untwisted or canonically twisted $W$-module define {\em vertex operators on the torus} $Y[a,z]:M\to M\ldp z\rdp$ for $a\in W$ by requiring that $Y[a,z]=Y(a,e^z-1)e^{n z}$ when $a\in W_n$, and define $a_{[n]}\in{\operatorname{End}}(M)$ by requiring $Y[a,z]=\sum_k a_{[n]}z^{-n-1}$. Also, write $W=\bigoplus_{n\in \frac12\ZZ} W_{[n]}$ for the eigenspace decomposition of $W$ with respect to the action of $\tilde{\omega}_{[1]}$, where $\tilde{\omega}:=\omega-\frac{c}{24}{\bf v}$, and $\omega$ and ${\bf v}$ are the Virasoro and vacuum elements of $W$, respectively. 
\begin{thm}[\cite{DZ}]\label{thm:DZmod}
Suppose that $W$ is a rational $C_2$-cofinite \vosa{} with central charge $c$. 
Then with $I$ and $\{M_i\}_{i\in I}$ as above there are 
maps 
$\rho_{ij}:\SL_2(\ZZ)\to \CC$ for $i,j\in I$ such that if $v\in W_{[n]}$ for some $n\in \frac12\ZZ$ 
then
\begin{gather}\label{eqn:modtrW}
\tr_{M_i}\left.\left(o(v)(-1)^{\ell F} q^{L_0-\frac c{24}}\right)\right|_n
\left(\begin{smallmatrix}a&b\\c&d\end{smallmatrix}\right)
=\sum_{j\in I_{\tilde k}} 
\rho_{ij}\left(\begin{smallmatrix}a&b\\c&d\end{smallmatrix}\right)
\tr_{M_j} \left(o(v)(-1)^{\tilde\ell F}  q^{L_0-\frac c{24}}\right)
\end{gather}
for $\ell\in \ZZ/2\ZZ$ and $\left(\begin{smallmatrix}a&b\\c&d\end{smallmatrix}\right)\in \SL_2(\ZZ)$, where 
$\tilde k=1+a(k+1)+c(\ell+1)\xmod 2$ and $\tilde\ell=1+b(k+1)+d(\ell+1)\xmod 2$ when $i\in I_k$.
\end{thm}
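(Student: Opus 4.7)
The plan is to derive the statement from the main modularity theorem of \cite{DZ}, whose setting already accommodates the superalgebra structure and canonically twisted modules required here. First I would recast the notation: the pair $(k,\ell)\in\{0,1\}^2$ labels the four combinations of sector (untwisted versus canonically $(-1)^F$-twisted, encoded by $k$) with fermion-insertion parity (trivial versus $(-1)^F$, encoded by $\ell$), while the finite index set $I=I_0\sqcup I_1$ enumerates the irreducible modules in the two sectors, finiteness being a consequence of rationality. The identity $\tilde k=k$, $\tilde\ell=\ell$ for the identity matrix is a sanity check of the combinatorial labelling, and reduces the theorem to exhibiting an $\SL_2(\ZZ)$-action on the span of trace functions which permutes the four sector labels via the natural map to $\SL_2(\F_2)$.

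I would then follow the standard four-step Zhu paradigm adapted to the super setting. Step one: introduce Zhu-type associative algebras for the $(-1)^{\ell F}$-twisted setting and show, using rationality together with $C_2$-cofiniteness, that the space of formal trace functions
\[
v\longmapsto \tr_{M_i}\!\left(o(v)(-1)^{\ell F}q^{L_0-\tfrac{c}{24}}\right),\qquad i\in I_k,\ v\in W_{[n]},
\]
satisfies Zhu-type recursion relations, is finite-dimensional, and is spanned by the module characters. Step two: use $C_2$-cofiniteness once more to upgrade the formal $q$-series to holomorphic functions on the upper half-plane $\H$. Step three: analyse the action of the generators of $\SL_2(\ZZ)$ on the union of these spaces. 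The translation $T:\tau\mapsto\tau+1$ acts diagonally up to sector-dependent roots of unity determined by the $L_0-\tfrac{c}{24}$-spectrum and its interaction with the fermion grading. The inversion $S:\tau\mapsto-1/\tau$ exchanges untwisted and $(-1)^F$-twisted sectors by means of Zhu's contour-deformation argument, extended to the canonically twisted case exactly as in \cite{DZ}. Step four: package these actions into the coefficients $\rho_{ij}$ and verify compatibility with the defining relations of $\SL_2(\ZZ)$.

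The main obstacle is step three: precisely tracking how the sector labels $(k,\ell)$ transform, and verifying that the resulting assignment on the generators $S,T$ extends consistently to a $\SL_2(\ZZ)$-representation. Once one recognises that $(k+1,\ell+1)$ ought to transform as a vector in $\F_2^2$ under the natural reduction $\SL_2(\ZZ)\to \SL_2(\F_2)$, the formulae $\tilde k=1+a(k+1)+c(\ell+1)$ and $\tilde\ell=1+b(k+1)+d(\ell+1)$ modulo $2$ are essentially forced; the genuine technical content lies in showing that the contour deformations underlying the $S$-transformation really do realise this linear $\SL_2(\F_2)$-action on sector labels, and that the various $\rho_{ij}\!\left(\begin{smallmatrix}a&b\\c&d\end{smallmatrix}\right)$ patch together into a well-defined representation rather than a mere projective one. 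Handling the sign subtleties introduced by $(-1)^F$ and the half-integral $L_0$-grading on $W^\odd$ is where the super version diverges from the purely even arguments of \cite{Z,DLM2000}, and where invocation of the relevant lemmas of \cite{DZ} is essential.
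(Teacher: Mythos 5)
This theorem is quoted from Dong--Zhao \cite{DZ} and the present paper supplies no proof of it, so there is nothing internal to compare against; your outline correctly reproduces the Zhu-style strategy (twisted Zhu algebras and recursion relations, $C_2$-cofiniteness to upgrade formal series to holomorphic functions, and the $S$/$T$ analysis in which the sector labels $(k+1,\ell+1)$ transform linearly under the reduction $\SL_2(\ZZ)\to\SL_2(\F_2)$ --- one checks that $S$ swaps $k$ and $\ell$ while $T$ sends $\ell$ to $k+\ell+1$, matching the stated formulae) by which that reference establishes the result. The only caveat is that your argument is an outline deferring the substantive analytic content (the recursion identities, convergence, and the contour-deformation computation of the $S$-transformation) to the lemmas of \cite{DZ} itself, which is acceptable here precisely because the statement is an imported theorem rather than one proved in this paper.
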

In (\ref{eqn:modtrW}) we utilize the usual slash notation from modular forms, setting
\begin{gather}
	\left(f|_n\left(\begin{smallmatrix}a&b\\c&d\end{smallmatrix}\right)\right)(\tau):=f\left(\tfrac{a\tau+b}{c\tau+d}\right)\tfrac1{(c\tau +d)^n}
\end{gather}
for a holomorphic function $f:\mathbb{H}\to \CC$ and $\left(\begin{smallmatrix}a&b\\c&d\end{smallmatrix}\right)\in\SL_2(\ZZ)$. 
If $W$ is a rational $C_2$-cofinite \voa{} then we can apply Theorem \ref{thm:DZmod} to $W$ by regarding it as a \vosa{} with trivial odd part. Then $I_0=I_1$ and we recover a specialization of Zhu's results \cite{Z} for $W$ by taking $k=\ell=1$ in (\ref{eqn:modtrW}).

Resume the assumption that $W$ is a rational $C_2$-cofinite \vosa{} of CFT type. Define the {\em characters} of an untwisted or canonically twisted $W$-module $M$  by setting
\begin{gather}\label{eqn:chpmMmod}
	\ch^\pm[M](\tau):=\tr_M\left((\pm1)^Fq^{L_0-\frac{c}{24}}\right).
\end{gather}
Then taking $v$ to be the vacuum in Theorem \ref{thm:DZmod} we obtain 
\begin{gather}
	\ch^\epsilon[M_i]\left(\frac{a\tau+b}{c\tau+d}\right) = \sum_{j\in I_{\tilde k}} \rho_{ij}(\gamma)\ch^{\tilde \epsilon}[M_j](\tau),
\end{gather}
for $\left(\begin{smallmatrix}a&b\\c&d\end{smallmatrix}\right)\in \SL_2(\ZZ)$, where $\epsilon=(-1)^\ell$ and $\tilde\epsilon=(-1)^{\tilde\ell}$.

In this work we will be especially interested in the situation in which a 
superspace $M$ is a module for a tensor product $V'\otimes V''$ of \vosas{}, and is thus equipped with two commuting actions of the Virasoro algebra. We write $L'_n$ and $L''_n$ for the Virasoro operators corresponding to $V'$ and $V''$, respectively, and write $c'$ and $c''$ for the corresponding central charges. Then it is natural to consider the {\em refined characters}
\begin{gather}\label{eqn:chhM}
\chh^\pm[M](\tau',\tau'') := \tr_{M}\left( (\pm 1)^F{q'}^{L'_0-\frac{c'}{24}}{q''}^{L''_0-\frac{c''}{24}}\right)
\end{gather}
where $q':=e^{2\pi i \tau'}$ and $q'':=e^{2\pi i \tau''}$. Krauel--Miyamoto \cite{KMi} have shown how Zhu's theory \cite{Z} extends so as to yield a modularity result for such refined characters in the \voa{} case. By replacing Zhu's results with the appropriate \vosa{} counterparts from \cite{DZ} in the proof of Theorem 1 in \cite{KMi} we readily obtain a direct analogue for \vosas{}. Here we require the following special case of this.
\begin{thm}[\cite{DZ,KMi}]\label{thm:KM}
Let $W$ be a rational $C_2$-cofinite \voa{} and suppose that $\omega=\omega'+\omega''$ where $\omega$ is the conformal vector of $W$, and $\omega'$ and $\omega''$ generate commuting representations of the Virasoro algebra. 
Then with $I$, $\{M_i\}_{i\in I}$ and $\rho_{ij}(\gamma)$ as in (\ref{eqn:modtrW}) 
we have
\begin{gather}
\chh^\epsilon[M_i]\left(\frac{a\tau'+b}{c\tau'+d},\frac{a\tau''+b}{c\tau''+d}\right) = 
\sum_{j\in I_{\tilde k}} \rho_{ij}(\gamma)\chh^{\tilde \epsilon}[M_j](\tau',\tau'')
\end{gather}
for $\left(\begin{smallmatrix}a&b\\c&d\end{smallmatrix}\right)\in \SL_2(\ZZ)$. Here $\tilde k$ and $\tilde \ell$ are as in (\ref{eqn:modtrW}), for $\epsilon=(-1)^\ell$ and $\tilde\epsilon=(-1)^{\tilde\ell}$.
\end{thm}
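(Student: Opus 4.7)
The plan is to adapt the proof of Theorem 1 in \cite{KMi}, where the corresponding statement is established for vertex operator algebras using Zhu's modularity theorem. The role of Zhu's theorem there is twofold: it guarantees (i) that for every $v \in V_{[n]}$ the function $\tau \mapsto \tr_M(o(v)q^{L_0-c/24})$ is a weight-$n$ modular form with respect to fixed transition matrices $\rho_{ij}(\gamma)$ depending only on the pair of modules, and (ii) that these matrices are independent of the insertion $v$. Both features are supplied by Theorem~\ref{thm:DZmod} in the VOSA setting, so the argument of \cite{KMi} can be carried over essentially verbatim.

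More concretely, write $q=e^{2\pi i \tau''}$ and expand
$$
\chh^\pm[M_i](\tau',\tau'')=\tr_{M_i}\!\left((\pm 1)^F e^{2\pi i(\tau'-\tau'')(L'_0-c'/24)}q^{L_0-c/24}\right)
$$
as a formal series in $\tau'-\tau''$. Since $L'_0-c'/24=o(\tilde\omega')$ and $\tilde\omega'\in W_{[2]}$ (because $\omega'$ is a Virasoro vector commuting with $\omega''$), iterated application of the associative product of the square-bracket Zhu algebra yields, at each order in $\tau'-\tau''$, an expression in terms of trace functions $\tr_{M_i}(o(w)(\pm 1)^Fq^{L_0-c/24})$ with $w$ of known square-bracket weight. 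Theorem~\ref{thm:DZmod} then gives the modular transformation of each such trace function under $\tau''\mapsto\gamma\tau''$, and the Krauel--Miyamoto bookkeeping shows that the weight factors combine with the Mobius identity
$$
\gamma\tau'-\gamma\tau''=\frac{\tau'-\tau''}{(c\tau'+d)(c\tau''+d)}
$$
to reproduce the claimed diagonal transformation
$$
\chh^\epsilon[M_i](\gamma\tau',\gamma\tau'')=\sum_{j\in I_{\tilde k}}\rho_{ij}(\gamma)\chh^{\tilde\epsilon}[M_j](\tau',\tau'').
$$
Holomorphy and convergence of the resulting series in both variables are verified as in \cite{KMi}.

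The principal obstacle is precisely the combinatorial bookkeeping that is the heart of \cite{KMi}: expressing each power $(L'_0-c'/24)^n$ as a suitable combination of zero modes of square-bracket elements (with the attendant quasi-modular corrections), applying Theorem~\ref{thm:DZmod} term by term, and verifying that the corrections collapse so as to yield the stated clean transformation law. Since \cite{DZ} furnishes exactly the modularity input required, and since the algebraic identities underlying the bookkeeping are purely formal VOSA statements, no further adjustment of the Krauel--Miyamoto argument is needed beyond this substitution.
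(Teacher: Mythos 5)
Your proposal is correct and follows exactly the route the paper takes: the paper offers no separate proof, but states immediately before the theorem that one obtains it "by replacing Zhu's results with the appropriate vertex operator superalgebra counterparts from [DZ] in the proof of Theorem 1 in [KMi]," which is precisely your substitution argument. Your additional detail on the Krauel--Miyamoto mechanism (expanding in $\tau'-\tau''$, rewriting powers of $L_0'-c'/24$ via zero modes of square-bracket elements, and matching the weight factors with the M\"obius identity) is an accurate account of what that cited proof does and is consistent with the paper's intent.
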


\subsection{Superconformal Algebras}\label{sec:back:sca}

In this section we recall some properties of the small $\cN=4$ superconformal algebra from \cite{A,ET87,ET88,ET}. It is strongly generated by a Virasoro field $T$ of dimension $2$, four odd fields $G^a$ ($a=0,1,2,3$) of dimension $\frac32$, and three even fields $J^i$ ($i=1, 2, 3$) of dimension $1$.
Define
\begin{gather}
\alpha^i_{a, b} := \frac{1}{2} \left( \delta_{a, i}\delta_{b, 0} -\delta_{b, i}\delta_{a, 0}\right) +\frac{1}{2}\epsilon_{iab}
\end{gather}
where $\epsilon_{ijk}$ is totally antisymmetric for $i, j, k\in\{1, 2, 3\}$, normalized so that $\epsilon_{123}=1$, and defined to be zero if one of the indices is zero. Also let $k$ be a positive integer and set $c=6k$.
The operator product algebra in a representation with central charge $c$ is then
\begin{gather}
\begin{split}\label{eqn:N=4OPEs}
T(z)T(w) &\sim \frac{\frac 12 c}{(z-w)^4}+\frac{2T(w)}{(z-w)^2}+\frac{\partial_w T(w)}{(z-w)}, \\
T(z)G^a(w) &\sim \frac{\frac 32 G^a(w)}{(z-w)^2}+\frac{\partial_w G^a(w)}{(z-w)}, \\
T(z)J^i(w) &\sim \frac{J^i(w)}{(z-w)^2}+\frac{\partial_wJ^i(w)}{(z-w)}, \\
G^a(z)G^b(w) &\sim \frac{\frac 23 c \delta_{a, b}}{(z-w)^3}-\frac{\sum\limits_{i=1}^3 8\alpha^i_{a, b} J^i(w)}{(z-w)^2}+\frac{2\delta_{a, b} T(w)-\sum\limits_{i=1}^3 4\alpha^i_{a, b} \partial_wJ^i(w)}{(z-w)}, \\
J^i(z)G^a(w) &\sim \frac{\sum\limits_{b=0}^3 4\alpha^i_{a, b} G^b(w)}{(z-w)}, \\
J^i(z)J^j(w) &\sim \frac{-\frac 12 k \delta_{i, j}}{(z-w)^2}+\frac{\sum\limits_{k=1}^3 \epsilon_{ijk} J^k(w)}{(z-w)}.
\end{split}
\end{gather}
Equivalently, we have the following commutation relations of modes, where $m, n \in \mathbb Z$ and $r, s \in \mathbb Z+\frac{1}{2}$.
\begin{gather}\label{eq:commutationrelations} 
\begin{split}
[L_m, L_n]&= (m-n) L_{m+n}=\frac{c}{12} \left(M^3-m\right) \delta_{m+n, 0},\\
[L_m, G^a_r]&= \left(\frac{m}{2}-r\right) G^a_{m+r},\\
[L_m, J^i_n]&= -nJ^i_{m+n}, \\
\{ G^a_r, G^b_s\} &= 2\delta_{a, b}L_{r+s}-\sum_{i=1}^3 4(r-s) \alpha^i_{a, b}J^i_{r+s}+\frac{c}{3}\left(r^2-\frac{1}{4}\right) \delta_{a, b}\delta_{r+s, 0},\\
[J^i_m, G^a_r]&= \sum_{b=0}^3 \alpha^i_{a, b}G^b_{m+r},\\
[J^i_m, J^j_n]&= \sum_{k=1}^3 \epsilon_{ijk} J^k_{m+n} -m\frac{k}{2} \delta_{i, j}\delta_{m+n, 0}.
\end{split}
\end{gather}
Note that the currents $J^i(z)$ represent the affine Lie algebra of ${\mathfrak{sl}}_2$ at level $k$, and $k$ is a positive integer in any unitary representation.
Often one uses another basis for these; namely
\begin{gather}\label{eqn:N=4altJs}
J(z):= -2i J^1(z), \quad J^+(z):= J^2(z)-iJ^3(z), \quad J^-(z):= -J^2(z)-iJ^3(z).
\end{gather}
Also for the odd currents there is another useful basis:
\begin{gather}\label{eqn:N=4altGs}
G^{-,1}(z):= G^0(z)-iG^1(z), \quad 
G^{+,1}(z):= -G^2(z)+iG^3(z), \\
G^{-,2}(z):= G^2(z)+iG^3(z), \quad 
G^{+,2}(z):= -G^0(z)+iG^1(z),
\end{gather}
which yields
\begin{gather}\label{eqn:JGs}
[J^\pm_m, G^{\mp, x}_r] = G^{\pm, x}_{m+r}, \qquad [J_m, G^{\pm, x}_r]= \pm G^{\pm,  x}_{m+r}
\end{gather}
for $x\in\{1, 2\}$. 

The mode algebra of the affine Lie algebra of $\mathfrak{sl}_2$ at level $k$ has a family of automorphisms, called spectral flow. They are induced from affine Weyl translations so they are parameterized by the translation lattice which is isomorphic to $\mathbb Z$. For $\ell\in \ZZ$ the corresponding action is denoted $\sigma^\ell$, and satisfies
\begin{gather}\label{eqn:actspcflw}
\begin{split}
\sigma^\ell\left(J^\pm_n\right) &= J^\pm_{n\mp \ell}, \\
\sigma^\ell\left(J_n\right) &= J_{n} -\delta_{n, 0}\ell k , \\
\sigma^\ell\left(L_n\right) &= L_{n} +\delta_{n, 0}\left(\frac{\ell}{2} J_0 +\frac{\ell^2k}{4}\right). 
\end{split}
\end{gather}
We may consider twisted modules as in (2.10) of \cite{CR}. That is, given a module $M$ for the affine Lie algebra of $\mathfrak{sl}_2$ at level $k$ let $\sigma_\ell^*$ be the unique invertible linear transformation of $M$ such that the action of the mode algebra satisfies
\begin{gather}\label{eqn:spcflwtws}
X \sigma_\ell^*(v)= \sigma_\ell^*\left(\sigma^{-\ell}(X)v\right).
\end{gather} 
This $\sigma^\ell$-twisting of $M$ is isomorphic to $M$ as a module for the mode algebra, but since the grading is changed they are in general not isomorphic as modules for the vertex operator algebra $L_k(\mathfrak{sl}_2)$ attached to $\mathfrak{sl}_2$ at level $k$. Identification of the twisted modules can often be achieved via characters.

So we define a {character} of $M$ by setting
\begin{gather}\label{eqn:ch-full}
\text{ch}[M](y, u, \tau) := \tr_M\left(y^kz^{J_0}q^{L_0-\frac{c}{24}} \right)
\end{gather}
where $z=e^{2\pi i u}$ and $q=e^{2\pi i \tau}$.
Then from the action of $\sigma^\ell$ we see that 
\begin{gather}
\text{ch}[\sigma_\ell^*\left(M\right)](y, z, q) = \text{ch}[M](yz^\ell q^{\frac14 \ell^2}, zq^{\frac12 \ell}, q).
\end{gather}
If $k$ is a positive integer and $M$ is an irreducible integrable highest-weight module of level $k$, then the character of $M$ is a component of a vector-valued Jacobi form and the twisted module can be seen to be isomorphic to the original one if $\ell$ is even. If $\ell$ is odd it 
maps the irreducible integrable highest-weight module with highest weight $j$ (corresponding to the $j+1$ dimensional representation of $\mathfrak{sl}_2$) to the one with highest weight $k-j$. Thus it corresponds to fusion with the order two simple current module. 

We now consider a \vosa{} $V$ of central charge $c=6k$ that contains a commuting pair of sub \voas{} $L_k(\mathfrak{sl}_2)$ and $U$, where the Virasoro elements of $V$ and $L_k(\mathfrak{sl}_2)\otimes U$ coincide. We require $U$ to be rational, and suppose that $V$ is of the form
\begin{gather}
V=\bigoplus_{i} L_{k}(\lambda_i)\otimes U_i
\end{gather}
where $L_{k}(\lambda_i)$ is the irreducible $L_k(\mathfrak{sl}_2)$-module with highest weight $\lambda_i$, and $U_i$ is an irreducible $U$-module. Further we require that $V$ contain strong generators for the small $\cN=4$ superconformal algebra at $c=6k$. In this situation the parity of elements in $V$ is given by $(-1)^{J_0}$ where $J_0$ is as in (\ref{eqn:N=4altJs}), (\ref{eqn:JGs}).
The spectral flow automorphism $\sigma:=\sigma^1$ extends naturally to $V$ in such a way that
\begin{gather}
\sigma^*(V)=\bigoplus_{i} \left(L_{k}(\lambda_i)\boxtimes_{L_k(\mathfrak{sl}_2)} L_k(\lambda_*)\right)\otimes U_i
\end{gather}
as a $L_k(\mathfrak{sl}_2)\otimes U$-module, where $L_k(\lambda_*)$ denotes the order two simple current of $L_k(\mathfrak{sl}_2)$, and $\boxtimes_V$ is the fusion product for modules over $V$.

Under spectral flow we have $\sigma^{-1}(L_0)=L_0+\frac{1}{2}J_0+\frac{c}{24}$,
so the conformal weight of the modes $G^{\pm, x}_r$ becomes $r\pm\frac{1}{2}$ as endomorphisms on twisted modules. 
Especially, the $G^{\pm, x}_{\mp \frac{1}{2}}$ act with conformal dimension zero, and thus induce grading-preserving maps from the even part to the odd part of the twisted module.  
We compute
\begin{gather}
\begin{split}
\left(G^{+, 1}_{-\frac{1}{2}}+G^{-, 2}_{\frac{1}{2}}\right)^2
&=\frac{1}{2}\left\{G^{+, 1}_{-\frac{1}{2}}+G^{-, 2}_{\frac{1}{2}}, G^{+, 1}_{-\frac{1}{2}}+G^{-, 2}_{\frac{1}{2}}\right\}\\
&= -2L_0+J_0\\
&= 2\sigma^{-1}\left( L_0-\frac{c}{24}\right).
\end{split}
\end{gather}
This shows that $G^{+, 1}_{-\frac{1}{2}}+G^{-, 2}_{\frac{1}{2}}$ is an invertible map from the even conformal grade $n$ subspace of $\sigma^*(M)$ to the odd one as long as $n-\frac{c}{24}$ does not vanish. We summarize this as follows.
\begin{lem}\label{lem:G0}
If $M$ is a $V$-module and $\sigma^*(M)=\bigoplus_n \sigma^*(M)_n$ is the grading of the corresponding twisted module 
then $\operatorname{sdim}\sigma^*(M)_n=0$ unless $n-\frac{c}{24}=0$.
\end{lem}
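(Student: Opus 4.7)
The plan is to extract the conclusion directly from the operator identity displayed just above the lemma, so this is really a matter of unpacking its consequences cleanly rather than introducing any new machinery.

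First I would establish that the operator $Q := G^{+,1}_{-\frac{1}{2}} + G^{-,2}_{\frac{1}{2}}$ acts as a grading-preserving endomorphism of $\sigma^*(M)$ that is odd with respect to the $\ZZ/2\ZZ$-grading. The grading-preservation is exactly what is noted in the paragraph before the lemma: after conjugation by $\sigma$ the relation $\sigma^{-1}(L_0) = L_0 + \tfrac12 J_0 + \tfrac{c}{24}$ shifts the conformal weights of $G^{\pm,x}_r$ on $\sigma^*(M)$ to $r \pm \tfrac12$, so $G^{+,1}_{-1/2}$ and $G^{-,2}_{1/2}$ both have weight $0$. Oddness follows because $Q$ is a sum of modes of odd fields, so $Q$ sends $\sigma^*(M)^{\even}_n$ to $\sigma^*(M)^{\odd}_n$ and vice versa.

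Next, using the identity
\begin{gather*}
Q^2 = 2\sigma^{-1}\!\left(L_0 - \frac{c}{24}\right)
\end{gather*}
established in the displayed computation, I would observe that on the grade $n$ subspace $\sigma^*(M)_n$ the operator $\sigma^{-1}(L_0 - c/24)$ acts as the scalar $n - \tfrac{c}{24}$ (since $\sigma^*$ is defined so that the $L_0$-grading on $\sigma^*(M)$ is by eigenvalues of $\sigma^{-1}(L_0)$ in the original sense). Hence whenever $n \neq \tfrac{c}{24}$, the endomorphism $Q^2$ acts as the nonzero scalar $2(n-\tfrac{c}{24})$ on $\sigma^*(M)_n$, so $Q$ itself is invertible there.

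Finally, I would split $\sigma^*(M)_n = \sigma^*(M)^{\even}_n \oplus \sigma^*(M)^{\odd}_n$ and note that $Q|_{\sigma^*(M)_n}$ restricts to maps $\sigma^*(M)^{\even}_n \to \sigma^*(M)^{\odd}_n$ and $\sigma^*(M)^{\odd}_n \to \sigma^*(M)^{\even}_n$, each of which must be an isomorphism since $Q$ is invertible on $\sigma^*(M)_n$. Consequently $\dim \sigma^*(M)^{\even}_n = \dim \sigma^*(M)^{\odd}_n$, which gives $\operatorname{sdim}\sigma^*(M)_n = 0$ whenever $n - \tfrac{c}{24} \neq 0$. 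The only real subtlety, and what I would treat most carefully, is verifying that $Q$ genuinely interchanges the even and odd parts of $\sigma^*(M)$ — this uses the assumption in the excerpt that the parity on $V$ is given by $(-1)^{J_0}$ together with the relations \eqref{eqn:JGs}, which ensure the $G^{\pm,x}$-modes are odd and the spectral-flow twist does not disturb the $\ZZ/2\ZZ$-grading.
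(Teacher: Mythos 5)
Your proof is correct and follows exactly the paper's own argument, which is contained in the paragraph preceding the lemma: the odd, weight-zero operator $Q=G^{+,1}_{-\frac12}+G^{-,2}_{\frac12}$ squares to $2\sigma^{-1}\left(L_0-\frac{c}{24}\right)$, hence is invertible on each grade $n\neq\frac{c}{24}$ and identifies the even and odd subspaces there. The only addition you make is to spell out the parity bookkeeping via $(-1)^{J_0}$, which the paper leaves implicit.
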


\subsection{Free Fields}\label{sec:back:ff}

We now describe a free field realization of the small $\cN=4$ superconformal algebra at $c=6$. 
Consider the tensor product of the \vosa{} of four free fermions with the rank four Heisenberg \voa. An action of the compact Lie group $SU(2)$ is given by organizing both the fermions and bosons in the standard and conjugate representations of $SU(2)$. By Lemma 3.4 of \cite{CH} the sub \vosa{} of fixed points for this $SU(2)$ action contains the small $\cN=4$ superconformal algebra at $c=6$. So by using bosonization we obtain a free field realization of the small $\cN=4$ superconformal algebra at $c=6$ in terms of twelve free fermions. The precise expression for the latter can be found in \S2 of \cite{GTVW}. In terms of lattice \vosas{} this free field realization may be described as follows. 

The \vosa{} of four free fermions may be identified with the lattice \vosa{} associated to $\ZZ^2$. We have $\ZZ^2=D_2\cup \left(D_2+[2]\right)$ in the notation of \S\ref{sec:back:vosas}, and the exceptional isomorphism $D_2\cong A_1\oplus A_1$. So $V_{D_2}\cong V_{A_1}\otimes V_{A_1}$.
Four free fermions are thus the simple current extension of $V_{A_1}\otimes V_{A_1}$ by the unique simple current with dimension $\frac12$. The free field realization in terms of twelve free fermions can be inspected to be a sub \vosa{} of 
\begin{gather}
V_{D_2}^{\otimes 3} \oplus V_{D_2+[2]}^{\otimes 3},
\end{gather}
and we have just seen that this \vosa{} is isomorphic to 
\begin{gather}
V_{A_1}^{\otimes 6}\oplus V_{A_1+(1)}^{\otimes 6} 
\end{gather}
where $A_1+(1)$ denotes the unique non-trivial coset of $A_1$ in its dual.
For completeness we describe a precise realization. Consider odd fields $b_1, \dots, b_6$ and $c_1, \dots, c_6$ with operator products
\begin{gather}
b_i(z)c_j(w) \sim \frac{\delta_{i, j}}{(z-w)}, \qquad b_i(z)b_j(w) \sim c_i(z)c_j(w) \sim 0, 
\end{gather}
generating a copy $F(12)\cong V_{\ZZ^6}$ of the \vosa{} of $12$ free fermions. From the above we have $(V_{D_2}\oplus V_{D_2+[2]})^{\otimes 3}\cong V_{\ZZ^6}$.
Then $b_1, b_2, c_1, c_2$ span the first copy of $V_{D_2} \oplus V_{D_2+[2]}$, the $b_3, b_4, c_3, c_4$ span the second copy, and $b_5, b_6, c_5, c_6$ span the last one. 
The three fields
\begin{gather}
h=\,:\!b_1c_1\!:\,+\,:\!b_2c_2\!:\,, \qquad e=\,:\!b_1b_2\!:\,, \qquad f=\,:\!c_1c_2\!:\,,
\end{gather}
are all in the even sub \voa{} $V_{D_2}$ of the first copy of $V_{D_2} \oplus V_{D_2+[2]}$. These three fields strongly generate a \voa{} isomorphic to $L_1(\mathfrak{sl}_2)$.
The four fields
\begin{gather}
G^{+,1}= \,:\!b_1b_3b_5\!:\,, \quad G^{-,1}= \,:\!c_2b_3b_5\!:\,, \quad  G^{-,2}= \,:\!c_1c_3c_5\!:\,, \quad G^{+,2}= \,:\!b_2c_3c_5\!:\,,
\end{gather}
are all fields in $V_{D_2+[2]}^{\otimes 3}$. These seven fields together with the Virasoro field strongly generate a \vosa{} isomorphic to the small $\cN=4$ superconformal algebra at $c=6$.

\section{Self-Dual Vertex Operator Superalgebras}\label{sec:selfdual}

In this section we prove our first main result, which is a classification of self-dual $C_2$-cofinite vertex operator superalgebras of CFT type with central charge less than or equal to $12$. 

\begin{thm}\label{thm:selfdual-class}
If $W$ is a self-dual $C_2$-cofinite \vosa{} of CFT type with central charge $c\leq 12$ then either $W\cong F(n)$ for some $0\leq n\leq 24$, or $W\cong V_{E_8}\otimes F(n)$ for $0\leq n\leq 8$, or $W\cong V_{D_{12}^+}$.
\end{thm}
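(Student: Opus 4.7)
My plan is to extract a maximal free-fermion sub\vosa\ from $W$, thereby reducing the classification to one with no weight-$\tfrac12$ fields.

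\textbf{Step 1: Fermion extraction.} Set $n:=\dim W_{1/2}$. The product $(a,b)\mapsto a_{(0)}b$ takes $W_{1/2}\times W_{1/2}$ into $W_0=\CC\mathbf{v}$, defining a bilinear form $\langle\cdot,\cdot\rangle$ on $W_{1/2}$. Super skew-symmetry, applied to the residue of $Y(a,z)b=-e^{zL_{-1}}Y(b,-z)a$ (with higher modes vanishing by the CFT type hypothesis, since $b_{(m)}a$ has weight $-m$ for $a,b\in W_{1/2}$), yields $a_{(0)}b=b_{(0)}a$, so $\langle\cdot,\cdot\rangle$ is symmetric. Self-duality forces $W$ to be simple, whence $\langle\cdot,\cdot\rangle$ is non-degenerate: a radical vector would generate a proper nonzero ideal. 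The modes of $W_{1/2}$ thus realize the Clifford algebra on this quadratic space, giving a \vosa\ embedding $F(n)\hookrightarrow W$.

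\textbf{Step 2: Commutant splitting.} Let $C:=\operatorname{Com}(F(n),W)$. Since $F(n)$ is rational with itself as its unique irreducible module, the standard coset decomposition gives $W\cong F(n)\otimes C$ as \vosas, with $c_C=c_W-n/2$. The commutant $C$ inherits $C_2$-cofiniteness and CFT type from $W$, and is self-dual: any irreducible $C$-module $M$ yields an irreducible $W$-module $F(n)\otimes M$, which must be $W$, forcing $M\cong C$. By construction, $C_{1/2}=0$.

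\textbf{Step 3: Reduced classification.} It now suffices to show that a self-dual $C_2$-cofinite \vosa\ $C$ of CFT type with $c_C\leq 12$ and $C_{1/2}=0$ must be $\CC$, $V_{E_8}$, or $V_{D_{12}^+}$. If $C$ has trivial odd part, then $C$ is a holomorphic \voa; Zhu's modular invariance theorem \cite{Z} together with the classification of holomorphic \voas\ at small central charge force $c_C\in 8\ZZ$, so $c_C\in\{0,8\}$ with $C\cong\CC$ or $V_{E_8}$. If $C$ has non-trivial odd part, I plan to apply Theorem \ref{thm:DZmod} to organize $\ch^\pm[C]$ and $\ch^\pm[C_\tw]$ into a four-dimensional $\SL_2(\ZZ)$-module. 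The vanishing of the $q^{\frac12-c_C/24}$ coefficient of $\ch^+[C]$, combined with the $S$- and $T$-transformation laws supplied by Theorem \ref{thm:DZmod} and the positivity and integrality of graded dimensions, constrains the possibilities at each central charge. For $c_C\leq 12$ this forces $c_C=12$, and matching the resulting characters with the lattice uniqueness of $D_{12}^+$ (recorded in Section \ref{sec:back:vosas}) identifies $C\cong V_{D_{12}^+}$.

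\textbf{Main obstacle.} The heart of the argument is Step 3 in the odd-part case: a careful analysis of the vector-valued modular forms arising from Theorem \ref{thm:DZmod} is needed to rule out candidate central charges other than $0$, $8$, $12$, and to establish uniqueness at $c=12$. Both tasks rest on lattice-theoretic input and on the extension of Zhu's modularity theorem to \vosas\ provided by \cite{DZ}.
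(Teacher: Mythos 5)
Your Steps 1 and 2 are sound and in fact mirror the paper's own use of Goddard--Schwimmer fermion factoring: the paper likewise writes $W\cong U\otimes V$ with $U\cong F(d)$ generated by $W_{1/2}$ and $V$ its commutant. The genuine gap is Step 3, which is where essentially all of the content of the theorem lives and which you only sketch. Ruling out central charges $0<c_C<12$ (other than $8$) for the fermion-free commutant is not a routine consequence of Theorem \ref{thm:DZmod}: at a general half-integral $c_C$ the characters $\ch^{\pm}[C]$ and $\ch^{\pm}[C_\tw]$ are weight-zero forms for $\Gamma_\theta$ with a nontrivial multiplier depending on $c_C$, and controlling them requires positivity and growth input at both cusps that you have not supplied. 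The paper sidesteps this family of cases entirely by going in the opposite direction: it first proves $c\in\frac12\ZZ$ (via \cite{DLM2000} and H\"ohn's Satz 2.2.2) and then tensors \emph{up} with $F(24-2c)$ to land at $c=12$, where the character is completely pinned down as $\eta(\tau)^{24}/\eta(2\tau)^{24}+24+d$ with only the single parameter $d=\dim W_{1/2}$ remaining. (Your trivial-odd-part subcase also silently imports the Dong--Mason classification of holomorphic \voas{} at $c=8$, which the paper's route does not need.)

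More seriously, even once the character of $C$ is known, ``matching the resulting characters with the lattice uniqueness of $D_{12}^+$'' does not identify $C\cong V_{D_{12}^+}$: a \vosa{} is not determined by its character, and the uniqueness of the lattice $D_{12}^+$ says nothing until you know $C$ is a lattice \vosa{}. The paper's identification at $c=12$ goes through the Lie algebra structure on $W_1$: reductivity, a Lie rank bound of $12$, the weight-two trace $X(\tau)=\tr_W o(u_{[-1]}u')q^{L_0-\frac12}$ yielding the Killing-form identity $\kappa(u,u')=(44+2d)\langle u,u'\rangle$, integrality of affine levels from \cite{MR2219226}, and the resulting dual Coxeter number condition $h^\vee=(22+d)k$, which Table \ref{tab:mfgh} reduces to $(d,\mfg)\in\{(0,D_{12}),(8,E_8)\}$; only then does one conclude $V_{D_{12}}\subseteq C$ with equal central charge and invoke self-duality to get $C\cong V_{D_{12}^+}$. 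None of this machinery appears in your plan, and without it neither the exclusion of intermediate central charges nor the uniqueness at $c_C=12$ follows.
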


\begin{proof}
We first prove the claimed result for the special case that $c=12$. 
For this we require to show that a self-dual $C_2$-cofinite \vosa{} of CFT type with $c=12$ is isomorphic to 
one of $V_{D_{12}^+}$, $V_{E_8}\otimes F(8)$ or $F(24)$.

So let $W$ be a self-dual $C_2$-cofinite \vosa{} of CFT type with central charge $12$. We will constrain the possibilities for $W$ by extending the methods used in \S5.1 of \cite{Dun_VACo}. There, $V_{D_{12}^+}$ is characterized as the unique such \vosa{} with an $\cN=1$ superconformal structure and vanishing weight $\frac12$ subspace. We will also employ the arguments of \S4.2 of \cite{DM15}, in which the hypothesis of superconformal structure is removed.

We begin by applying Theorem \ref{thm:DZmod} to $W$. In this situation $I_0$ and $I_1$ are singletons. Let us set $I_k=\{k\}$, so that 
$M_0=W$ and $M_1=W_\tw$. Then taking $v$ to be the vacuum in (\ref{eqn:modtrW}) we obtain
that $Z_\NSNS^+(\tau):=\tr_Wq^{L_0-\frac12}$ is a weakly holomorphic modular form with character $\rho_{00}$ of weight $0$ for $\Gamma_\theta:=\lab S,T^2\rab$, where $S=\left(\begin{smallmatrix}0&-1\\1&0\end{smallmatrix}\right)$ and $T=\left(\begin{smallmatrix}1&1\\0&1\end{smallmatrix}\right)$. Further, $\rho_{00}$ is, a priori, trivial on $T^2$, and $\pm 1$ on $S$. This fact is the basis of the proof of Proposition 5.7 in \cite{Dun_VACo}, which shows that if $d:=\dim W_{\frac12}$ then 
\begin{gather}
\begin{split}\label{eqn:Z}
Z^+_\NSNS(\tau)&=\frac{\eta(\tau)^{24}}{\eta(2\tau)^{24}}+24+d\\
	&=q^{-\frac12}+d+276q^{\frac12}+O(q)
\end{split}
\end{gather}
so that in particular, $\rho_{00}$ 
is trivial on $\Gamma_\theta$.
Also, similar to Proposition 5.9 of \cite{Dun_VACo}, we find from taking $\gamma=TS$ in (\ref{eqn:modtrW}) that $(W_\tw)_n$ vanishes unless $n\geq \frac12$. For another 
application let $u,u'\in W_1$. Then taking $v=u_{[-1]}u'$ in (\ref{eqn:modtrW}), setting $X(\tau):=\tr_Wo(u_{[-1]}u')q^{L_0-\frac12}$ and using the triviality of $\rho_{00}$ on $\Gamma_\theta$ we find that $X(\tau)$ is a weakly holomorphic modular form of weight $2$ for $\Gamma_\theta$ satisfying $X(\tau)=Cq^{-\frac 12}+O(1)$ as $\tau\to i\infty$ for some $C\in \CC$. Note that $\Gamma_\theta$ has two cusps, represented by $i\infty$ and $1$. From (\ref{eqn:modtrW}) and the fact that $W_\tw$ has $L_0$-grading bounded below by $\frac12$ we see that $X(\tau)=O(1)$ as $\tau\to 1$. Since the space of modular forms of weight $2$ for $\Gamma_\theta$ is one-dimensional, spanned by $\theta_{D_4}(\frac{\tau+1}{2})=1-24q^{\frac12}+O(q)$ (cf. Theorem 7.1.6 in \cite{Rankin}), it follows that 
\begin{gather}
\begin{split}\label{eqn:X}
X(\tau)&=-2Cq\frac{{\rm d}}{{\rm d}q}Z_\NSNS^+(\tau)+D\theta_{D_4}(\tfrac{\tau+1}{2})\\
&=Cq^{-\frac12}+D+(-276C-24D)q^{\frac12}+O(q)
\end{split}
\end{gather} 
for some $C,D\in \CC$ (with $C$ as above), which we can expect will depend on $u$ and $u'$.

We now endeavour to connect $C$ and $D$ in (\ref{eqn:X}) to the Lie algebra structure on $W_1$. For this note that the first paragraph of the proof of Theorem 4.5 in \cite{DM15} applies to $W$, showing that $W$ admits a unique (up to scale) non-degenerate invariant bilinear form, which we henceforth denote $\lab\cdot\,,\cdot\rab$. 
We also have that $W_1$ is contained in the kernel of $L_1$, so by Theorem 1.1 of \cite{MR2097833} the Lie algebra structure on $W_1$ is reductive. 
Applying the argument of the proof of Theorem 5.12 in \cite{Dun_VACo} to $W$---this uses the identity (\ref{eqn:Z})---we see that the Lie rank of $W_1$ is bounded above by $12$.
We can identify a simple component of $W_1$ just by considering $d=\dim W_{\frac12}$. To do this note that the invariant bilinear form $\lab\cdot\,,\cdot\rab$ on $W$ is non-degenerate when restricted to $W_{\frac12}$. If we let $U$ denote the sub \vosa{} of $W$ generated by $W_{\frac12}$ then, arguing as in \cite{GS} (cf. also \S3 of \cite{Tam}), we see that $U$ is isomorphic to the Clifford algebra \vosa{} defined by the orthogonal space structure on $W_{\frac12}$, and if $V$ is the commutant of $U$ in $W$ then $W$ is naturally isomorphic to $U\otimes V$.
So $W_1$ contains $U_1$, which is the Lie algebra naturally associated to the orthogonal structure on $W_{\frac12}$. Since $W_1$ has Lie rank bounded above by $12$ we must have $d\leq 24$. Indeed, the case that $d=24$ is realized by $W=U=F(24)$. 

\begin{table}[!htbp]
    \caption{Dual Coxeter numbers of simple complex Lie algebras\label{tab:mfgh}}
    \medskip
\makebox[\linewidth]{
  \begin{tabular}{| c | c |}
    \cline{1-2}
     \multicolumn{1}{|c|}{$\mathfrak{g}$} & \multicolumn{1}{c|}{${h^\vee}$} \\ 
        \hline
    \toprule
    \hline
    $A_n$ & $n+1$ \\
    $B_n$ & $2n-1$ \\
    $C_n$ & $n+1$ \\
    $D_n$ & $2n-2$ \\
    $E_6$ & 12 \\
    $E_7$ & 18 \\
    $E_8$ & 30 \\
    $F_4$ & 9 \\
    $G_2$ & 4 \\
    \hline
    \end{tabular}
}
\end{table}

So suppose henceforth that $d<24$. Then $\dim U_1<\dim W_1=276$, and $V_1\neq \{0\}$. To further constrain $W_1$ we consider (\ref{eqn:X}) with $u,u'\in V_1$. 
Note that since $V_{\frac12}=\{0\}$ by construction we have $\tr_Wo(u)o(u')q^{L_0-\frac12}=\kappa(u,u')q^{\frac12}+O(q)$ where $\kappa$ is the Killing form on $W_1$. We have $u_{[1]}u'=\lab u,u'\rab\vv$ 
according to Lemma 5.1 of \cite{Dun_VACo}, so by an application of Sublemma 6.9 of \cite{DZ} we find that
\begin{gather}\label{eqn:AppZhuProp}
	\begin{split}
	X(\tau)&=
    \tr_Wo(u)o(u')q^{L_0-\frac12}-\frac1{12}{\langle u,u'\rangle}E_{2}(\tau)Z_\NSNS^+(\tau)\\
    &=-\frac1{12}\lab u,u'\rab q^{-\frac12}-\frac1{12}\lab u,u' \rab d+(\kappa(u,u')-21\lab u,u'\rab )q^{\frac12}+O(q)
    \end{split}
\end{gather}
where $E_2(\tau)=1-24\sum_{n>0}\frac{nq^n}{1-q^n}$ is the quasi-modular Eisenstein series of weight $2$, and $Z_\NSNS^+(\tau)$ is as in (\ref{eqn:Z}). Comparing (\ref{eqn:X}) with (\ref{eqn:AppZhuProp}) we find that $C=-\frac1{12}\lab u,u'\rab$, $D=-\frac1{12}\lab u,u'\rab d$, and 
\begin{gather}\label{eqn:kappad}
	\kappa(u,u')=(44+2d)\lab u,u'\rab.
\end{gather} 
In particular, the Killing form is non-degenerate on $V_1$, so the Lie algebra structure on $V_1$ is semisimple.
So let $\mfg$ be a simple component of $V_1$. From the main theorem of \cite{MR2219226} we know that the vertex operators attached to $V_1$ represent the affine Lie algebra associated to $\mfg$ with integral level; call it $k$. Then for $\alpha$ a long root of $\mfg$ we have $\kappa(\alpha,\alpha)=4h$ where $h$ is the dual Coxeter number of $\mfg$, and $\lab \alpha,\alpha\rab =2k$. So (\ref{eqn:kappad}) with $u=u'=\alpha$ yields $h=(22+d)k$. We are reduced to finding the pairs $(d,\mfg)$ where $d$ is an integer $0\leq d<24$, and $\mfg$ is a simple Lie algebra with Lie rank bounded above by $12-\frac12 d$ such that the dual Coxeter number $h^\vee$ of $\mfg$ is an integer multiple of $22+d$. 
Inspecting Table \ref{tab:mfgh} we see that 
either $d=0$ and $\mfg$ is of type $D_{12}$, or $d=8$ and $\mfg$ is of type $E_8$. The first of these is realized by $W=V_{D_{12}^+}$. The second is realized by $W=V_{E_8}\otimes F(8)$. Thus we have dealt with the special case that $c=12$.

To complete the proof let $W$ be a self-dual $C_2$-cofinite \vosa{} of CFT type with $c<12$. By applying Theorem 11.3 of \cite{DLM2000} to 
$W^\even$ we may conclude that $c$ is a rational number. Then the argument of Satz 2.2.2 of \cite{Hoehn2007} applies to $W$, and shows that $c\in \frac12\ZZ$. So $n=24-2c$ is a positive integer and $W':=W\otimes F(n)$ is a self-dual $C_2$-cofinite \vosa{} of CFT type with central charge $c'=12$. Since $n$ is positive $W'$ is one of $V_{E_8}\otimes F(8)$ or $F(24)$, by what we have already proved about self-dual \vosas{} with central charge $12$. The desired conclusion follows. 
\end{proof}

\section{Superconformal Field Theory}\label{sec:bscft}

\subsection{Potential Bulk Conformal Field Theory}\label{sec:bcft:pbcft}

The basic structure underlying a {\em bulk \cft{}} is a module $\mathcal H$ for a tensor product $V'\otimes V''$ of \voas{} $V'$ and $V''$. It is required that 
\begin{gather}\label{eqn:HVV}
\mathcal H = \bigoplus_{i} N'_i \otimes N''_i
\end{gather}
where the $N'_i$ and $N''_i$ are irreducible 
modules for $V'$ and $V''$, respectively. 
Also, $V'\otimes V''$ should appear exactly once as a summand of $\mathcal H$. 
A standard but very special case is that $V'= V''$ is rational and $C_2$-cofinite, and the $N'_i=N''_i$ are all the irreducible $V'$-modules. We call this the {\em diagonal} \cft{} for $V=V'=V''$. 

Various further properties are required of $\mathcal H$, including {\em closure under fusion} and {\em modular invariance}. Define 
\begin{gather}
\begin{split}
\chh[\cH](\tau',\tau''):&=\tr_\cH\left( {q'}^{L'_0-\frac{c'}{24}}{q''}^{L''_0-\frac{c''}{24}}\right)\\
&=\sum_i \ch[N_i'](\tau')\ch[N_i''](\tau'')
\end{split}
\end{gather}
(cf. (\ref{eqn:chhM})).
Modular invariance is the requirement that the {\em partition function}
\begin{gather}\label{eqn:ZH}
\begin{split}
Z_{\mathcal H}(\tau) :&= 
\chh[\cH](\tau,-\bar\tau)\\
&=\sum_{i} \ch[N'_i](\tau) \ch[N''_i](-\bar\tau) 
\end{split}
\end{gather}
be invariant for the natural action of $\SL_2(\ZZ)$. 
Closure under fusion is the requirement that operator products close on $\mathcal H$. So superficially at least, fusion and modularity force a bulk \cft{} to resemble a self-dual \voa. 
This is the motivation for the Main Question we formulated in \S\ref{sec:intro}.

We will answer the Main Question positively in what follows, using a certain convenient substitute for the notion of bulk \cft{}. Also, 
we will find that there are more examples if we allow the \voa{} in the question to be a vertex operator superalgebra.

To motivate 
our approach 
note that modular invariance is a 
strong constraint 
that is used in practice to classify possible examples of \cfts{}. There are additional properties of correlation functions that are harder to verify (cf. \cite{TW17,W-snap}), but from a representation category point of view 
these correlation requirements are satisfied by symmetric special Frobenius algebra objects in the modular tensor category of a suitably chosen \voa{} according to \cite{FRS}. In this work lattice \voas{} underly all examples, so all simple objects are simple currents, and if a symmetric special Frobenius algebra object $\mathcal A$ is a direct sum of inequivalent simple currents then there is a rather explicit prescription for the decomposition (\ref{eqn:HVV}); namely (5.85) of \cite{FRS}. For example, if we assume $V'\cong V''$ and choose $\mathcal A=V'$ 
then 
the partition function of the bulk is the charge conjugation invariant. In the cases we consider every irreducible $V'$-module 
will be invariant under charge conjugation, so the charge conjugation invariant will coincide with the ordinary diagonal 
modular invariant.
Motivated by this we introduce the following.
\begin{defn}\label{def:pbcft}
A {\em potential bulk \cft{}} is a $V'\otimes V''$-module $\mathcal H$ as in (\ref{eqn:HVV}) such that the partition function (\ref{eqn:ZH}) is modular invariant. 
\end{defn}

Now to formulate an answer to the Main Question, suppose that $W$ is a self-dual $C_2$-cofinite \vosa{} such that 
\begin{gather}\label{eqn:VAB}
W\cong \bigoplus_i N'_i\otimes N''_i
\end{gather}
as a $V'\otimes V''$-module, for $V'$ and $V''$ a commuting pair of rational $C_2$-cofinite sub \voas{}, where the $N'_i$ and $N''_i$ are 
irreducible modules for $V'$ and $V''$, respectively. Define 
 \begin{gather}\label{eqn:ZW}
 Z_{W}(\tau) := 
\chh^+[W](\tau,-\bar\tau)
\end{gather}
where $\chh^\pm[\,\cdot\,]$ is as in (\ref{eqn:chhM}).
\begin{prop}\label{prop:pbcft2cond}
With $W$ as in (\ref{eqn:VAB}), if the $S$-matrix of $V''$ is real 
and the eigenvalues of the action of 
$L_{0}'-L_{0}''$ on $W$ belong to $\ZZ+\frac1{24}(c'-c'')$
then $Z_W$ is modular invariant.
\end{prop}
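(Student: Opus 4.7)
The plan is to verify invariance of $Z_W$ under the two generators $T$ and $S$ of $\SL_2(\Z)$ separately. For $T$-invariance I would write
\begin{gather*}
Z_W(\tau+1) = \tr_W\!\left(e^{2\pi i\left(L'_0 - L''_0 - (c' - c'')/24\right)}\, q^{L'_0 - c'/24}\,\bar q^{L''_0 - c''/24}\right),
\end{gather*}
after which the eigenvalue hypothesis on $L'_0 - L''_0$ makes the exponential prefactor equal to $1$, giving $Z_W(\tau+1) = Z_W(\tau)$ immediately. This uses nothing beyond the grading condition.

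For $S$-invariance I would work via the decomposition of $W$ as a $V' \otimes V''$-module, rewriting (\ref{eqn:VAB}) as $W \cong \bigoplus_{a,b} Z_{ab}\, (N'_a \otimes N''_b)$, where $a,b$ now run over complete sets of isomorphism classes of irreducible $V'$- and $V''$-modules and the $Z_{ab} \in \Z_{\geq 0}$ are multiplicities. Since characters of modules for rational $C_2$-cofinite \voas{} have non-negative integer (in particular real) Fourier coefficients, we have $\ch[N''_b](-\bar\tau) = \overline{\ch[N''_b](\tau)}$, so
\begin{gather*}
Z_W(\tau) = \sum_{a,b} Z_{ab}\, \ch[N'_a](\tau)\, \overline{\ch[N''_b](\tau)}.
\end{gather*}
Applying the $S$-matrix transformations $\ch[N'_a](-1/\tau) = \sum_c S'_{ac}\ch[N'_c](\tau)$ and the analogous formula for $V''$, and then using the reality of $S''$ to replace $\overline{S''_{bd}}$ by $S''_{bd}$, a direct computation reduces $S$-invariance of $Z_W$ to the matrix identity $(S')^\top Z\, S'' = Z$ for the multiplicity matrix $Z = (Z_{ab})$.

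To establish this identity, I would invoke the \vosa{} analogue of Theorem~\ref{thm:KM} alluded to in the excerpt, applied to $W$ with $\omega = \omega' + \omega''$. Since $W$ is self-dual, a check of the index assignments in Theorem~\ref{thm:DZmod} (with $(k,\ell) = (0,0)$ and $\gamma = S$ yielding $(\tilde k, \tilde\ell) = (0,0)$) shows that the refined character transforms in the scalar form
\begin{gather*}
\chh^+[W](-1/\tau', -1/\tau'') = \rho_{00}(S)\, \chh^+[W](\tau', \tau''),
\end{gather*}
and expanding both sides in the decomposition basis while invoking linear independence of the products $\ch[N'_c](\tau')\,\ch[N''_d](\tau'')$ yields $(S')^\top Z\, S'' = \rho_{00}(S)\, Z$. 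The main remaining obstacle is to show $\rho_{00}(S) = 1$. For this I would specialize the functional equation above to the common $S$-fixed point $\tau' = \tau'' = i$: the value $\chh^+[W](i, i) = \tr_W e^{-2\pi(L_0 - c/24)}$ is a manifestly positive sum of positive real terms (all eigenspaces of $L_0$ are finite dimensional by $C_2$-cofiniteness and all weights contribute with non-negative multiplicity), hence is nonzero, forcing $\rho_{00}(S) = 1$. Combined with the reduction above, this gives $Z_W(-1/\tau) = Z_W(\tau)$ and completes the proof.
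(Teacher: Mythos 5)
Your proof follows essentially the same route as the paper's: $T$-invariance from the eigenvalue hypothesis, and $S$-invariance from the two-variable modularity of the refined character (the \vosa{} analogue of Theorem \ref{thm:KM}) together with reality of $S''$ and of the Fourier coefficients of characters. Two remarks. First, your explicit verification that $\rho_{00}(S)=1$ by evaluating at the fixed point $\tau'=\tau''=i$ is a genuine improvement in rigour: Theorem \ref{thm:DZmod} only gives $\ch^+[W](-1/\tau)=\rho_{00}(S)\,\ch^+[W](\tau)$ a priori, the paper simply asserts invariance, and your positivity argument (the trace $\tr_W e^{-2\pi(L_0-c/24)}$ is a convergent sum of positive terms) correctly pins down the scalar. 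Second, your derivation of the matrix identity $(S')^\top Z\, S''=\rho_{00}(S)\,Z$ leans on linear independence of the products $\ch[N'_c](\tau')\,\ch[N''_d](\tau'')$, which is not valid in general: inequivalent irreducible modules of a rational \voa{} can have identical characters (for instance a module and its contragredient), so these products need not be linearly independent. Fortunately the matrix identity is not needed for the conclusion. The two-variable functional identity
\[
\sum_{c,d}\bigl((S')^\top Z\, S''\bigr)_{cd}\,\ch[N'_c](\tau')\,\ch[N''_d](\tau'') \;=\; \rho_{00}(S)\sum_{c,d} Z_{cd}\,\ch[N'_c](\tau')\,\ch[N''_d](\tau'')
\]
holds for all independent $\tau',\tau''\in\H$; once $\rho_{00}(S)=1$ is known, specializing it at $(\tau',\tau'')=(\tau,-\bar\tau)$ (note $-\bar\tau\in\H$) gives $Z_W(-1/\tau)=Z_W(\tau)$ directly, which is exactly the paper's computation. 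With that one step rerouted around the linear-independence claim, your argument is complete.
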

\begin{proof}
Since $W$ is self-dual Theorem \ref{thm:DZmod} implies that $\ch^+[W]$ (cf. (\ref{eqn:chpmMmod})) is invariant under $S$. It follows from Theorem \ref{thm:KM} then that $\chh^+[W](\tau', \tau'')$ is $S$-invariant as well. In particular we have
\begin{gather}
\begin{split}\label{eqn:chhWSS}
\chh^+[W]\left(-\frac{1}{\tau'}, -\frac{1}{\tau''}\right) 
&=\sum_{i, j, \ell} S'_{i j} \ch^+[N_j'](\tau') S''_{i \ell} \ch^+[N_\ell''](\tau'')\\
&= \sum_{i}  \ch^+[N_i'](\tau')  \ch^+[N_i''](\tau'')
\end{split}
\end{gather}
for suitable matrices $S'$ and $S''$. 
 Consider the action of $S$ on $Z_W(\tau)$. 
Using 
(\ref{eqn:chhWSS}) and the hypothesis that the entries of $S''$ are real we compute
 \begin{gather}
 \begin{split}
Z_W\left(-\frac1\tau\right)
 &= \sum_i \ch^+[N_i']\left(-\frac{1}{\tau}\right)  \overline{\ch^+[N_i'']\left(-\frac{1}{\tau}\right)}\\
&= \sum_{i, j, \ell} S'_{i, j} \ch^+[N_j'](\tau) \overline{S''_{i, \ell} \ch^+[N_\ell''](\tau)}\\
&= \sum_{i, j, \ell} S'_{i, j} \ch^+[N_j'](\tau) S''_{i, \ell} \ch^+[N_\ell''](-\bar\tau)\\
&= \sum_{i}  \ch^+[N_i'](\tau)  \ch^+[N_i''](-\bar\tau)\\
&=
Z_W(\tau).
  \end{split}
  \end{gather} 
So $Z_W$ is $S$-invariant. Invariance under $T$ follows from
\begin{gather}
\begin{split}
Z_W(\tau+1)&=
\tr_W\left( {e}^{2\pi i (\tau+1)(L'_0-\frac{c'}{24})}{e}^{-2\pi i (\bar \tau +1)( L''_0-\frac{c''}{24})}\right)
\\
&= e^{\frac{2\pi i}{24} (c''-c')}\tr_W\left(e^{2\pi i (L_0'-L_0'')} {q}^{L'_0-\frac{c'}{24}}{\bar q}^{L''_0-\frac{c''}{24}}\right)
\end{split}
\end{gather}
and our hypothessis on the eigenvalues of $L_0'$ and $L_0''$.
This completes the proof.
\end{proof}

Proposition \ref{prop:pbcft2cond} gives a path to answering the Main Question positively, in the sense that if its hypotheses are satisfied then it identifies a self-dual \vosa{} $W$ with a potential bulk \cft{} $\mathcal H$ as a module for the underlying \voa{} $V'\otimes V''$. The two conditions of Proposition \ref{prop:pbcft2cond} are strong, but are satisfied in interesting cases as we will see in \S\ref{sec:egs}.

\subsection{Potential Bulk Superconformal Field Theory}\label{sec:bscft:pbscft}

We are also interested in relating self-dual \vosas{} to \scfts{} in this work. To define a supersymmetric counterpart to the notion of potential bulk \cft{} we consider a $V'\otimes V''$-module 
\begin{gather}\label{eqn:HVVsup}
\mathcal H = \bigoplus_{i} N'_i \otimes N''_i
\end{gather}
as in (\ref{eqn:HVV}), but allow $V'$ and $V''$ to be \vosas{}, and allow the $N'_i$ in (\ref{eqn:HVV}) to be irreducible untwisted or canonically twisted modules for $V'$, and similarly for the $N''_i$. (Strictly speaking, $\cH$ is a module for the even sub \voa{} of $V'\otimes V''$.) Following tradition we use subscripts $\text{NS}$ and $\text{R}$ to indicate restrictions to untwisted and canonically twisted modules for $V'$ and $V''$. So, 
\begin{gather}
\mathcal H_\NSNS := \bigoplus_{\substack{i\\ \text{$N'_i$ untwisted}\\ \text{$N''_i$ untwisted} }} N'_i \otimes N''_i ,\\
\mathcal H_{\text{NS-R}} := \bigoplus_{\substack{i\\ \text{$N'_i$ untwisted}\\ \text{$N''_i$ twisted} }} N'_i \otimes N''_i ,
\end{gather}
and so on. We call $\cH_\NSNS$ the $\NSNS$ sector, \&c.
We also assume that $\cH$ is equipped with a compatible superspace structure $\cH=\cH^\even\oplus \cH^\odd$, so that $\cH$ is graded by $(\Z/2\Z)^3$, 
\begin{gather}
\mathcal H = 
\mathcal H^\text{even}_\NSNS \oplus \mathcal H^\text{odd}_\NSNS\oplus \mathcal H^\text{even}_{\text{NS-R}} \oplus \dots \oplus \cH^\text{even}_{\text{R-R}}\oplus\cH^\odd_\RR.
\end{gather}
We may regard a bulk \cft{} as a bulk \scft{} in which only the even part of the $\NSNS$ sector is non-zero.
From this point of view it is natural to expect examples in which the $\NSNS$ sector of a \scft{} is identified with a self-dual \vosa{}, and the $\RR$ sector is identified with its canonically twisted module. 
As such, the prescription (\ref{eqn:ZH}) does not usually define a modular invariant function when the superspace structure is non-trivial. Rather, Theorem \ref{thm:DZmod} indicates that we should consider the vector-valued function
\begin{gather}\label{eqn:ZcHsup}
	Z_\cH(\tau):=
	\begin{pmatrix} 
	Z^+_\NSNS(\tau) \\
	Z^-_\NSNS(\tau) \\
	Z^+_\RR(\tau)  \\
	Z^-_\RR(\tau)
	\end{pmatrix}
\end{gather}
where $Z^\pm_\text{X-Y}$ is defined by setting
	$Z^\pm_\text{X-Y}(\tau):=
	\chh^\pm[\cH_\text{X-Y}](\tau,-\bar\tau)$
(cf. (\ref{eqn:chhM})) for $\text{X}, \text{Y}\in\{\text{NS},\text{R}\}$. 
Then modularity for $Z_\cH$ is the requirement that 
\begin{gather}\label{eq:ZcHmodsup}
\begin{split}
{\bf S}\cdot
{Z}_\cH\left(-\frac{1}{\tau}\right)
&=
{Z}_\cH\left(\tau\right),
\\
{\bf T}\cdot
{Z}_\cH\left(\tau+1\right)&=
{Z}_\cH\left(\tau\right),
\end{split}
\end{gather}
where 
\begin{gather}\label{eqn:STsup}
{\bf S}:=
\begin{pmatrix} 
	1 & 0 & 0 & 0 \\
	0 & 0 & 1 & 0 \\
	0 & 1 & 0 & 0 \\
	0 & 0 & 0 & 1 
	\end{pmatrix},\quad
{\bf T}:=
\begin{pmatrix} 
	0 & 1 & 0 & 0 \\
	1 & 0 & 0 & 0 \\
	0 & 0 & 1 & 0 \\
	0 & 0 & 0 & 1 
	\end{pmatrix}.
\end{gather}
Motivated by this we formulate the following super-analogue of Definition \ref{def:pbcft}.
\begin{defn}\label{def:pbscft}
A {\em potential bulk \scft{}} is a $V'\otimes V''$-module $\mathcal H$ as in (\ref{eqn:HVVsup}) whose partition function (\ref{eqn:ZcHsup}) satisfies (\ref{eq:ZcHmodsup}).
\end{defn}

\begin{rem}
For bulk \scfts{} 
we expect that correlation function requirements are encoded by a suitably formulated notion of symmetric special Frobenius {\em superalgebra} object. 
It would be interesting to generalize \cite{FRS} to the super setting, and determine whether the examples we describe in this work are compatible or not. Superalgebra objects are introduced in the context of vertex algebra tensor categories in \cite{CKL, CKM}.
\end{rem}

To formulate a supersymmetric counterpart to Proposition \ref{prop:pbcft2cond} consider a self-dual $C_2$-cofinite \vosa{} $W$ such that 
\begin{gather}\label{eqn:WNNsup}
W\cong \bigoplus_i N'_i\otimes N''_i
\end{gather}
as a $V'\otimes V''$-module, for $V'$ and $V''$ a commuting pair of rational $C_2$-cofinite sub \vosas{}, where the $N'_i$ and $N''_i$ are irreducible (untwisted) modules for $V'$ and $V''$, respectively. Write $W_\tw$ for the unique irreducible canonically twisted $W$-module (cf. \S\ref{sec:back:vosas}) and chose a superspace structure $W_\tw=W_\tw^\even\oplus W_\tw^\odd$ that is compatible with the superspace structure on $W$. 
We then define $Z_W(\tau)$ in analogy with (\ref{eqn:ZcHsup}), so that
\begin{gather}\label{eqn:ZWsup}
	Z_W(\tau):=
	\begin{pmatrix} 
	Z^+(\tau) \\
	Z^-(\tau) \\
	Z^+_\tw(\tau)  \\
	Z^-_\tw(\tau)
	\end{pmatrix}
\end{gather}
where 
	$Z^\pm(\tau):=
	\chh^\pm[W](\tau,-\bar\tau)$ and
	$Z^\pm_\tw(\tau):=
	\chh^\pm[W_\tw^\even](\tau,-\bar\tau)$.
The proof of the next result follows in a directly similar way to that of Proposition \ref{prop:pbcft2cond}.
\begin{prop}\label{prop:pbscftconds}
With $W$ as in (\ref{eqn:WNNsup}), if the $S$-matrix of $V''$ is real, 
the eigenvalues of $L'_0-L''_0$ on $W^\even$ lie in $\ZZ+\frac1{24}(c'-c'')$, and the eigenvalues of $L'_0-L''_0$ on $W^\odd$ lie in $\ZZ+\frac12+\frac1{24}(c'-c'')$ then $Z_W$ satisfies the modularity condition (\ref{eq:ZcHmodsup}).
\end{prop}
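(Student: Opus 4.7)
The plan is to verify the eight scalar identities packaged by (\ref{eq:ZcHmodsup})---four from the $\mathbf{S}$-equation and four from the $\mathbf{T}$-equation---by adapting the componentwise argument used in the proof of Proposition \ref{prop:pbcft2cond} to each entry of $Z_W(\tau)$. The core tool is Theorem \ref{thm:DZmod} applied to the self-dual \vosa{} $W$, whose irreducible untwisted and canonically twisted modules are exhausted by $W$ and $W_\tw$, together with the \vosa{} analogue of Theorem \ref{thm:KM}, which transports the modular transformations of $\ch^\pm[W]$ and $\ch^\pm[W_\tw]$ to the refined bicharacters $\chh^\pm[W]$ and $\chh^\pm[W_\tw]$.

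For the $\mathbf{T}$-equation I would compute $Z_W(\tau+1)$ directly. Substituting $\tau\mapsto\tau+1$ (so that $-\bar\tau\mapsto -\bar\tau-1$) into the trace defining $Z^\pm$ introduces the operator factor $e^{2\pi i(L'_0-L''_0)}e^{-2\pi i(c'-c'')/24}$; by the hypothesis on the $L'_0-L''_0$ eigenvalues, this factor acts as $+1$ on $W^\even$ and as $-1$ on $W^\odd$, so $Z^\pm(\tau+1)=Z^\mp(\tau)$, matching the first two components of $\mathbf{T}\,Z_W(\tau+1)=Z_W(\tau)$. The same substitution shows that $Z^\pm_\tw(\tau+1)$ differs from $Z^\pm_\tw(\tau)$ by the analogous factor acting on $W_\tw$; its triviality on $W_\tw^\even$ (and $-1$-action on $W_\tw^\odd$) is to be deduced from the scalar $T$-action on $\ch^\pm[W_\tw]$ furnished by Theorem \ref{thm:DZmod} with $i\in I_1$ (refined via the \vosa{} version of Theorem \ref{thm:KM}), combined with the eigenvalue hypotheses on $W$, which pin down the relevant phases.

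For the $\mathbf{S}$-equation I would adapt Proposition \ref{prop:pbcft2cond} line by line. Theorem \ref{thm:DZmod} at $\gamma=S$ yields, via the indices $\tilde k$ and $\tilde\ell$, that $S$ sends $\ch^+[W]\to\ch^+[W]$, $\ch^-[W]\to\ch^+[W_\tw]$, $\ch^+[W_\tw]\to\ch^-[W]$, and $\ch^-[W_\tw]\to\ch^-[W_\tw]$, with scalar coefficients that are fixed to $1$ by the self-duality of $W$. Promoting these four identities to the bicharacters via the \vosa{} analogue of Theorem \ref{thm:KM}, then setting $\tau'=\tau$, $\tau''=-\bar\tau$, using the reality of the $V''$ $S$-matrix to replace $\ch^\pm[N''_\ell](-1/\bar\tau)$ by $\ch^\pm[N''_\ell](-\bar\tau)$ under complex conjugation, and finally collapsing the double sum through unitarity exactly as in the displayed computation in the proof of Proposition \ref{prop:pbcft2cond}, produces each of the four components of $\mathbf{S}\,Z_W(-1/\tau)=Z_W(\tau)$.

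I expect the principal obstacle to lie in controlling the scalars in the $SL_2(\ZZ)$-transformation of the twisted characters: one must argue that the $T$-eigenvalue of $\chh^\pm[W_\tw]$ under the asymmetric shift $(\tau',\tau'')\mapsto(\tau'+1,\tau''-1)$ is exactly $\pm 1$, which amounts to the statement that the eigenvalues of $L'_0-L''_0$ on $W_\tw^\even$ lie in $\ZZ+\tfrac{1}{24}(c'-c'')$ and those on $W_\tw^\odd$ in $\ZZ+\tfrac12+\tfrac{1}{24}(c'-c'')$. The cleanest route will be to extract this from the $(-1)^F$-stability of the canonically twisted module combined with the hypothesized eigenvalue structure on $W$, using consistency of the full four-dimensional $SL_2(\ZZ)$-representation afforded by Theorem \ref{thm:DZmod}.
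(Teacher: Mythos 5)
Your proposal is correct and follows essentially the same route as the paper, which gives no details beyond asserting that the argument is ``directly similar'' to that of Proposition \ref{prop:pbcft2cond}: the $\mathbf{S}$-equation via Theorem \ref{thm:DZmod} and the \vosa{} analogue of Theorem \ref{thm:KM} together with reality of the $V''$ $S$-matrix, and the $\mathbf{T}$-equation via the eigenvalue hypotheses on $L_0'-L_0''$. Your identification of the twisted-sector $T$-transformation as the one point needing extra care (since the hypotheses constrain $L_0'-L_0''$ only on $W$, not on $W_\tw$) is a fair observation that the paper's one-line proof does not address, and your proposed resolution via the $S$-conjugacy and consistency of the four-dimensional representation is the natural way to close it.
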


Thus a \vosa{} $W$ satisfying the hypotheses of Proposition \ref{prop:pbscftconds} also answers the Main Question positively, in the sense that it is identified with the $\NSNS$ sector of a potential bulk \scft{} as a module for the underlying \vosa{} $V'\otimes V''$, and similarly for $W_\tw$ and the $\RR$ sector. 
Note that reality of the modular $S$-matrix holds for $L_k(\mathfrak{sl}_2)$ when $k$ is positive and integral, and holds also for minimal models of the Virasoro algebra. Inspecting Weil's description of the modular group action on theta functions for cosets of an even lattice $L$ in its dual $L^*$, we see that if the inner products on $L^*$ are contained in $\frac12\Z$ then the $S$-matrix for the lattice \voa{} $V_L$ is also real. This holds in particular for $L=D_{4n}$, which will play a prominent role in what follows.

\subsection{Superconformal structure}\label{sec:bscft:supstruc}

Superconformal field theories are usually assumed to come equipped with supersymmetry. 
With $\cH$ as in (\ref{eqn:HVVsup}) we define an {\em $\cN=(2,2)$ superconformal structure} 
to be an identification of sub \vosas{} of $V'$ and $V''$ with vacuum modules for the $\cN=2$ superconformal algebra. 
In this situation it is natural to consider
\begin{gather}\label{eqn:ZcHusup}
	Z_\cH(u,\tau):=
	\begin{pmatrix} 
	Z^+_\NSNS(u,\tau) \\
	Z^-_\NSNS(u,\tau) \\
	Z^+_\RR(u,\tau)  \\
	Z^-_\RR(u,\tau)
	\end{pmatrix}
\end{gather}
where 
	$Z^\pm_\text{X-Y}(u,\tau):=
	\chh^\pm[\cH_\text{X-Y}](u,\tau,-\bar u,-\bar\tau)$
and
\begin{gather}\label{eqn:chhMsup}
	\chh^\pm[M](u',\tau',u'',\tau''):=
	\tr_M\left((\pm 1)^F {z'}^{J'_0}  {q'}^{L'_0-\frac{c'}{24}} {z''}^{J''_0}{q''}^{L''_0-\frac{ c''}{24}} \right).
\end{gather}
Here $z'=e^{2\pi i u'}$ and $q'=e^{2\pi i \tau'}$, \&c. 
We also have the {\em elliptic genus} of $\cH$, defined by setting
\begin{gather}
\EG_\cH(u,\tau):=
	\chh^-[\cH_\RR](u,\tau,0,-\bar\tau).
\end{gather} 

Now the modularity conditions on $\cH$ are richer. For simplicity we describe them in the special 
case that $c'=c''$. Then the natural counterpart to (\ref{eq:ZcHmodsup}) is 
\begin{gather}\label{eq:bulkmodular}
\begin{split}
e^{-\pi i \frac c6 \left(\frac{u^2}{\tau}-\frac{\bar u^2}{\bar\tau}\right)}
{\bf S}\cdot
{Z}_\cH\left(\frac{u}\tau,-\frac{1}{\tau}\right)
&=
{Z}_\cH\left(u,\tau\right),
\\
{\bf T}\cdot
{Z}_\cH\left(u,\tau+1\right)&=
{Z}_\cH\left(u,\tau\right),
\end{split}
\end{gather}
where $c=c'=c''$, and ${\bf S}$ and ${\bf T}$ are as in (\ref{eqn:STsup}). Call this modularity for $Z_\cH$. If $\cH$ is a \scft{} underlying a sigma model with Calabi--Yau target space $X$ then $\EG_\cH$ is also modular, in the sense that we have
\begin{gather}
\begin{split}\label{eqn:EGcHmod}
e^{-\pi i  d \frac{u^2}{\tau}}
\EG_\cH\left(\frac{u}\tau,-\frac{1}{\tau}\right)
&=
\EG_\cH\left(u,\tau\right),
\\
\EG_\cH\left(u,\tau+1\right)&=
\EG_\cH\left(u,\tau\right),
\end{split}
\end{gather}
where $d=\dim_\CC X$, and $\tau\mapsto \EG_\cH(u,\tau)$ remains bounded as $\tau\to i\infty$, for any fixed $u\in\CC$. That is, $\EG_\cH$ is a {\em weak Jacobi form} of weight $0$ and index $\frac12\dim_\CC X$. In this situation $\EG_\cH(0,\tau)$ is the Euler characteristic of $X$. (In particular, $\EG_\cH(0,\tau)$ is a constant function of $\tau$.) Let us say that $\cH$ is {\em elliptic} if $\EG_\cH$ satisfies (\ref{eqn:EGcHmod}).  
Say that $\cH$ satisfies {\em spectral flow symmetry} if 
\begin{gather}\label{eqn:specflowsym}
Z^\pm_{\RR}(u', \tau', u'',\tau'') =  (z'z'')^{\frac {c}{6}}(q' q'')^{\frac {c}{24}} Z^\pm_{\NSNS}\left(u'+\tfrac12{\tau'}, \tau', u''+\tfrac12{\tau''},\tau''\right)
\end{gather}
where $c=c'=c''$. 

At this point it is natural to define a {\em potential bulk $\cN=(2,2)$ \scft{}} to be a potential bulk \scft{} $\cH$ (cf. Definition \ref{def:pbscft}) with $\cN=(2,2)$ superconformal structure such that (\ref{eq:bulkmodular}), (\ref{eqn:EGcHmod}) and (\ref{eqn:specflowsym}) are satisfied. These are strict requirements. Interestingly, we will see in examples that the extra superconformal structure allows us to weaken these requirements in what seems to be a useful way. In anticipation of this we offer the following.
\begin{defn}\label{def:qpb22scft}
Say that $\cH$ as in (\ref{eqn:HVVsup}) is a {\em quasi potential bulk $\cN=(2,2)$ \scft{}} if $\cH$ is elliptic (\ref{eqn:EGcHmod}), satisfies spectral flow symmetry (\ref{eqn:specflowsym}), and if $Z_\cH$ is modular (\ref{eq:bulkmodular}) for some finite index subgroup of the modular group.
\end{defn}
So for the notion of quasi potential bulk $\cN=(2,2)$ \scft{} we relax condition (\ref{eq:bulkmodular}), which is the invariance of $Z_\cH$ under the action of $\SL_2(\ZZ)$ defined by the left hand sides of (\ref{eq:bulkmodular}), and require invariance only for some subgroup $\Gamma<\SL_2(\ZZ)$ such that the coset space $\Gamma\backslash\SL_2(\ZZ)$ is finite.

We may also be interested in the case that
both $V'$ and $V''$ contain the small $\cN=4$ superconformal algebra at some central charge $c$. In this situation we call $\cH$ a {\em potential bulk $\cN=(4,4)$ \scft{}} if (\ref{eq:bulkmodular}), (\ref{eqn:EGcHmod}) and (\ref{eqn:specflowsym}) hold, and if in addition $c=6k$ for $k$ a positive integer, and spectral flow for each of $V'$ and $V''$ is realized by fusion with the order two simple current of $L_k(\mathfrak{sl}_2)$ (as discussed in \S\ref{sec:back:sca}). For the notion of {\em quasi potential bulk $\cN=(4,4)$ \scft{}} we relax the requirement of modularity (\ref{eq:bulkmodular}) of $Z_\cH$ from the modular group to some finite index subgroup.

In order to present a counterpart to Proposition \ref{prop:pbscftconds} we now consider a self-dual $C_2$-cofinite \vosa{} $W$ with a decomposition as in (\ref{eqn:WNNsup}) such that 
$V'$ and $V''$ contain copies of the vacuum module for the $\cN=2$ superconformal algebra at some central charge $c=c'=c''$.
We then define $Z_W(u,\tau)$ in analogy with (\ref{eqn:ZcHusup}), setting
\begin{gather}\label{eqn:ZWsupN2}
	Z_W(u,\tau):=
	\begin{pmatrix} 
	Z^+(u,\tau) \\
	Z^-(u,\tau) \\
	Z^+_\tw(u,\tau)  \\
	Z^-_\tw(u,\tau)
	\end{pmatrix}
\end{gather}
where 
	$Z^\pm(u,\tau):=
	\chh^\pm[W](u,\tau,-\bar u,-\bar\tau)$ and
	$Z^\pm_\tw(u,\tau):=
	\chh^\pm[W_\tw^\even](u,\tau,-\bar u,-\bar\tau)$,
and $\chh^\pm[\,\cdot\,]$ is as in (\ref{eqn:chhMsup}).
The proof of the next result is similar to the proofs of Propositions \ref{prop:pbcft2cond} and \ref{prop:pbscftconds}, but note the restriction that $c'=c''$.
\begin{prop}\label{prop:pb22scftconds}
Suppose that $W$ is as in (\ref{eqn:WNNsup}), and $V'$ and $V''$ contain the vacuum module for the $\cN=2$ superconformal algebra at some central charge $c=c'=c''$. 
If $L'_0-L''_0$ acts on $W^\even$ and $W_\tw$ with eigenvalues in $\ZZ$, and on $W^\odd$ with eigenvalues in $\ZZ+\frac12$, and if the $S$-matrices of $V'$ and $V''$ are both real, 
then $Z_W$ satisfies the modularity and spectral flow symmetry conditions, (\ref{eq:bulkmodular}) and (\ref{eqn:specflowsym}).
\end{prop}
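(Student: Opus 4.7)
The plan is to adapt the argument of Proposition \ref{prop:pbscftconds} to the refined setting, with the extra $u$-variable coming from the $U(1)$-current in the $\mathcal{N}=2$ structure. There are three things to establish: $S$-invariance and $T$-invariance in the sense of (\ref{eq:bulkmodular}), and spectral flow symmetry (\ref{eqn:specflowsym}). The key inputs are: (i) a refinement of Theorem \ref{thm:KM} (extending \cite{KMi} to \vosas{} via \cite{DZ}) in which the trace carries an additional $z^{J_0}$ insertion for $J_0$ the zero mode of the $U(1)$-current in the $\mathcal{N}=2$ subalgebra of $V'$ (and likewise for $V''$), (ii) the reality of the $S$-matrices $S'$ and $S''$, (iii) the hypothesis on the eigenvalues of $L'_0-L''_0$, and (iv) the standard NS-to-R identification of $\mathcal{N}=2$ theories implemented by half-unit spectral flow.

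For the $S$-transformation, the refined modularity yields that the joint refined characters $\chh^\pm[W](u',\tau',u'',\tau'')$ and $\chh^\pm[W_\tw^\even](u',\tau',u'',\tau'')$ transform under the diagonal $S$-action with the standard $\mathcal{N}=2$ Jacobi anomaly $e^{\pi i (c/6)(u'^2/\tau'+u''^2/\tau'')}$, one copy for each factor, and with $S$-matrix entries that factor as $S'_{ij}S''_{i\ell}$. Using the decomposition $W=\bigoplus_i N'_i\otimes N''_i$, the self-duality of $W$, and the reality of $S'$ and $S''$, the same cancellation of $S$-matrix entries against their conjugates that appears in the proof of Proposition \ref{prop:pbcft2cond} produces
\[
e^{-\pi i (c/6)(u^2/\tau-\bar u^2/\bar\tau)}\,\mathbf{S}\cdot Z_W(u/\tau,-1/\tau)=Z_W(u,\tau),
\]
once one specializes to $u''=-\bar u$, $\tau''=-\bar\tau$.

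The $T$-part is settled as in Propositions \ref{prop:pbcft2cond} and \ref{prop:pbscftconds}: the hypothesis that $L'_0-L''_0$ has integer eigenvalues on $W^\even$ and $W_\tw$ and half-integer eigenvalues on $W^\odd$ translates directly into the statement that the four components $Z^\pm_\NSNS,Z^\pm_\RR$ intertwine under $\tau\mapsto\tau+1$ via the permutation $\mathbf{T}$ of (\ref{eqn:STsup}), the swap between $Z^+_\NSNS$ and $Z^-_\NSNS$ being induced by the $(-1)^F$ eigenvalue flip on $W^\odd$. For spectral flow symmetry, I would appeal to the fact that in any \vosa{} containing $\mathcal{N}=2$ the half-unit spectral flow automorphism $\sigma^{1/2}$ of the $\mathcal{N}=2$ mode algebra implements the classical NS-R isomorphism; applied simultaneously in the $V'$ and $V''$ factors it carries $W$ to the unique (by Theorem 8.7 of \cite{DZ}) irreducible $(-1)^F$-stable canonically twisted module $W_\tw$, giving the character identity
\[
\chh^\pm[W_\tw^\even](u',\tau',u'',\tau'')=(z'z'')^{c/6}(q'q'')^{c/24}\,\chh^\pm[W](u'+\tfrac12\tau',\tau',u''+\tfrac12\tau'',\tau''),
\]
from which (\ref{eqn:specflowsym}) follows after the usual specialization $u''=-\bar u$, $\tau''=-\bar\tau$.

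The main obstacle is to justify the two extensions that go beyond what is explicitly proved earlier in the paper: first, that the Krauel--Miyamoto argument for Theorem \ref{thm:KM} carries over to the \vosa{} setting with an insertion of $z^{J_0}$ for $J_0$ an abelian primary of conformal weight one (this requires re-examining the Zhu-algebra lift to include this current, which is routine but tedious); and second, that the simultaneous half-unit spectral flow of $W$ in its two $\mathcal{N}=2$ structures is isomorphic to $W_\tw$ as a $V'\otimes V''$-module. The latter is where one must use the decomposition (\ref{eqn:WNNsup}) and the reality/self-duality assumptions to match the spectral-flowed modules against the unique canonically twisted module; everything else is a direct translation of the arguments already present in the proofs of Propositions \ref{prop:pbcft2cond} and \ref{prop:pbscftconds}.
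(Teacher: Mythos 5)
Your proposal is correct and follows essentially the same route the paper intends: the authors give no details beyond remarking that the argument is ``similar to the proofs of Propositions \ref{prop:pbcft2cond} and \ref{prop:pbscftconds},'' and your refined $S$/$T$ computation with the $z^{J_0}$ insertion together with the identification of the simultaneous half-unit spectral flow of $W$ with the unique canonically twisted module $W_\tw$ are exactly the ingredients that remark leaves implicit. The two ``obstacles'' you flag (the Jacobi-refined Krauel--Miyamoto statement and the spectral-flow identification) are genuine but are likewise taken for granted in the paper, so they do not mark a divergence from its approach.
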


Define the {\em elliptic genus} of $W$ by setting
\begin{gather}\label{eqn:EGW}
\EG_W(u,\tau):=
	\chh^-[W_\tw](u,\tau,0,-\bar\tau).
\end{gather}
We will presently see examples $W$ for which Proposition \ref{prop:pb22scftconds} fails but spectral flow symmetry holds and $\EG_W$ is a weak Jacobi form of weight $0$ and some index.

\section{Examples}\label{sec:egs}

\subsection{Type D Conformal Field Theory}\label{sec:egs:typeDcft}

Let $n$ be a positive integer. The bulk of the diagonal \cft{} wth $V'=V''=V_{D_{2n}}$ is
\begin{gather}
	\cH = \bigoplus_{i=0}^3 V_{D_{2n}+[i]}\otimes V_{D_{2n}+[i]}.
\end{gather}
Observe that we may embed $D_{2n}\oplus D_{2n}$ in $D_{4n}$ by taking the first copy of $D_{2n}$ to be the vectors in $D_{4n}$ supported on the first $2n$ components, and letting the second copy be its orthogonal complement. Then for the cosets of $D_{4n}$ in its dual we have
\begin{gather}\label{eqn:D4ntoD2nD2n}
\begin{split}
D_{4n}+[0]&=\left(D_{2n}+[0], D_{2n}+[0]\right) \cup \left(D_{2n}+[2], D_{2n}+[2]\right), \\
D_{4n}+[1]&=\left(D_{2n}+[1], D_{2n}+[1]\right) \cup \left(D_{2n}+[3], D_{2n}+[3]\right), \\
D_{4n}+[2]&=\left(D_{2n}+[0], D_{2n}+[2]\right) \cup \left(D_{2n}+[2], D_{2n}+[0]\right), \\
D_{4n}+[3]&=\left(D_{2n}+[1], D_{2n}+[3]\right) \cup \left(D_{2n}+[3], D_{2n}+[1]\right). \\
\end{split}
\end{gather}
From this we immediately obtain the following result.
\begin{prop}\label{prop:VD2ncft}
For $n$ a positive integer, the bulk of the diagonal $V_{D_{2n}}$ \cft{} is isomorphic to $V_{D_{4n}^+}$ as a $V_{D_{2n}}\otimes V_{D_{2n}}$-module.
\end{prop}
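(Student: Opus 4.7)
The plan is to read the claimed isomorphism directly off the coset decomposition \eqref{eqn:D4ntoD2nD2n}, using the general principle that a lattice vertex operator algebra splits along an orthogonal sublattice decomposition.

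First I would recall from \S\ref{sec:back:vosas} that $D_{4n}^+ = D_{4n}\cup (D_{4n}+[1])$, so as a $V_{D_{4n}}$-module we have $V_{D_{4n}^+} = V_{D_{4n}}\oplus V_{D_{4n}+[1]}$. Next, using the orthogonal embedding $D_{2n}\oplus D_{2n}\hookrightarrow D_{4n}$ set up just before the proposition, I would invoke the standard fact that for any coset $D_{4n}+\lambda$ the lattice module $V_{D_{4n}+\lambda}$ decomposes as a $V_{D_{2n}}\otimes V_{D_{2n}}$-module into a direct sum indexed by the cosets of $D_{2n}\oplus D_{2n}$ that partition $D_{4n}+\lambda$. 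Concretely, the Heisenberg Fock space for $D_{4n}$ factors as the tensor product of the Fock spaces for the two orthogonal copies of $D_{2n}$, and the $\CC$-linear span of formal exponentials $e^\mu$ for $\mu\in(D_{2n}+[i],D_{2n}+[j])$ supplies the group-algebra factor corresponding to $V_{D_{2n}+[i]}\otimes V_{D_{2n}+[j]}$.

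Applying this to the two summands of $V_{D_{4n}^+}$ and reading off \eqref{eqn:D4ntoD2nD2n}, I obtain
\begin{gather*}
V_{D_{4n}} \cong V_{D_{2n}}\otimes V_{D_{2n}} \,\oplus\, V_{D_{2n}+[2]}\otimes V_{D_{2n}+[2]}, \\
V_{D_{4n}+[1]} \cong V_{D_{2n}+[1]}\otimes V_{D_{2n}+[1]} \,\oplus\, V_{D_{2n}+[3]}\otimes V_{D_{2n}+[3]},
\end{gather*}
and summing these two lines produces $\bigoplus_{i=0}^{3} V_{D_{2n}+[i]}\otimes V_{D_{2n}+[i]}$, which is exactly the bulk Hilbert space $\cH$ of the diagonal $V_{D_{2n}}$ \cft{}.

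The only point that needs a brief justification rather than outright citation is the compatibility of the 2-cocycle on $D_{4n}^+$ with the product of the chosen 2-cocycles on the two orthogonal copies of $D_{2n}$; this is routine because any two 2-cocycles on $D_{2n}\oplus D_{2n}$ whose commutators reproduce $(-1)^{\lab\mu,\nu\rab}$ differ by a coboundary, and such a coboundary can be absorbed into an isomorphism of $V_{D_{2n}}\otimes V_{D_{2n}}$-modules without altering the summand structure. I do not anticipate any real obstacle, which is consistent with the authors' remark that the conclusion follows immediately from \eqref{eqn:D4ntoD2nD2n}.
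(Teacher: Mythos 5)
Your proposal is correct and follows exactly the paper's route: the authors state the result immediately after displaying the coset decomposition \eqref{eqn:D4ntoD2nD2n}, with the understanding that $V_{D_{4n}^+}=V_{D_{4n}}\oplus V_{D_{4n}+[1]}$ decomposes as a $V_{D_{2n}}\otimes V_{D_{2n}}$-module according to the cosets of $D_{2n}\oplus D_{2n}$ listed there, yielding $\bigoplus_{i=0}^{3}V_{D_{2n}+[i]}\otimes V_{D_{2n}+[i]}=\cH$. Your additional remark on the cocycle compatibility is a standard point the paper leaves implicit, and it does not change the argument.
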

Note that ${D_4^+}\simeq \ZZ^4$ and ${D_8^+}\simeq {E_8}$. So Proposition \ref{prop:VD2ncft} furnishes bulk \cft{} interpretations for the self-dual \vosas{} $F(8)$, $V_{E_8}$ and $V_{D_{12}^+}$ appearing in Theorem \ref{thm:selfdual-class}.

An embedding of $A_1^{2n}$ in $D_{2n}$ is discussed in \S\ref{sec:back:vosas}. We may use this to formulate a counterpart to Proposition \ref{prop:VD2ncft} for tensor powers of $L_1(\mathfrak{sl}_2)$. For example, 
if $\cH$ denotes the bulk of the diagonal conformal field theory of $L_1(\mathfrak{sl}_2)^{\otimes 2n}$ then by virtue of the isomorphism $V_{A_1}\cong L_1(\mathfrak{sl}_2)$ we have
\begin{gather}
\begin{split}
	\cH&=\bigoplus_{C\in \F_2^{2n}} V_{A_1^{2n}+C}\otimes V_{A_1^{2n}+C}\\
	&\cong\bigoplus_{C\in \mathcal{Z}_{4n}}V_{A_1^{4n}+C}
\end{split}
\end{gather} 
where 
\begin{gather}
	\mathcal{Z}_{2m}:=\left\{ C=(c_1,\dots,c_{2m})\in\F_2^{2m} \mid c_i=c_{m+i}\text{ for }1\leq i\leq m\right\}.
\end{gather}
Observe that $\mathcal{Z}_{2m}=\cD_{2m}\cup\cD_{2m}+[2]$ in the notation of (\ref{eqn:[i]codeword}). By applying Lemma \ref{lem:[i]to[i]} and noting that $D_{2m}\cup D_{2m}+[2]\cong\ZZ^{2m}$ we obtain the following result.
\begin{prop}\label{prop:VA2ncft}
For $n$ a positive integer, the bulk of the diagonal $L_1(\mathfrak{sl}_2)^{\otimes 2n}$ \cft{} is isomorphic to $F(8n)$ as a $L_1(\mathfrak{sl}_2)^{\otimes 2n}\otimes L_1(\mathfrak{sl}_2)^{\otimes 2n}$-module.
\end{prop}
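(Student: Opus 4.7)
The plan is to follow the outline sketched in the paragraph immediately preceding the statement, converting each observation there into a confirmed isomorphism of $L_1(\mathfrak{sl}_2)^{\otimes 2n}\otimes L_1(\mathfrak{sl}_2)^{\otimes 2n}$-modules. First I would use the isomorphism $V_{A_1}\cong L_1(\mathfrak{sl}_2)$, together with the fact that the irreducible $V_{A_1}$-modules are indexed by $A_1^*/A_1\cong\F_2$, to write the diagonal bulk as
\begin{gather*}
\cH=\bigoplus_{C\in\F_2^{2n}}V_{A_1^{2n}+C}\otimes V_{A_1^{2n}+C}.
\end{gather*}

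Next I would fuse the two tensor factors using the orthogonal decomposition $A_1^{2n}\oplus A_1^{2n}\subset A_1^{4n}$ recorded in \S\ref{sec:back:vosas}. Via the standard identification $V_{L\oplus M}\cong V_L\otimes V_M$ for lattice \voas{}, the summand $V_{A_1^{2n}+C}\otimes V_{A_1^{2n}+C}$ corresponds to $V_{A_1^{4n}+(C,C)}$, where $(C,C)\in\F_2^{4n}$ denotes the concatenation of $C$ with itself. Since $\{(C,C)\mid C\in\F_2^{2n}\}$ is by definition $\mathcal{Z}_{4n}$, this yields
\begin{gather*}
\cH\cong\bigoplus_{C\in\mathcal{Z}_{4n}}V_{A_1^{4n}+C}.
\end{gather*}

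I would then apply the decomposition $\mathcal{Z}_{4n}=\cD_{4n}\cup(\cD_{4n}+[2])$ together with Lemma \ref{lem:[i]to[i]}, taken with $n$ there replaced by $2n$, to re-express the above as $V_{D_{4n}}\oplus V_{D_{4n}+[2]}\cong V_{D_{4n}\cup(D_{4n}+[2])}$. Noting that $D_{4n}\cup(D_{4n}+[2])=\ZZ^{4n}$ and invoking the boson-fermion correspondence $V_{\ZZ^{4n}}\cong F(8n)$ recalled in \S\ref{sec:back:vosas} then completes the argument.

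No serious obstacle is expected; the only point demanding a little care is the compatibility of every identification above with the $V_{A_1^{2n}}\otimes V_{A_1^{2n}}$-module structure. This is automatic, since the two commuting copies of $V_{A_1^{2n}}$ on the right-hand side are precisely those attached to the orthogonal sub-lattice pair $A_1^{2n}\oplus A_1^{2n}$ inside $A_1^{4n}$, which is the embedding used throughout. The structure of the argument thus parallels exactly that of Proposition \ref{prop:VD2ncft}, with the type $D$ cosets replaced by their images under the sub-lattice chain $A_1^{2n}\oplus A_1^{2n}\subset A_1^{4n}\subset D_{4n}\cup(D_{4n}+[2])$.
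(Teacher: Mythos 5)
Your proposal is correct and follows essentially the same route as the paper, whose proof is precisely the displayed computation in the paragraph preceding the statement: identify the diagonal bulk with $\bigoplus_{C\in\mathcal{Z}_{4n}}V_{A_1^{4n}+C}$, decompose $\mathcal{Z}_{4n}=\cD_{4n}\cup(\cD_{4n}+[2])$, and apply Lemma \ref{lem:[i]to[i]} together with $D_{4n}\cup(D_{4n}+[2])=\ZZ^{4n}$ and the boson--fermion correspondence. The details you fill in (the bijection $C\mapsto(C,C)$ onto $\mathcal{Z}_{4n}$ and the compatibility with the two commuting $V_{A_1^{2n}}$-actions) are exactly the ones the paper leaves implicit.
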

Proposition \ref{prop:VA2ncft} furnishes bulk \cft{} interpretations for the \vosas{} $F(8)$, $F(16)$ and $F(24)$ in Theorem \ref{thm:selfdual-class}.

It is instructive to consider the analogue of Proposition \ref{prop:VA2ncft} where $L_1(\mathfrak{sl}_2)^{\otimes 2n}$ is replaced by a simple current extension. Write $(1^{2n})$ as a shorthand for the ``all ones'' vector $(1,1,\ldots,1)\in \F_2^{2n}$. We will consider $V'\cong V''\cong V_L$ where $L=A_1^{2n}\cup A_1^{2n}+(1^{2n})$.
Observe that the irreducible $V_L$-modules are the 
$V_{L+C}=V_{A_1^{2n}+C}\oplus V_{A_1^{2n}+(1^{2n})+C}$ for $C\in \F_2^{2n}$ with $\wt(C)=0\xmod 2$. 
For simplicity assume that $n$ is even, so that $L$ is an even lattice and $V_L$ is a \voa{}.
Then for the bulk of the diagonal \cft{} for $V_L$ we have
\begin{gather}
\begin{split}
	\cH&=\bigoplus_{\substack{C\in \F_2^{2n}\\\wt(C)=0\xmod 2\\c_{2n}=0}} V_{L+C}\otimes V_{L+C}\\
	&= \bigoplus_{\substack{C\in \F_2^{2n}\\\wt(C)=0\xmod 2}}\left(V_{A_1^{2n}+C}\otimes V_{A_1^{2n}+C}
	\oplus V_{A_1^{2n}+(1^{2n})+C}\otimes V_{A_1^{2n}+C}\right).
\end{split}
\end{gather}
Comparing with (\ref{eqn:[i]codeword}) we see that
\begin{gather}
	\cH\cong \bigoplus_{C\in \cD_{4n}^+} V_{A_1^{4n}+C}
\end{gather}
where $\cD_{4n}^+:=\cD_{4n}\cup \cD_{4n}+[1]$. Since $D_{4n}^+=D_{4n}\cup D_{4n}+[1]$ by definition, an application of Lemma \ref{lem:[i]to[i]} yields the following alternative interpretation for $V_{D_{4n}^+}$ as a potential bulk \cft{}, at least for $n$ even.
\begin{prop}\label{prop:diagcftVL}
Let $n$ be an even positive integer and let $L=A_1^{2n}\cup A_1^{2n}+(1^{2n})$. Then the bulk diagonal \cft{} associated to $V_L$ is isomorphic to $V_{D_{4n}^+}$ as a $V_L\otimes V_L$-module.
\end{prop}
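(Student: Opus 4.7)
The plan is to unpack the simple current structure of $V_L$ and then match the resulting decomposition against that of $V_{D_{4n}^+}$ using Lemma \ref{lem:[i]to[i]}, in direct parallel to the arguments for Propositions \ref{prop:VD2ncft} and \ref{prop:VA2ncft}.

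First I would record that $L=A_1^{2n}\cup (A_1^{2n}+(1^{2n}))$ is an even integral lattice for $n$ even, since the coset representative $\tfrac{1}{\sqrt{2}}(1^{2n})\in (A_1^{2n})^*$ has squared length $n\in 2\ZZ$. Consequently $V_L$ is a rational $C_2$-cofinite \voa{} whose irreducible modules take the form $V_{L+C}=V_{A_1^{2n}+C}\oplus V_{A_1^{2n}+C+(1^{2n})}$, where $[C]\in L^*/L$ is naturally realised as a class in the quotient of the even-weight codewords of $\F_2^{2n}$ by $\{0,(1^{2n})\}$. One convenient set of representatives is the even-weight $C$ with $c_{2n}=0$.

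With these representatives chosen I would write out the bulk diagonal theory $\cH=\bigoplus_{[C]}V_{L+C}\otimes V_{L+C}$ and expand each tensor product into four summands of the form $V_{A_1^{2n}+C_1}\otimes V_{A_1^{2n}+C_2}$. After reindexing, the pairs $(C_1,C_2)\in\F_2^{2n}\times \F_2^{2n}$ arising in $\cH$ are precisely those with $C_1,C_2$ both of even weight and $C_1-C_2\in\{0,(1^{2n})\}$, each appearing exactly once. The requirement $c_{2n}=0$ in the choice of representative is what prevents a double count, and this bookkeeping is the step most likely to cause trouble.

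Next I would invoke the isomorphism $V_{A_1^{2n}}\otimes V_{A_1^{2n}}\cong V_{A_1^{4n}}$ of concatenation to identify each pair $(C_1,C_2)$ with a codeword in $\F_2^{4n}$. The diagonal pairs $(C_1,C_1)$ with $\wt(C_1)$ even satisfy the symmetry condition and have total weight in $4\ZZ$; they exhaust $\cD_{4n}$. The off-diagonal pairs $(C_1,C_1+(1^{2n}))$ with $\wt(C_1)$ even, after the substitution $C_2:=C_1+(1^{2n})$, become $(C_2+(1^{2n}),C_2)=(C_2,C_2)+[1]$ (using (\ref{eqn:[i]codeword}) with $n$ replaced by $2n$) with $\wt(C_2)$ even, and these exhaust $\cD_{4n}+[1]$. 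Therefore
\begin{gather}
\cH\;\cong\;\bigoplus_{C\in \cD_{4n}^+}V_{A_1^{4n}+C}
\end{gather}
as a $V_{A_1^{4n}}$-module.

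Finally, Lemma \ref{lem:[i]to[i]} (applied with $n$ replaced by $2n$) identifies the right-hand side with $V_{D_{4n}}\oplus V_{D_{4n}+[1]}=V_{D_{4n}^+}$. Because $V_L\otimes V_L$ is obtained from $V_{A_1^{4n}}$ as the extension by the two simple currents labelled by $((1^{2n}),0^{2n})$ and $(0^{2n},(1^{2n}))$, both of which lie inside $D_{4n}^+/A_1^{4n}$, the module structure upgrades automatically, yielding the desired isomorphism of $V_L\otimes V_L$-modules.
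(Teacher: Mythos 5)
Your proof is correct and follows essentially the same route as the paper: decompose the diagonal bulk into $V_{A_1^{4n}}$-modules labelled by codewords in $\F_2^{4n}$, identify the resulting index set with $\cD_{4n}^+=\cD_{4n}\cup(\cD_{4n}+[1])$, and conclude via Lemma \ref{lem:[i]to[i]}. The bookkeeping of representatives with $c_{2n}=0$ and the matching of diagonal and off-diagonal pairs with $\cD_{4n}$ and $\cD_{4n}+[1]$ is exactly the computation the paper carries out.
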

Proposition \ref{prop:diagcftVL} offers a bulk \cft{} interpretation for $V_{E_8}$ distinct from that of Proposition \ref{prop:VD2ncft}.

\subsection{Type D Superconformal Field Theory}\label{sec:egs:typeDscft}

We now consider the diagonal \scft{} associated to $2n$ free fermions. By the boson-fermion correspondence we have $F(2n)\cong V_{D_n}\oplus V_{D_n+[2]}$, 
and $F(2n)_\tw\cong V_{D_n+[1]}\oplus V_{D_n+[3]}$. 
So we have
\begin{gather}
	\cH_\NSNS=F(2n)\otimes F(2n),\quad
	\cH_\RR=F(2n)_\tw\otimes F(2n)_\tw,
\end{gather}
in this case.
This gives us a bulk \scft{} interpretation for $F(4n)$ for $n$ a positive integer (cf. Proposition \ref{prop:VA2ncft}).
\begin{prop}\label{prop:diagFn}
Let $n$ be a positive integer. Then the $\NSNS$ sector of the bulk diagonal \scft{} associated to $2n$ free fermions is isomorphic to $F(4n)$ as a $F(2n)\otimes F(2n)$-module, and the $\RR$ sector is isomorphic to $F(4n)_\tw$ as a canonically twisted $F(2n)\otimes F(2n)$-module. 
\end{prop}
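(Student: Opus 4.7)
The plan is to mimic the argument for Proposition \ref{prop:VD2ncft}, replacing the role of $(4n,2n)$ with $(2n,n)$, and to keep careful track of the superspace structure throughout. By the boson-fermion correspondence we have $F(2n)\cong V_{D_n}\oplus V_{D_n+[2]}$, with the first summand being the even part and the second being the odd part, and similarly $F(2n)_\tw\cong V_{D_n+[1]}\oplus V_{D_n+[3]}$. The same correspondence applied at rank $4n$ gives $F(4n)\cong V_{D_{2n}}\oplus V_{D_{2n}+[2]}$ and $F(4n)_\tw\cong V_{D_{2n}+[1]}\oplus V_{D_{2n}+[3]}$.

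Next, I would invoke the coset decomposition (\ref{eqn:D4ntoD2nD2n}), which holds verbatim with $2n$ in place of $4n$ and $n$ in place of $2n$, using the embedding $D_n\oplus D_n\hookrightarrow D_{2n}$ described in \S\ref{sec:back:vosas}. Combining this with the distributive expansion of $F(2n)\otimes F(2n)$ and $F(2n)_\tw\otimes F(2n)_\tw$ yields
\begin{gather*}
F(2n)\otimes F(2n)\cong V_{D_{2n}}\oplus V_{D_{2n}+[2]}\cong F(4n),\\
F(2n)_\tw\otimes F(2n)_\tw\cong V_{D_{2n}+[1]}\oplus V_{D_{2n}+[3]}\cong F(4n)_\tw,
\end{gather*}
where the first isomorphism on each line uses, respectively, the first and third, and the second and fourth, lines of the (\ref{eqn:D4ntoD2nD2n})-analogue.

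The main point requiring care is that these identifications are as $F(2n)\otimes F(2n)$-modules (twisted in the R-R case), rather than merely as superspaces. This follows because the sub \voa{} $V_{D_n\oplus D_n}< V_{D_{2n}}$ realized by the concatenation embedding is canonically isomorphic to $V_{D_n}\otimes V_{D_n}$, and the action of the odd generators of $F(2n)\otimes F(2n)$ on the lattice pieces is determined by the fusion rules for cosets of $D_n\oplus D_n$ in its dual, which match the lattice addition in $D_{2n}^*/D_{2n}$ under our embedding. One should also verify that the superspace structure matches: $V_{D_{2n}}$ (which is even in $F(4n)$) corresponds to $(V_{D_n}\otimes V_{D_n})\oplus(V_{D_n+[2]}\otimes V_{D_n+[2]})$, which is exactly the even-even plus odd-odd part, and similarly $V_{D_{2n}+[2]}$ (odd) corresponds to the even-odd plus odd-even part, consistent with the super-tensor product conventions.

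The main obstacle is really just a bookkeeping one, namely to confirm the compatibility of the three pieces of data---module structure, $L_0$-grading, and superspace $\ZZ/2\ZZ$-grading---simultaneously on both sides of the isomorphism. Once this is in place, no further modularity or fusion computations are needed, since we have already reduced the claim to identities between direct sums of explicit lattice vertex operator modules.
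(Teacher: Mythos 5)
Your proof is correct and follows essentially the same route as the paper: the paper likewise identifies $\cH_\NSNS=F(2n)\otimes F(2n)$ and $\cH_\RR=F(2n)_\tw\otimes F(2n)_\tw$ using self-duality of $F(2n)$, and then appeals to the boson--fermion correspondence $F(2n)\cong V_{D_n}\oplus V_{D_n+[2]}$, $F(2n)_\tw\cong V_{D_n+[1]}\oplus V_{D_n+[3]}$ together with the same coset bookkeeping it uses for Propositions \ref{prop:VD2ncft} and \ref{prop:VA2ncft}. The only (harmless) difference is that the paper treats the identifications $F(2n)\otimes F(2n)\cong F(4n)$ and $F(2n)_\tw\otimes F(2n)_\tw\cong F(4n)_\tw$ as immediate for free-fermion tensor products, whereas you re-derive them from the $D_n\oplus D_n\hookrightarrow D_{2n}$ coset decomposition and check the superspace gradings explicitly, which is a valid and slightly more detailed verification.
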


Next we consider the diagonal \scft{} associated to the $D_{2n}$ torus. In this case $V'=V''=V_{D_{2n}}\otimes F(2n)$, so 
\begin{gather}
	\cH_\NSNS=\bigoplus_i \left(V_{D_{2n}+[i]}\otimes F(2n)\right)\otimes \left(V_{D_{2n}+[i]}\otimes F(2n)\right)
\end{gather}
as modules for $V'\otimes V''$, and 
\begin{gather}
	\cH_\RR=\bigoplus_i \left(V_{D_{2n}+[i]}\otimes F(2n)_\tw\right)\otimes \left(V_{D_{2n}+[i]}\otimes F(2n)_\tw\right)
\end{gather}
as canonically twisted modules for $V'\otimes V''$. Comparing this description with the decomposition of $D_{4n}^+=D_{4n}\cup D_{4n}+[1]$ into cosets for $D_{2n}\oplus D_{2n}$ (cf. (\ref{eqn:D4ntoD2nD2n})) we obtain the following identification.
\begin{prop}
Let $n$ be a positive integer and set $V=V_{D_{2n}}\otimes F(2n)$. Then the $\NSNS$ sector of the bulk diagonal \scft{} associated to the $D_{2n}$ torus is isomorphic to $V_{D_{4n}^+}\otimes F(4n)$ as a $V\otimes V$-module, and the $\RR$ sector is isomorphic to $V_{D_{4n}^+}\otimes F(4n)_\tw$ as a canonically twisted $V\otimes V$-module. 
\end{prop}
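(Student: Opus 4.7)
The strategy is simply to split the proposed description of the bulk into its lattice and fermion parts and apply the two preceding propositions factor by factor.

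First I would rearrange the summands of the proposed $\cH_\NSNS$ using the commutativity of tensor product,
\begin{gather*}
\bigoplus_i \left(V_{D_{2n}+[i]}\otimes F(2n)\right)\otimes \left(V_{D_{2n}+[i]}\otimes F(2n)\right)
\cong
\left(\bigoplus_i V_{D_{2n}+[i]}\otimes V_{D_{2n}+[i]}\right)\otimes\left(F(2n)\otimes F(2n)\right),
\end{gather*}
as a module for $(V_{D_{2n}}\otimes V_{D_{2n}})\otimes (F(2n)\otimes F(2n))$, which is naturally identified with $V\otimes V$ for $V=V_{D_{2n}}\otimes F(2n)$ (after reindexing the two copies of each tensorand into the two copies of $V$). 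The same rearrangement applied to $\cH_\RR$ gives
\begin{gather*}
\cH_\RR \cong \left(\bigoplus_i V_{D_{2n}+[i]}\otimes V_{D_{2n}+[i]}\right)\otimes\left(F(2n)_\tw\otimes F(2n)_\tw\right).
\end{gather*}

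Next I would invoke Proposition \ref{prop:VD2ncft} to identify the first tensorand of each expression with $V_{D_{4n}^+}$ as a $V_{D_{2n}}\otimes V_{D_{2n}}$-module, and Proposition \ref{prop:diagFn} to identify $F(2n)\otimes F(2n)$ with $F(4n)$ as a module for its $F(2n)\otimes F(2n)$ structure, together with the companion identification $F(2n)_\tw\otimes F(2n)_\tw\cong F(4n)_\tw$ as a canonically twisted $F(2n)\otimes F(2n)$-module. Putting these identifications together yields
\begin{gather*}
\cH_\NSNS\cong V_{D_{4n}^+}\otimes F(4n),\qquad
\cH_\RR\cong V_{D_{4n}^+}\otimes F(4n)_\tw,
\end{gather*}
as (canonically twisted) $V\otimes V$-modules, as required.

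The only mild subtlety is bookkeeping: one must check that the natural reindexing that regroups a $V_{D_{2n}}\otimes F(2n)\otimes V_{D_{2n}}\otimes F(2n)$-module as a $(V_{D_{2n}}\otimes V_{D_{2n}})\otimes (F(2n)\otimes F(2n))$-module respects the superspace structures and the twisted/untwisted labels (so that the sector labelled $\RR$ for $V\otimes V$ indeed corresponds to both fermion factors being canonically twisted, while the lattice factors remain untwisted), but this is immediate from the definition of $V$ as a tensor product of a \voa{} with a \vosa{}. I do not expect a real obstacle; both nontrivial identifications have already been established in Propositions \ref{prop:VD2ncft} and \ref{prop:diagFn}, and this result is essentially their tensor product.
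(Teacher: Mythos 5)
Your proposal is correct and takes essentially the same route as the paper: the paper simply writes out the $\NSNS$ and $\RR$ sectors and identifies the lattice factor via the coset decomposition (\ref{eqn:D4ntoD2nD2n}) underlying Proposition \ref{prop:VD2ncft}, with the fermionic factors handled exactly as in Proposition \ref{prop:diagFn}. Your explicit regrouping of tensor factors is just a slightly more spelled-out version of the same one-line comparison.
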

We also obtain our first example of a potential bulk \scft{} with superconformal structure.
\begin{thm}\label{thm:D4n+F4n}
Let $n$ be a positive integer. Then the \vosa{} $W=V_{D_{4n}^+}\otimes F(4n)$ is a potential bulk $\cN=(2,2)$ \scft{} in the sense of \S\ref{sec:bscft}, for $V'\cong V''\cong V_{D_{2n}}\otimes F(2n)$, and the elliptic genus defined by this structure vanishes.
\end{thm}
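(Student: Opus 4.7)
The plan is to equip $V'\cong V''\cong V_{D_{2n}}\otimes F(2n)$ with the standard free-field $\cN=2$ structure at $c'=c''=3n$, apply Proposition \ref{prop:pb22scftconds} to obtain modularity and spectral flow symmetry, and then show that the elliptic genus vanishes by a direct factorization argument. First, pair the $2n$ bosons and $2n$ fermions in $V_{D_{2n}}\otimes F(2n)$ into $n$ complex bosons and $n$ complex fermions to realize the $\cN=2$ superconformal algebra at $c=3n$, with $U(1)$ current $J$ given by the total fermion number of $F(2n)$ and supercurrents $G^\pm$ given by the usual mixed boson-fermion expressions. This furnishes the first hypothesis of Proposition \ref{prop:pb22scftconds}.

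To verify the remaining hypotheses, combine the coset decomposition (\ref{eqn:D4ntoD2nD2n}) with the isomorphisms $F(4n)\cong F(2n)\otimes F(2n)$ and $F(4n)_\tw\cong F(2n)_\tw\otimes F(2n)_\tw$ to exhibit each $V'\otimes V''$-summand of $W$ and of $W_\tw$ in the form $(V_{D_{2n}+[i]}\otimes X_1)\otimes(V_{D_{2n}+[j]}\otimes X_2)$ with the pair $([i],[j])$ and the factors $X_a\in\{F(2n),F(2n)_\tw\}$ matched to the superspace structure on $W$. Using the values of $[i]\cdot[i]$ recorded after (\ref{eqn:[i]vector}), a direct calculation shows that $L_0'-L_0''$ has integer eigenvalues on $W^\even$ and on $W_\tw$, and half-integer eigenvalues on $W^\odd$. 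The $S$-matrix of $V_{D_{2n}}\otimes F(2n)$ factorizes into that of $V_{D_{2n}}$, which is real because all inner products on $D_{2n}^*$ lie in $\tfrac12\ZZ$, and that of $F(2n)$, which is real by inspection. Invoking Proposition \ref{prop:pb22scftconds} then supplies the modularity condition (\ref{eq:bulkmodular}) and spectral flow symmetry (\ref{eqn:specflowsym}).

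For the elliptic genus, observe that in this free-field realization the currents $J'$ and $J''$ live entirely in the first and second $F(2n)$ factor of $F(4n)\cong F(2n)\otimes F(2n)$, respectively, and act trivially on $V_{D_{4n}^+}$. Setting $u''=0$ in $\chh^-[W_\tw](u,\tau,0,-\bar\tau)$ therefore decouples the supertrace over the second $F(2n)_\tw$, and what remains of that factor is the Ramond-sector supertrace
\[
\tr_{F(2n)_\tw}\!\bigl((-1)^{F_2}\,\bar q^{L_0-\tfrac{n}{24}}\bigr),
\]
which vanishes because the $n$ complex fermion zero modes contribute a factor $\prod_{i=1}^n(1-1)=0$ under the supertrace. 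Hence $\EG_W\equiv 0$, which trivially fulfils (\ref{eqn:EGcHmod}).

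The principal technical obstacle is the parity bookkeeping needed for the eigenvalue condition on $L_0'-L_0''$, particularly when $n$ is odd so that $D_{4n}^+$ is an odd self-dual lattice and $V_{D_{4n}^+}$ carries its own nontrivial VOSA superspace structure which must be reconciled with that of $F(4n)$; once the parities are tracked coset by coset the verification is mechanical.
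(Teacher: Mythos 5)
Your strategy coincides with the paper's: the proof there consists precisely of invoking Proposition \ref{prop:pb22scftconds} for modularity and spectral flow symmetry and then computing the elliptic genus directly, and your free-field $\cN=2$ structure at $c=3n$, the reality of the $S$-matrices, and the right-moving fermion zero-mode argument for $\EG_W\equiv 0$ are all correct and supply exactly the details the paper leaves implicit.

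However, the step you yourself single out as the principal obstacle does not resolve the way you claim. For $n$ odd, $L_0'-L_0''$ does \emph{not} have integer eigenvalues on $W^\even$ if $W^\even$ is the even part of the \vosa{} $W=V_{D_{4n}^+}\otimes F(4n)$ in the paper's convention $W^\even=\bigoplus_{m\in\ZZ}W_m$. Concretely, for $n=1$ take $\lambda=\mu=[1]=\frac12(1,1)\in D_2^*$ and consider the vector of $(V_{D_2+[1]}\otimes F(2))\otimes(V_{D_2+[1]}\otimes F(2))$ obtained by tensoring $e^{\lambda}$ and the $F(2)$-vacuum on the left with $e^{\mu}$ and a weight-$\frac12$ vector of $F(2)$ on the right: it has $L_0'=\frac14$ and $L_0''=\frac34$, hence total weight $1$ and so lies in $W^\even$, yet $L_0'-L_0''=-\frac12$. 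The source of the discrepancy is that for $n$ odd one has $\frac12\lambda\cdot\lambda\in\frac12+\ZZ$ on the coset $D_{4n}+[1]$, so $V_{D_{4n}+[1]}$ is the odd part of $V_{D_{4n}^+}$ and the \vosa{} parity of $W$ differs from the total fermion number of $F(4n)$ exactly on $V_{D_{4n}+[1]}\otimes F(4n)$. The hypotheses of Proposition \ref{prop:pb22scftconds} (and the desired $T$-transformation exchanging $Z^+_\NSNS$ and $Z^-_\NSNS$) hold for the grading by fermion number of $F(4n)$, i.e., for the compatible superspace structure on $\bigoplus_i M_i'\otimes M_i''$ induced from the factors $M_i'\cong M_i''\cong V_{D_{2n}+[i]}\otimes F(2n)$; for $n$ odd this is \emph{not} the \vosa{} superspace structure of $W$, and the two agree only as $V'\otimes V''$-modules (which the paper's conventions permit, since module isomorphisms need not preserve superspace structures). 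So the correct move is not further bookkeeping but an explicit change of grading; as written your verification of the eigenvalue hypothesis fails for every odd $n$, while for $n$ even the lattice $D_{4n}^+$ is even, the two gradings coincide, and your argument is complete.
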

\begin{proof}
The modularity (\ref{eq:bulkmodular}) 
and spectral flow symmetry (\ref{eqn:specflowsym}) follow from Proposition \ref{prop:pb22scftconds}. We may compute directly that the elliptic genus (\ref{eqn:EGW}) vanishes, so it trivially satisfies (\ref{eqn:EGcHmod}). 
\end{proof}

Taking $n=2$ in Theorem \ref{thm:D4n+F4n} we obtain an interpretation for $V_{E_8}\otimes F(8)$ as the bulk \scft{} of the sigma model with $D_{4}$ torus as target. Volpato has observed \cite{V} that supersymmetry preserving symmetry groups of sigma models with arbitrary $4$-dimensional torus as target space embed in the Weyl group of $E_8$. The Weyl group of $E_8$ acts naturally on $V_{E_8}\otimes F(8)$. It appears that $V_{E_8}\otimes F(8)$ can play the analogous role for sigma models on $4$-dimensional tori that $V_{D_{12}^+}$ has been shown \cite{DM16} to play for sigma models on K3 surfaces.

\subsection{Type A Superconformal Field Theory}\label{sec:egs:typeak3}

In principle we may consider \scfts{} with $V'\cong V''\cong V_L$ for $L=A_1^{2n}\cup A_1^{2n}+(1^{2n})$ also for $n$ odd (cf. Proposition \ref{prop:diagcftVL}). However, as we have explained in \S\ref{sec:back:ff}, the case $n=3$ is distinguished by the presence of $\cN=4$ superconformal structure. This fact underpins a significant part of the analysis of \cite{GTVW}, in which it is shown that  the diagonal \scft{} with $V'\cong V''\cong V_L$ underlies a Kummer type K3 sigma model arising from the canonical $\Z/2\Z$-orbifold of the $D_4$ torus. This is the {\em tetrahedral} K3 sigma model in \cite{TW13}. 
We discuss this sigma model from the point of view of $A_1^6$ now. 
Although the motivation of \cite{GTVW} is somewhat different, their detailed analysis precedes, and may serve to flesh out the discussion we offer here. 

So let $L=A_1^6\cup A_1^6+(1^6)$ in this section. Directly applying the observations preceding Proposition \ref{prop:diagcftVL} with $n=3$ we see that the $\NSNS$ sector of the diagonal \scft{} associated to $V_L$ satisfies $\cH_\NSNS\cong V_{D_{12}}\oplus V_{D_{12}+[1]}=V_{D_{12}^+}$ as a $V_L\otimes V_L$-module. Noting that the irreducible canonically twisted $V_L$-modules are the $V_{L+C}$ with $C\in \F_2^6$ and $\wt(C)=1\xmod 2$ we find that $\cH_\RR\cong V_{D_{12}+[2]}\oplus V_{D_{12}+[3]}$ as a canonically twisted $V_L\otimes V_L$-module. The diagonal \scft{} for $V_L$ underlies the tetrahedal K3 sigma model according to \cite{GTVW}, so we have the following super-analogue of Proposition \ref{prop:diagcftVL} for $n=3$.
\begin{prop}\label{prop:egs:typeak3}
For $L=A_1^6\cup A_1^6+(1^6)$, the $\NSNS$ sector of the tetrahedral K3 sigma model is isomorphic to $V_{D_{12}^+}$ as a $V_L\otimes V_L$-module, and the $\RR$ sector of the tetrahedral K3 sigma model is isomorphic to $V_{D_{12}+[2]}\oplus V_{D_{12}+[3]}$ as a canonically-twisted $V_L\otimes V_L$-module.
\end{prop}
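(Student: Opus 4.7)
The plan is to adapt the argument preceding Proposition \ref{prop:diagcftVL}, which handles the $\NSNS$ sector, and then extend it to the $\RR$ sector by identifying the irreducible canonically twisted $V_L$-modules. The identification of the diagonal $V_L$ \scft{} with the tetrahedral K3 sigma model is the content of \cite{GTVW}, so the work reduces to computing the diagonal decomposition at the level of $V_L\otimes V_L$-modules.

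First I would recall the module structure of $V_L$ for $L=A_1^6\cup A_1^6+(1^6)$. The irreducible untwisted $V_L$-modules are the $V_{L+C}=V_{A_1^6+C}\oplus V_{A_1^6+(1^6)+C}$ for $C\in\F_2^6$ with $\wt(C)=0\xmod 2$ and $c_6=0$ (choice of coset representatives), while the irreducible canonically twisted $V_L$-modules are indexed analogously by $C\in\F_2^6$ with $\wt(C)=1\xmod 2$. (Here we use that $A_1^6+(1^6)$ shifts the weight parity under the fermionic structure.) Assembling the diagonal sum in the $\NSNS$ sector gives
\begin{gather*}
\cH_\NSNS = \bigoplus_{\substack{C\in\F_2^6\\\wt(C)=0\xmod 2\\c_6=0}} V_{L+C}\otimes V_{L+C}
= \bigoplus_{\substack{C\in\F_2^6\\\wt(C)=0\xmod 2}} V_{A_1^6+C}\otimes V_{A_1^6+C+\delta}
\end{gather*}
where $\delta$ ranges over $\{0,(1^6)\}$; exactly as in the derivation of Proposition \ref{prop:diagcftVL}, this collects into $\bigoplus_{C\in\cD_{12}^+}V_{A_1^{12}+C}$ using the notation of (\ref{eqn:[i]codeword}).

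Next, I would apply Lemma \ref{lem:[i]to[i]} to translate this from a sum of $A_1^{12}$-cosets indexed by a binary code to a sum of $D_{12}$-cosets. Since $\cD_{12}^+=\cD_{12}\cup\cD_{12}+[1]$ corresponds under Lemma \ref{lem:[i]to[i]} precisely to the cosets $D_{12}\cup D_{12}+[1]=D_{12}^+$, this identifies $\cH_\NSNS\cong V_{D_{12}^+}$ as a $V_L\otimes V_L$-module. This is the $\NSNS$ assertion.

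For the $\RR$ sector I would run the same computation, but now sum over $C\in\F_2^6$ with $\wt(C)=1\xmod 2$ (and suitable coset representative choice), giving a sum indexed by the coset $\cD_{12}^++[2]=\bigl(\cD_{12}+[2]\bigr)\cup\bigl(\cD_{12}+[3]\bigr)$ inside $\F_2^{12}$. Applying Lemma \ref{lem:[i]to[i]} a second time translates this to $\bigl(D_{12}+[2]\bigr)\cup\bigl(D_{12}+[3]\bigr)$, yielding $\cH_\RR\cong V_{D_{12}+[2]}\oplus V_{D_{12}+[3]}$ as a canonically twisted $V_L\otimes V_L$-module. The main obstacle is bookkeeping: verifying that the parity of $\wt(C)$ correctly indexes untwisted versus canonically twisted $V_L$-modules (this is where the simple current extension by $V_{A_1^6+(1^6)}$, which has half-integral conformal weight when $n=3$ is odd, becomes essential and distinguishes the present case from the $n$ even setting of Proposition \ref{prop:diagcftVL}), and confirming that the diagonal pairing is compatible with the twisted sector gluing in the K3 sigma model as established in \cite{GTVW}.
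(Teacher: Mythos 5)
Your proposal is correct and follows essentially the same route as the paper: the $\NSNS$ sector is obtained by applying the computation preceding Proposition \ref{prop:diagcftVL} with $n=3$ together with Lemma \ref{lem:[i]to[i]}, the $\RR$ sector by repeating it over the odd-weight codewords indexing the canonically twisted $V_L$-modules (landing in $(\cD_{12}+[2])\cup(\cD_{12}+[3])$), and the identification with the tetrahedral K3 sigma model is imported from \cite{GTVW}. Your closing remark on the half-integral conformal weight of $V_{A_1^6+(1^6)}$ for odd $n$ is a correct and welcome clarification of why this case is a \vosa{} rather than the \voa{} of Proposition \ref{prop:diagcftVL}.
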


Now let us consider superconformal structure. As explained in \S\ref{sec:back:ff} the \vosa{} $V_L\cong V_{A_1^6}\oplus V_{A_1^6+(1^6)}$ contains a copy of the vacuum module for the small $\cN=4$ superconformal algebra at $c=6$. 
As in \S\ref{sec:back:ff} we choose the first copy of $L_1(\mathfrak{sl}_2)$ in $L_1(\mathfrak{sl}_2)^{\otimes 6}<V_L$ 
to generate the affine $\mathfrak{sl}_2$ sub algebra. Then spectral flow corresponds to fusion with $V_{L+C}\otimes V_{L+C}$ where $C=(10^5)$. In terms of $V_{D_{12}^+}$ this is the same as fusion with $V_{D_{12}+[2]}$ by force of (\ref{eqn:[i]codeword}). Thus spectral flow interchanges the $\NSNS$ and $\RR$ sectors. An explicit calculation, such as is carried out in \S D.3 of \cite{GTVW}, verifies that $\EG_W(u,\tau)$ is a weak Jacobi form of weight $0$ and index $1$ such that $\EG_W(0,\tau)=24$. Thus we have the following.
\begin{thm}\label{thm:D12+LL}
The \vosa{} $V_{D_{12}^+}$ is a potential bulk $\cN=(4,4)$ \scft{} in the sense of \S\ref{sec:bscft}, for $V'\cong V''\cong V_L$, and the elliptic genus defined by this structure is the K3 elliptic genus.
\end{thm}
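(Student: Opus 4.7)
The plan is to check each clause of the definition of a potential bulk $\cN = (4,4)$ \scft{} from \S\ref{sec:bscft:supstruc}—modularity (\ref{eq:bulkmodular}), spectral flow symmetry (\ref{eqn:specflowsym}), the weak Jacobi form property (\ref{eqn:EGcHmod}), the condition $c = 6k$, and the identification of spectral flow with fusion by the order two simple current of $L_k(\mathfrak{sl}_2)$—and then to identify the resulting $\EG_W$ with the K3 elliptic genus.

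First I would invoke Proposition \ref{prop:pb22scftconds} to obtain (\ref{eq:bulkmodular}) and (\ref{eqn:specflowsym}). The equality $c' = c'' = 6$ holds since $V_L$ has rank $6$ and $V_{D_{12}^+}$ has central charge $12$; the $\cN = 2$ vacuum module is supplied by the $\cN = 4$ construction of \S\ref{sec:back:ff}; reality of the $S$-matrix of $V_L$ follows from the Weil-type description at the end of \S\ref{sec:bscft:pbscft} once one confirms that the discriminant form of $L$ is valued in $\frac{1}{2}\Z$. For the grading condition on $L_0' - L_0''$ I would combine Proposition \ref{prop:egs:typeak3} with the decomposition (\ref{eqn:D4ntoD2nD2n}) at $n = 3$, computing the norms of representative vectors in each coset to check that the eigenvalue differences lie in $\Z$ on $V_{D_{12}}$ and in $\Z + \frac{1}{2}$ on $V_{D_{12} + [1]}$.

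For the structural $\cN = (4,4)$ conditions, the equality $c = 6$ gives $k = 1$, and the identification of spectral flow with simple current fusion comes from the free field realization of \S\ref{sec:back:ff}: the affine $\mathfrak{sl}_2$ at level $1$ occupies the first $V_{A_1}$ factor of $V_{A_1}^{\otimes 6} < V_L$, whose order two simple current corresponds to the coset $V_{A_1 + (1)}$, and tensoring on each side moves between the sectors of Proposition \ref{prop:egs:typeak3} via (\ref{eqn:[i]codeword}) and Lemma \ref{lem:[i]to[i]}.

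Finally, Lemma \ref{lem:G0} forces the trace defining $\EG_W(u,\tau)$ to concentrate in the single grade $n = c/24 = \frac{1}{4}$ of $W_\tw$, so $\EG_W(0,\tau)$ is independent of $\tau$; a direct counting in that grade, of the kind carried out in \S D.3 of \cite{GTVW}, gives $\EG_W(0,\tau) = 24$. Together with the modularity and the index $c/6 = 1$ that follow from the preceding steps, uniqueness of the one-dimensional space of weak Jacobi forms of weight $0$ and index $1$ with constant term $24$ forces $\EG_W$ to coincide with the K3 elliptic genus. The principal obstacle is the bookkeeping across the embeddings $A_1^6 \subset L \subset D_{12}^+$: ensuring that the $U(1)$ current of the $\cN = 4$ algebra induces the lattice grading so that Proposition \ref{prop:pb22scftconds} applies with the correct eigenvalue constraints, and that the spectral flow automorphism defined via (\ref{eqn:actspcflw}) agrees with the lattice-theoretic fusion described above.
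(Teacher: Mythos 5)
Your proposal is correct and follows essentially the same route as the paper: the $\cN=4$ structure at $c=6$ inside $V_L$ comes from the free-field realization of \S\ref{sec:back:ff}, spectral flow is identified with fusion by $V_{L+(10^5)}\otimes V_{L+(10^5)}$, i.e.\ by $V_{D_{12}+[2]}$, which swaps the $\NSNS$ and $\RR$ sectors, Proposition \ref{prop:pb22scftconds} supplies (\ref{eq:bulkmodular}) and (\ref{eqn:specflowsym}), and the elliptic genus is pinned down as the K3 elliptic genus via the one-dimensionality of the space of weight-$0$, index-$1$ weak Jacobi forms. The only point to keep in mind is that the Jacobi transformation law (\ref{eqn:EGcHmod}) for $\EG_W$ does not follow from Proposition \ref{prop:pb22scftconds} (which fixes $u''=-\bar u$ rather than $u''=0$) and must be verified by the explicit computation you and the paper both delegate to \S D.3 of \cite{GTVW}; your use of Lemma \ref{lem:G0} for the constancy of $\EG_W(0,\tau)$ is a fine substitute for part of that computation.
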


Note that $V_{D_{12}^+}$ serves as the moonshine module for Conway's group \cite{Dun_VACo,DM15}, and is precisely the \vosa{} that is used to attach weak Jacobi forms with level to supersymmetry preserving symmetries of K3 sigma models in \cite{DM16}.
Results equivalent to Proposition \ref{prop:egs:typeak3} and Theorem \ref{thm:D12+LL} have been obtained independently in \cite{TW17}.

\subsection{Gepner Type Superconformal Field Theory}\label{sec:egs:gepK3}

We now present a \scft{} interpretation for $V_{D_{12}^+}$ of a different nature. 
As mentioned in \S\ref{sec:back:vosas}, the lattice \vosa{} $V_{\sqrt{3}\Z}$ realizes the vacuum module of the $\cN=2$ superconformal algebra at $c=1$. In this section we set $K=\sqrt{3}\Z^6$. Thus $V_K$ contains the vacuum module of the $\cN=2$ superconformal algebra at $c=6$. We will show that $W=V_{D_{12}^+}$ is a quasi potential bulk $\cN=(2,2)$ \scft{} (cf. Definition \ref{def:qpb22scft}) for $V'\cong V''\cong V_K$. It will develop that $W$ is closely related to the Gepner model of type $(1)^6$, which is a \scft{} that also has $V'\cong V''\cong V_K$.

Recall from \S\ref{sec:back:vosas} that the lattice $\sqrt{3}\Z^{12}\cong K\oplus K$ embeds in $D_{12}^+$. Using such an embedding we may fix commuting sub \vosas{} $V', V''<W$ such that $V'\cong V''\cong V_K$. Since the discriminant group of $K\oplus K$ is $\F_3^{12}$ it is natural to use ternary codewords of length $12$ to describe the irreducible $V_K\otimes V_K$-modules. According to Lemma \ref{lem:D12+Gol} the $V_K\otimes V_K$-modules that appear in $W$ are indexed by the codewords in a copy $\cG_{12}^+$ of the ternary Golay code. To make this more concrete let us assume that $(+^6-^6)$ is a word in $\cG_{12}^+$ (if not then permute the coordinates so as to make this true), and define $C',C''\in\F_3^6$ for $C\in \cG_{12}^+$ by setting 
$C'=(c_1,\ldots,c_6)$ and $C''=(c_7,\ldots,c_{12})$ when $C=(c_1,\ldots,c_{12})$. Then we have
\begin{gather}\label{eqn:D12+KK}
	W=\bigoplus_{C\in\cG_{12}^+} V_{K+C'}\otimes V_{K+C''}
\end{gather}
as modules for $V_K\otimes V_K$. Our main result in this section is the following.
\begin{thm}\label{thm:WKK}
The \vosa{} $V_{D_{12}^+}$ is a quasi potential bulk $\cN=(2,2)$ \scft{} in the sense of \S\ref{sec:bscft}, for $V'\cong V''\cong V_K$, and the elliptic genus defined by this structure is the K3 elliptic genus.
\end{thm}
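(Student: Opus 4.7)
The plan is to verify the three requirements of Definition \ref{def:qpb22scft}---ellipticity, spectral flow symmetry, and modularity of $Z_W$ for a finite index subgroup of $\SL_2(\Z)$---and then identify the resulting elliptic genus with that of K3. The setup mirrors Theorem \ref{thm:D12+LL}: by Lemma \ref{lem:D12+Gol} and the preamble to (\ref{eqn:D12+KK}), $W=V_{D_{12}^+}$ decomposes over $V_K\otimes V_K$ along the ternary Golay code $\cG_{12}^+$, and since $V_{\sqrt 3\Z}$ realizes the vacuum module for the $\cN=2$ superconformal algebra at $c=1$, the tensor product $V'\otimes V''\cong V_K\otimes V_K$ carries the required $\cN=(2,2)$ structure at $c'=c''=6$. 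The twisted module $W_\tw$ decomposes analogously over $V_K\otimes V_K$ along a coset $\cG_{12}^++\epsilon$ of the Golay code in $\F_3^{12}$.

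For spectral flow symmetry, spectral flow by $\frac{1}{2}$ on the $\cN=2$ algebra at $c=1$ inside $V_{\sqrt 3\Z}$ is implemented by fusion with a distinguished simple current $V_{\sqrt 3\Z+\delta}$. On the tensor product, combined spectral flow corresponds to fusion by $V_{K+\delta'}\otimes V_{K+\delta''}$ for a specific $(\delta',\delta'')\in\F_3^{12}$; choosing coordinates so that this pair is $((+)^6,(-)^6)$ places it inside $\cG_{12}^+$ by our labelling convention. Hence this combined spectral flow is inner to $V_{D_{12}^+}$, which forces the identity (\ref{eqn:specflowsym}) at the level of partition functions and also implements the expected interchange between the $\NSNS$ and $\RR$ sectors described by $W$ and $W_\tw$.

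For modularity, Proposition \ref{prop:pb22scftconds} does not apply directly: the $S$-matrix of $V_K$ has entries proportional to $e^{2\pi i(a\cdot b)/3}$ for $a,b\in\F_3^6$, which are not real. I would rerun the $S$-invariance calculation from the proof of Proposition \ref{prop:pbcft2cond} with complex $S$-matrix. The resulting cross terms carry phases $e^{2\pi i(C'\cdot D'+C''\cdot D'')/3}$ indexed by $C\in\cG_{12}^+$ and $D\in\F_3^{12}$; summation over $C$ kills all terms unless $(D',D'')\in(\cG_{12}^+)^\perp$. The ternary Golay code is self-orthogonal under $C\cdot D=\sum c_id_i$ and has dimension $6$, so $(\cG_{12}^+)^\perp=\cG_{12}^+$ and the sum collapses onto the diagonal, yielding $S$-invariance of $Z_W(u,\tau)$. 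Invariance under $T$ fails on the nose, since $L_0'-L_0''$ has eigenvalues in $\frac{1}{6}\Z$ on summands of (\ref{eqn:D12+KK}), but it holds on a principal congruence subgroup of finite index, establishing the required modularity.

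Finally, the $\cN=(2,2)$ structure and spectral flow symmetry pin $\EG_W(u,\tau)$ down as a weak Jacobi form of weight $0$ and index $1$. This space is two-dimensional, so two coefficients identify it. I would compute $\EG_W(0,\tau)=24$ using Lemma \ref{lem:G0}, which reduces the calculation to the super dimension of $(W_\tw)_{c/24}$, and extract a second coefficient from the explicit $\cN=2$ characters of the summands $V_{K+C''}$ via the standard free-field realization of the $c=1$ $\cN=2$ algebra; comparison with the Eguchi--Ooguri--Taormina--Yang formula for the K3 elliptic genus completes the argument. The hard part, I expect, is the $S$-invariance step: it requires careful bookkeeping of the super structure $V_{D_{12}^+}^\even=V_{D_{12}}$, $V_{D_{12}^+}^\odd=V_{D_{12}+[1]}$, the associated $(-1)^F$ signs in the four sector characters, and the phase $u$-dependence coming from the $\cN=2$ $U(1)$ current, all threaded through the Golay code self-duality calculation sketched above.
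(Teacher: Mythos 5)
Your overall skeleton matches the paper's: decompose $W=V_{D_{12}^+}$ over $V_K\otimes V_K$ along the ternary Golay code, verify the three conditions of Definition \ref{def:qpb22scft}, and identify the elliptic genus. But there is a genuine gap at the ellipticity step. You write that ``the $\cN=(2,2)$ structure and spectral flow symmetry pin $\EG_W(u,\tau)$ down as a weak Jacobi form of weight $0$ and index $1$'' and then propose to identify it by two coefficients. That premise is precisely condition (\ref{eqn:EGcHmod}), one of the three things that must be proven, and it does not follow from the other two: since $Z_W$ is only invariant under a finite index subgroup of $\SL_2(\ZZ)$ (this is the whole point of the ``quasi'' notion), the usual argument deducing modularity of the elliptic genus from modularity of the $\RR$ partition function yields invariance only for that subgroup. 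The paper closes this gap by explicit computation: it introduces the marked complete weight enumerator of $\cG_{12}^+$ and the $c=1$ characters $f_s(u,\tau)=\eta(\tau)^{-1}\sum_k(e^{\pi i}z)^{k+\frac s6}q^{\frac32(k+\frac s6)^2}$, obtains the closed formula $\EG_W=-2(f_3^6+f_1^6+f_{-1}^6)+20(f_1^3f_{-1}^3+f_3^3f_1^3+f_{-1}^3f_3^3)$, and checks directly that this is a weak Jacobi form of weight $0$ and index $1$ with $\EG_W(0,\tau)=24$. Your plan needs this (or an equivalent) explicit verification; without it the two-coefficient identification has nothing to latch onto.

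Two smaller points. First, your spectral flow mechanism is off: the Ramond sector modules of $V_K$ are canonically twisted, supported on cosets of $K$ shifted by a half-lattice vector, so they are not among the $V_{K+\delta}$ with $\delta\in\F_3^6$, and the $\NSNS\to\RR$ flow is therefore not fusion with an untwisted simple current indexed by a Golay codeword. The paper's argument is simpler and correct: each $c=1$ $\cN=2$ character satisfies spectral flow symmetry individually, and the flow carries the $\NSNS$ index set to the $\RR$ index set, giving (\ref{eqn:specflowsym}) for $W$. Second, your explicit $S$-matrix computation via self-duality of $\cG_{12}^+$ is more than is needed---Definition \ref{def:qpb22scft} only asks for finite-index modularity of $Z_W$, which the paper gets for free from the fact that all characters involved are theta quotients---and as written it produces the charge-conjugation pairing $(D',-D'')\in\cG_{12}^+$ rather than the diagonal, because the antiholomorphic $S$-matrix entries are conjugated; this is harmless for lattice-coset characters at $u=0$ but must be tracked once the elliptic variable is carried through.
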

\begin{proof}
Let $w$ be the marked complete weight enumerator of $\cG_{12}^+$ for the marking $C\mapsto (C',C'')$. That is, define $w$ to be the $6$-variate polynomial
\begin{gather}
	w(X',Y',Z',X'',Y'',Z''):=\sum_{C\in \cG_{12}^+} w_{C'}(X',Y',Z')w_{C''}(X'',Y'',Z'')
\end{gather}
where $w_C(X,Y,Z)$, for $C\in \mathbb{F}_3^n$, is defined by $w_C(X,Y,Z):=X^{a_0}Y^{a_+}Z^{a_-}$ in case $C$ has $a_0$ entries equal to $0$, and $a_{\pm}$ entries equal to $\pm1$. Then the functions $\chh^\pm[W](u',\tau',u'',\tau'')$ and $\chh^\pm[W_\tw](u',\tau',u'',\tau'')$ (cf. (\ref{eqn:chhMsup})) are obtained by substituting characters of suitable irreducible modules for the $\cN=2$ superconformal algebra at $c=1$. These characters can be expressed in terms of classical theta functions and the Dedekind eta function (cf. \cite{RY}) so they are invariant for the action of some finite index subgroup of $\SL_2(\ZZ)$. So $Z_W$ (cf. (\ref{eqn:ZW})) is invariant for some finite index subgroup of $\SL_2(\ZZ)$. Also, the $\cN=2$ characters satisfy spectral flow symmetry, so the same is true for $W$. 

It remains to examine the elliptic genus $\EG_W$ (cf. (\ref{eqn:EGW})) of $W$. For this define 
\begin{gather}\label{eqn:fs}
f_s(u,\tau):=\eta(\tau)^{-1}\sum_{k\in\ZZ} (e^{\pi i} z)^{k+\frac s6}q^{\frac32(k+\frac s6)^2},
\end{gather}
which is a character for the $\cN=2$ superconformal algebra at $c=1$ when $s\in \ZZ$. 
Note that $f_s$ depends only on $s\xmod 6$, we have $f_s(0,\tau)=e^{\pm \frac{\pi i}6}$ when $s=\pm 1\xmod 6$, and $f_s(0,\tau)$ vanishes identically when $s=3\xmod 6$. Because of this we have
\begin{gather}
	\EG_W(u,\tau)=w(f_3(u,\tau),f_1(u,\tau),f_{-1}(u,\tau),0,e^{\frac{\pi i}{6}},e^{-\frac{\pi i}{6}}).
\end{gather}
Under our hypotheses on $\cG_{12}^+$ the marked complete weight enumerator is given by
\begin{gather}
\begin{split}
	w(X',Y',Z',&X'',Y'',Z'')=({X'}^6+{Y'}^6+{Z'}^6)({X''}^6+{Y''}^6+{Z''}^6)\\
	&+90({X'}^4{Y'}{Z'}+{X'}{Y'}^4{Z'}+{X'}{Y'}{Z'}^4){X''}^2{Y''}^2{Z''}^2\\
	&+20({X'}^3{Y'}^3+{X'}^3{Z'}^3+{Y'}^3{Z'}^3)({X''}^3{Y''}^3+{X''}^3{Z''}^3+{Y''}^3{Z''}^3)\\
	&+90{X'}^2{Y'}^2{Z'}^2({X''}^4{Y''}{Z''}+{X''}{Y''}^4{Z''}+{X''}{Y''}{Z''}^4).
\end{split}
\end{gather}
So we have
\begin{gather}\label{eqn:EGWgolsmp}
	\EG_W=-2({f_3}^6+{f_1}^6+{f_{-1}}^6)
	+20({f_1}^3{f_{-1}}^3+{f_3}^3{f_1}^3+{f_{-1}}^3{f_{3}}^3).
\end{gather}
We may check directly using (\ref{eqn:fs}) that this expression for $\EG_W$ is a weak Jacobi form of weight $0$ and index $1$. Substituting $u=0$ in (\ref{eqn:EGWgolsmp}) we obtain $\EG_W(0,\tau)=-2((-1)+(-1))+20(1)=24$. So $\EG_W$ is indeed the K3 elliptic genus. This completes the proof.
\end{proof}

It is interesting to compare the quasi potential bulk \scft{} structure (\ref{eqn:D12+KK}) to the \scft{} underlying the Gepner model of type $(1)^6$, which realizes a K3 sigma model and also has $V'\cong V''\cong V_K$. To describe the bulk define a ternary code $\mathcal{U}$ of length $6$ by setting
\begin{gather}
\mathcal{U}:=\left\{C=(c_1,\ldots,c_6)\in \F_3^6\mid \sum_i c_i=0\right\}.
\end{gather} 
Then, according to \cite{GHV} for example, the $\NSNS$ sector is $\cH_\NSNS=\bigoplus_{a\in\F_3}\cH_\NSNS^a$ where
\begin{gather}\label{eqn:GepHNSNS}
\cH_\NSNS^a:=\bigoplus_{C\in \mathcal{U}} V_{K+C+(a^6)}\otimes V_{K+C-(a^6)}
\end{gather}
as a module for $V_K\otimes V_K$.
Since there are $V_K\otimes V_K$-modules in (\ref{eqn:GepHNSNS}) whose corresponding codewords in $\F_3^{12}$ have fewer than $6$ non-zero entries, the $V_K\otimes V_K$-module structure on $W$ in Theorem \ref{thm:WKK} does not identify it with the \scft{} underlying the Gepner model $(1)^6$. 

However, $W$ and the $(1)^6$ model have closely related symmetry. To see this let $G^{\cN=2}$ be the group of symmetries of the bulk \scft{} of the $(1)^6$ model that fix the states of the left and right moving $\cN=2$ superconformal algebras at $c=6$. According to \cite{GHV} this group is a split extension of $S_6$ by 
$(\ZZ/3\ZZ)^6$. The subgroup of the automorphism group of $\cG_{12}^+$ that stabilizes the splitting $C\mapsto (C',C'')$ (cf. (\ref{eqn:D12+KK})) is another copy of $S_6$. From this it follows that $G^{\cN=2}$ also acts on $W$, preserving the states of the two commuting $\cN=2$ superconformal algebras at $c=6$ in $V_K\otimes V_K$. 

Now let $G^{\cN=4}$ be the subgroup of $G^{\cN=2}$ that preserves the left and right moving copies of the small $\cN=4$ superconformal algebra. According to the results of \cite{DM16} the actions of $G^{\cN=4}$ on $W$ and the bulk of the $(1)^6$ model give the same equivariant elliptic genera. We hope that the notion of quasi potential bulk \scft{} will help to identify an abstract mechanism that explains this coincidence.

There are a number of $c=6$ combinations of minimal $\cN=2$ models $V$ such that $V\otimes V$ embeds in $V_{D_{12}^+}$. Does every such embedding give rise to a quasi potential $\cN=(2,2)$ \scft{} with a close relationship to the corresponding Gepner model? It would be valuable to have a complete classification of these embeddings.

\section*{Acknowledgements}

The authors are grateful to Anne Taormina and Katrin Wendland for correspondence regarding their recent manuscript \cite{TW17}.
T.C. appreciates correspondence with Geoff Mason, and gratefully acknowledges support from the Natural Sciences and Engineering Research Council of Canada (RES0020460). J.D. thanks Anne Taormina and Katrin Wendland for discussions on related topics, and gratefully acknowledges support from the U.S. National Science Foundation (DMS 1203162, DMS 1601306), and the Simons Foundation (\#316779).


\addcontentsline{toc}{section}{References}


\begin{thebibliography}{ABKS}

\bibitem{A}  
A. Ali, 
\newblock \textit{Classification of two-dimensional N=4 superconformal symmetries},  
\newblock hep-th/9906096.

\bibitem{B-ZHS}
D. Ben-Zvi, R. Heluani and M. Szczesny,
\newblock {\em Supersymmetry of the chiral de {R}ham complex}
\newblock Compos. Math. {\bf 144} (2008), no. 2, 503--521.


\bibitem{borcherds_monstrous}
R.~E. Borcherds,
\newblock {\em {Monstrous moonshine and monstrous Lie superalgebras}}, 
\newblock Invent. Math. \textbf{109, No.2} (1992), 405--444.

\bibitem{B}
L~ A. Borisov,
\newblock {\em Vertex algebras and mirror symmetry},
\newblock Comm. Math. Phys. {\bf 215} (2001), no. 3, 517--557.


\bibitem{BL} 
L~ A. Borisov and A. Libgober, 
\newblock \textit{Elliptic genera of toric varieties and applications to mirror symmetry}, 
\newblock Invent. Math. {\bf 140} (2000), no. 2, 453--485.


\bibitem{Cheng}
M.~C.~N. Cheng, 
\newblock \emph{{$K3$} surfaces, {$N=4$} dyons and the {M}athieu group {$M_{24}$}},
\newblock Commun. Number Theory Phys. \textbf{4} (2010), no.~4, 623--657. 

\bibitem{CDDHK}
M.~{Cheng}, X.~{Dong}, J.~{Duncan}, S.~{Harrison}, S.~{Kachru}, and T.~{Wrase}, 
\newblock \emph{{Mock Modular Mathieu Moonshine Modules}},
\newblock Res. Math. Sci. (2015) {\bf 2} 13



\bibitem{CDHK}
M.~C.~N. Cheng, J.~F.~R. Duncan, S.~M. Harrison and S. Kachru,
\newblock {\em {Equivariant K3 Invariants}},
\newblock (To appear in Commun. Number Theory Phys.) arXiv:1508.02047.

\bibitem{UM}
M.~C.~N. Cheng, J.~F.~R. Duncan, and J.~A. Harvey.
\newblock {Umbral Moonshine},
\newblock {\em Commun. Number Theory Phys.}, 8(2):101--242, 2014.

\bibitem{MUM}
M.~C.~N. Cheng, J.~F.~R. Duncan, and J.~A. Harvey.
\newblock {Umbral Moonshine and the Niemeier Lattices},
\newblock {\em Res. Math. Sci.}, 1(3):1--81, 2014.



\bibitem{CHKW}
M.~C.~N. Cheng, S.~M. Harrison, S. Kachru and D. Whalen,
\newblock \emph{{Exceptional Algebra and Sporadic Groups at c=12}},
\newblock arXiv:1503.07219. 

\bibitem{CHVZ}
M.~C.~N. Cheng, S.~M. Harrison, R. Volpato and M. Zimet,
\newblock {\em {K3 String Theory, Lattices and Moonshine}},
\newblock arXiv:1612.04404.

\bibitem{MR0237634}
J.~H. Conway, \emph{A perfect group of order {$8,315,553,613,086,720,000$} and
  the sporadic simple groups}, Proc. Nat. Acad. Sci. U.S.A. \textbf{61} (1968),
  398--400. 

\bibitem{MR0248216}
J.~H. Conway, \emph{A group of order {$8,315,553,613,086,720,000$}}, Bull. London
  Math. Soc. \textbf{1} (1969), 79--88.

\bibitem{MR554399}
J.~H. Conway and S.~P. Norton, 
\newblock {\em Monstrous moonshine}, 
\newblock Bull. London Math. Soc. \textbf{11} (1979), no.~3, 308--339. 

\bibitem{CS}
J.~H. Conway and N.~J.~A. Sloane, 
\newblock {\em {Sphere packings, lattices and groups}},
\newblock Springer-Verlag, 1999.



\bibitem{CH} 
T. Creutzig and G. H\"ohn, 
\newblock \textit{{Mathieu Moonshine and the Geometry of K3 Surfaces}}, 
\newblock Commun. Num. Theor. Phys. 8, 2 (2014) 295--328.

\bibitem{CKL} 
T. Creutzig, S. Kanade and A. Linshaw, 
\newblock \textit{Simple current extensions beyond semi-simplicity}, 
\newblock arXiv:1511.08754.


\bibitem{CKM} 
T. Creutzig, S. Kanade and R. McRae, 
\newblock \textit{Tensor categories for vertex operator algebra extensions}, 
\newblock in preparation.

\bibitem{CR} 
T. Creutzig and D. Ridout, 
\newblock \textit{{Modular Data and Verlinde Formulae for Fractional Level WZW Models II}}, 
\newblock Nucl. Phys. B 875 (2013) 423.

\bibitem{DY}
P. Di Francesco and S. Yankielowicz,
\newblock {\em Ramond sector characters and {$N=2$} {L}andau-{G}inzburg models}
\newblock Nuclear Phys. B {\bf 409} (1993), no. 1, 186--210.

\bibitem{DLM1998}
C. Dong, H. Li and G. Mason,
\newblock {\em Twisted representations of vertex operator algebras}
\newblock Math. Ann. {\bf 310} (1998), no. 3, 571--600.

\bibitem{DLM2000}
C. Dong, H. Li and G. Mason,
\newblock {\em Modular invariance of trace functions in orbifold theory and generalized {M}oonshine},
\newblock Communications in Mathematical Physics {\bf 214} 1--56, 2000.



\bibitem{MR1650625}
C. Dong, H. Li, G. Mason, and P.~S. Montague,
\newblock {\em The radical of a vertex operator algebra},
\newblock In {\em The {M}onster and {L}ie algebras ({C}olumbus, {OH}, 1996)},
  volume~7 of {\em Ohio State Univ. Math. Res. Inst. Publ.}, pages 17--25. de
  Gruyter, Berlin, 1998.



\bibitem{DM94}
C. Dong and G. Mason.
\newblock {\em Nonabelian orbifolds and the boson-fermion correspondence},
\newblock {Comm. Math. Phys.}, 163(3):523--559, 1994.

\bibitem{MR2097833}
C. Dong and G. Mason.
\newblock {\em Rational vertex operator algebras and the effective central charge},
\newblock {Int. Math. Res. Not.}, (56):2989--3008, 2004.


\bibitem{MR2219226}
C. Dong and G. Mason.
\newblock {\em Integrability of {$C_2$}-cofinite vertex operator algebras},
\newblock {Int. Math. Res. Not.}, pages Art. ID 80468, 15, 2006.


\bibitem{DZ}
C. Dong and Z. Zhao.
\newblock {\em Modularity in Orbifold Theory for Vertex Operator Superalgebras},
\newblock {Commun. Math. Phys.} 260, 227--256 (2005).

\bibitem{DZ2}
C. Dong and Z. Zhao.
\newblock {\em Modularity of trace functions in orbifold theory for {$\mathbb{Z}$}-graded vertex operator superalgebras},
\newblock {Moonshine: the first quarter century and beyond}, London Math. Soc. Lecture Note Ser. {\bf 372}, Cambridge Univ. Press, Cambridge, 2010.

\bibitem{Dun_VACo}
J.~F. Duncan.
\newblock {\em Super-{M}oonshine for {C}onway's largest sporadic group},
\newblock {Duke Math. J.}, 139(2):255--315, 2007.


\bibitem{mnstmlts}
J.~F.~R. {Duncan}, M.~J. {Griffin}, and K. {Ono},
\newblock {\em Moonshine},
\newblock {Res. Math. Sci.}, 2(11), 2015.

\bibitem{umrec}
J.~F.~R. {Duncan}, M.~J. {Griffin}, and K. {Ono},
\newblock {\em Proof of the Umbral Moonshine Conjecture},
\newblock {Res. Math. Sci.}, 2(26), March 2015.


\bibitem{DM15} 
J. Duncan and S. Mack-Crane 
\newblock \textit{The Moonshine Module for Conway's Group}, 
\newblock Forum Math. Sigma 3 (2015), arxiv:1409.4124.

\bibitem{DM16}  
J. Duncan and  S. Mack-Crane 
\newblock \textit{Derived Equivalences of K3 Surfaces and Twined Elliptic Genera}, 
\newblock Res. Math. Sci. (2016) 3:1. arXiv:1506.06198.


\bibitem{DMO} 
J. F. R. Duncan, M. H. Mertens and K. Ono, 
\newblock \emph{O'Nan Moonshine and Arithmetic},
\newblock arXiv:1702.03516.

\bibitem{Eguchi2010a}
T. Eguchi and K. Hikami, 
\newblock \emph{{Note on Twisted Elliptic Genus of K3 Surface}},
\newblock Phys. Lett. \textbf{B694} (2011), 446--455.

\bibitem{EOT}
T. Eguchi, H. Ooguri and Y. Tachikawa.
\newblock {\em Notes on the K3 Surface and the Mathieu group $M_{24}$},
\newblock {Exper. Math.}, 20:91--96, 2011.

\bibitem{EOTY}
T. Eguchi, H. Ooguri, A. Taormina and S.-K. Yang,
\newblock {\em {Superconformal Algebras and String Compactification on Manifolds with SU(N) Holonomy}},
\newblock Nucl. Phys. {\bf B315} (1989) 193--221.



\bibitem{ET87} 
T. Eguchi and A. Taormina, 
\newblock \textit{Unitary representations of the $N = 4$ superconformal algebra}, 
\newblock Phys. Lett. {\bf B196}, 75--81 (1987).

\bibitem{ET88} 
T. Eguchi and A. Taormina, 
\newblock \textit{Character formulas for the $N = 4$ superconformal algebra}, 
\newblock Phys. Lett. {\bf B200}, 315--322 (1988).

\bibitem{ET} 
T. Eguchi and A. Taormina, 
\newblock \textit{On the unitary representations of $N = 2$ and $N = 4$ superconformal algebras}, 
\newblock Phys. Lett. {\bf 210}, 125--132 (1988).


\bibitem{FB-Z}
E. Frenkel and D. Ben-Zvi, 
\newblock {\em Vertex algebras and algebraic curves}, vol.~88 of {\em Mathematical Surveys and Monographs},
\newblock American Mathematical Society, Providence, RI, second~ed., 2004.


\bibitem{Frenkel81}
I.~B. Frenkel.
\newblock {\em Two constructions of affine {L}ie algebra representations and boson-fermion correspondence in quantum field theory},
\newblock {J. Funct. Anal.}, 44(3):259--327, 1981.




\bibitem{FLMPNAS}
I. Frenkel, J.  Lepowsky and A. Meurman, 
\newblock \emph{A natural representation of the {F}ischer-{G}riess {M}onster with the modular function {$J$} as character}, 
\newblock Proc. Nat. Acad. Sci. U.S.A. \textbf{81} (1984), no.~10, Phys. Sci., 3256--3260. 

\bibitem{FLMBerk}
I. Frenkel, J.  Lepowsky and A. Meurman, 
\newblock {\em A moonshine module for the {M}onster}, 
\newblock Vertex operators in mathematics and physics (Berkeley, Calif., 1983), Math. Sci. Res. Inst. Publ., vol.~3, Springer, New York, 1985, pp.~231--273. 

\bibitem{FLM}
I. Frenkel, J.  Lepowsky and A. Meurman, 
\newblock {\em Vertex operator algebras and the {M}onster}, 
\newblock Pure and Applied Mathematics, vol. 134, Academic Press Inc., Boston, MA, 1988. 


\bibitem{FRS} 
J. Fuchs, I. Runkel and C. Schweigert,  
\newblock \textit{TFT construction of RCFT correlators I: Partition functions}, 
\newblock Nucl. Phys. B 646, 353--497 (2002). 


\bibitem{Gab}
M.~R. Gaberdiel,
\newblock {\em An Introduction to Conformal Field Theory},
\newblock Rept. Prog. Phys. (2000) {\bf 63}, 607--667. 



\bibitem{Gaberdiel2010}
M. Gaberdiel, S. Hohenegger, and R. Volpato, 
\newblock \emph{{Mathieu twining characters for K3}},
\newblock JHEP \textbf{1009} (2010), 058, 19 pages.


\bibitem{Gaberdiel2010a}
M. Gaberdiel, S. Hohenegger, and R. Volpato, 
\newblock \emph{{Mathieu Moonshine in the elliptic genus of K3}},
\newblock JHEP \textbf{1010} (2010), 062.

\bibitem{GHV}
M.~R. Gaberdiel, S. Hohenegger, and R. Volpato,
\newblock {\em Symmetries of {K}3 sigma models},
\newblock {Commun. Number Theory Phys.}, 6(1):1--50, 2012.

\bibitem{GKV}
M.~R. Gaberdiel, C.~A. Keller, and P. Hynek,
\newblock {\em {Mathieu Moonshine and Symmetry Surfing}},
\newblock arXiv:1609.09302.


\bibitem{GTVW}
M.~R. Gaberdiel, A. Taormina, R. Volpato and K. Wendland, 
\newblock \textit{A K3 sigma model with $\mathbb{Z}^8_2$ : $\mathbb{M}_{20}$ symmetry,} 
\newblock JHEP {\bf 1402} (2014) 022.


\bibitem{Ga} 
T. Gannon, 
\newblock \emph{Much ado about Mathieu}, 
\newblock Adv. Math. {\bf 301} (2016), 322--358, arXiv:1211.5531.

\bibitem{GS}
P. Goddard and A. Schwimmer.
\newblock {\em Factoring Out Free Fermions and Superconformal Algebras},
\newblock {Phys. Lett. B} 214(2):209--214, 1988.



\bibitem{GM16} 
M. J. Griffin and M. H. Mertens, 
\newblock \emph{A proof of the Thompson Moonshine Conjecture}, 
\newblock Res. Math. Sci. {\bf 3} (2016), Art. 36, 32 pp.

\bibitem{HA}
J. Han and C. Ai,
\newblock {\em Three equivalent rationalities of vertex operator superalgebras}
\newblock J. Math. Phys. {\bf 56}, 111701 (2015).


\bibitem{HR} 
J. A. Harvey and B. C. Rayhaun,
\newblock \emph{Traces of Singular Moduli and Moonshine for the Thompson Group},
\newblock Commun. Number Theory Phys. {\bf 10} (2016), no. 1, 23-62.

\bibitem{Hel}
R. Heluani,
\newblock {\em Supersymmetry of the chiral de {R}ham complex. {II}. {C}ommuting sectors}
\newblock Int. Math. Res. Not. IMRN 6 (2009), 953--987.

\bibitem{Hi} 
F. Hirzebruch, 
\newblock \textit{Topological methods in algebraic geometry}. 
\newblock Third enlarged edition. New appendix and translation from the second German edition by R. L. E. Schwarzenberger, with an additional section by A. Borel. Die Grundlehren der Mathematischen Wissenschaften, Band 131 Springer- Verlag New York, Inc., New York 1966 x+232 pp.


\bibitem{Ho} 
G. H\"ohn, 
\newblock \textit{Komplexe elliptische Geschlechter und $S^1$-\"aquivariante Kobordismustheorie}, 
\newblock Diploma thesis, Bonn 1991, arXiv:math/0405232.


\bibitem{Hoehn2007}
G. H\"ohn.
\newblock {\em Selbstduale vertexoperatorsuperalgebren und das babymonster (self-dual vertex operator super algebras and the baby monster)},
\newblock {Bonner Mathematische Schriften, Vol.}, 286,:1--85, Bonn 1996, June 2007.


\bibitem{MR1160677}
Y.-Z. Huang,
\newblock {\em Vertex operator algebras and conformal field theory},
\newblock Internat. J. Modern Phys. A {\bf 7} (1992),  no. 10, 2109--2151.



\bibitem{Kac}
V. Kac.
\newblock {\em Vertex algebras for beginners}, volume~10 of {\em University Lecture Series},
\newblock American Mathematical Society, Providence, RI, second edition, 1998.


\bibitem{KO} 
A. Kapustin and D. Orlov, 
\newblock \textit{Vertex algebras, mirror symmetry, and D-branes: the case of complex tori}, 
\newblock Comm. Math. Phys. 233 (2003), no. 1, 79--136.


\bibitem{KKP}
S.~H. Katz, A. Klemm and R. Pandharipande,
\newblock {\em {On the motivic stable pairs invariants of {$K3$} surfaces}},
\newblock K3 surfaces and their moduli, Birkh\"auser/Springer, Progr. Math. {\bf 315} (2016), 111--146.

\bibitem{KKV}
S.~H. Katz, A. Klemm and C. Vafa,
\newblock {\em {M} theory, topological strings and spinning black holes},
\newblock  Adv.\ Theor.\ Math.\ Phys.\  {\bf 3}, 1445 (1999)
  [hep-th/9910181].



\bibitem{KMi} 
M. Krauel and M. Miyamoto, 
\newblock \textit{A modular invariance property of multivariable trace functions for regular vertex operator algebras}, 
\newblock arXiv:1501.00124.

\bibitem{Kr} 
I.~M. Krichever, 
\newblock \textit{Generalized elliptic genera and Baker-Akhiezer functions}, 
\newblock (Russian) Mat. Zametki 47 (1990), 34--45; translation in Math. Notes 47 (1990), 132--142.


\bibitem{LL}
J. Lepowsky and H. Li.
\newblock {\em Introduction to vertex operator algebras and their
  representations}, volume 227 of {\em Progress in Mathematics},
\newblock Birkh\"auser Boston Inc., Boston, MA, 2004.



\bibitem{NW}
W. Nahm and K. Wendland,
\newblock {\em {A Hiker's Guide to $K3$ - Aspects of $N=(4,4)$ Superconformal Field Theory with central charge $c=6$}},
\newblock Comm. Math. Phys.  216 (2001), 85--138.

\bibitem{PT}
R. Pandharipande and R.~P. Thomas,
\newblock {\em The {K}atz-{K}lemm-{V}afa conjecture for {$K3$} surfaces},
\newblock Forum Math. Pi {\bf 4} (2016).

\bibitem{Rankin}
R.~A. Rankin.
\newblock {\em Modular forms and functions},
\newblock Cambridge University Press, Cambridge, 1977.



\bibitem{RY}
F. Ravanini and S.-K. Yang, {\em Modular invariance in {$N=2$} superconformal field theories},
\newblock Phys. Lett. B. 195(2) 202--208, 1987.


\bibitem{So2} 
B. Song, 
\newblock \textit{{The global sections of the chiral de Rham complex on a Kummer surface}}, 
\newblock Int. Math. Res. Not. IMRN 14 (2016), 4271--4296, arXiv:1312.7386.

\bibitem{So1} 
B. Song, 
\newblock \textit{Vector bundles induced from jet schemes}, 
\newblock arXiv:1609.03688.



\bibitem{Tam}
H. Tamanoi,
\newblock {\em Symmetric invariant pairings in vertex operator super algebras and Grammians}.
\newblock Journal of Pure and Applied Algebra 140 (1999) 149--189.

\bibitem{Ta} 
H. Tamanoi, 
\newblock \textit{Elliptic genera and vertex operator super-algebras}, 
\newblock Lecture Notes in Mathematics, 1704, Springer-Verlag, 1999.


\bibitem{TW12}
A. Taormina and K. Wendland,
\newblock {\em Symmetry-surfing the moduli space of {K}ummer {K}3s}
\newblock String-{M}ath 2012, Proc. Sympos. Pure Math. {\bf 90}, Amer. Math. Soc., Providence, RI, 2015.

\bibitem{TW13}
A. Taormina and K. Wendland,
\newblock {{\em The overarching finite symmetry group of Kummer surfaces in the Mathieu group $M_{24}$}},
\newblock {JHEP} {\bf 08} (2013) 125.

\bibitem{TW15}
A. Taormina and K. Wendland,
\newblock {\em A twist in the {$M_{24}$} {M}oonshine story}
\newblock Confluentes Math. {\bf 7} (2015), no. 1, 83--113.


\bibitem{TW17}
A. Taormina and K. Wendland,
\newblock {\em The Conway Moonshine Module is a Reflected K3 Theory},
\newblock arXiv preprint.



\bibitem{Tho_FinGpsModFns}
J.~G. Thompson, 
\newblock {\em Finite groups and modular functions}, 
\newblock Bull. London Math. Soc. \textbf{11} (1979), no.~3, 347--351. 

\bibitem{Tho_NmrlgyMonsEllModFn}
J.~G. Thompson, 
\newblock {\em Some numerology between the {F}ischer-{G}riess {M}onster and the elliptic modular function}, 
\newblock Bull. London Math. Soc. \textbf{11} (1979), no.~3, 352--353. 

\bibitem{Van}
J. Van Ekeren,
\newblock {\em Modular invariance for twisted modules over a vertex operator superalgebra},
\newblock Comm. Math. Phys. {\bf 322} (2013), no. 2, 333--371.

\bibitem{V} 
R. Volpato, 
\newblock \textit{On symmetries of $\mathcal{N}=(4,4)$ sigma models on $T^4$}, 
\newblock JHEP {\bf 1408} (2014) 094.

\bibitem{W-snap}
K. Wendland,
\newblock {\em Snapshots of conformal field theory}, 
\newblock Mathematical aspects of quantum field theories, Math. Phys. Stud., Springer, Cham, 2015.

\bibitem{W-LG}
E. Witten,
\newblock {\em On the {L}andau-{G}inzburg description of {$N=2$} minimal models}
\newblock Internat. J. Modern Phys. A, {\bf 9} (1994), no. 27, 4783--4800.

\bibitem{Z} 
Y. Zhu, 
\newblock \textit{Modular invariance of characters of vertex operator algebras}, 
\newblock J. Amer. Math. Soc. 9 (1996) 237--302.




\end{thebibliography}
\end{document}